\documentclass[a4paper,11pt]{article}
\usepackage[authoryear]{natbib}
\bibpunct{(}{)}{;}{a}{,}{,}
\usepackage{array}
\newcolumntype{M}{>{\centering\arraybackslash}m{\dimexpr.25\linewidth-2\tabcolsep}}

\usepackage{graphicx}
\usepackage{amsmath}
\usepackage{amssymb}
\usepackage{amsthm}
\usepackage{xcolor}
\usepackage{hyperref}

\hypersetup{
    colorlinks,
    linkcolor={red!50!black},
    citecolor={blue!50!black},
    urlcolor={blue!80!black}
}

\newtheorem{theorem}{Theorem}[section]

\newtheorem{definition}{Definition}[section]
\newtheorem{lemma}{Lemma}[section]
\newtheorem{corollary}{Corollary}[section]
\newtheorem{claim}{Claim}[section]

\renewcommand{\setminus}{-}

\pagestyle{plain}

\oddsidemargin 0mm \evensidemargin 0mm \marginparwidth 0mm
\topmargin 0cm
\textheight 221mm \textwidth 160mm \linespread{1.3}
\renewcommand{\thefootnote}{\fnsymbol{footnote}}

\vspace{-1cm}
\title{The Effect of a Finite Time Horizon in the Durable Good Monopoly Problem with Atomic Consumers \footnote{We give thanks for relevant feedback provided by Juan Ortner, Anton Ovchinnikov, Sven Feldmann, Jun Xiao, and to seminar participants at London School of Economics, Sauder School of Business, Fuqua School of Business, University of Auckland Business School, University of New South Wales, Universit\'{e} Libre de Bruxelles, Carleton University, NetEcon conference 2013 and the Web and Internet Economics conference 2014.}\\[-1cm]}
\begin{document}
\bibliographystyle{plainnat}

\maketitle

\renewcommand*{\thefootnote}{\fnsymbol{footnote}}

\vspace{-1cm}
\begin{center}
{\sc Gerardo Berbeglia}\footnote{Melbourne Business School, University of Melbourne. Email: {\tt g.berbeglia@mbs.edu} },
{\sc Peter Sloan}\footnote{McGill University. Email: {\tt ptrsln@gmail.com} }
and
{\sc Adrian Vetta}\footnote{McGill University. Email: {\tt }adrian.vetta@mcgill.ca}
\end{center}

\renewcommand*{\thefootnote}{\arabic{footnote}}
\addtocounter{footnote}{-3}

\
\noindent\textbf{Abstract.}
A durable good is a long-lasting good that can be consumed repeatedly over time, and a
duropolist is a monopolist in the market of a durable good. In 1972, Ronald Coase conjectured that a duropolist who lacks commitment power cannot sell the good above the competitive price
if the time between periods approaches zero. Coase's counterintuitive conjecture was later proven by \citet{gul1986foundations} under an infinite time horizon model with non-atomic consumers.
Remarkably, the situation changes dramatically for atomic consumers and an infinite time horizon. \citet{bagnoli1989durable} showed the existence of a subgame-perfect Nash equilibrium
where the duropolist extracts all the consumer surplus. 
Observe that, in these cases, duropoly profits are either arbitrarily smaller or arbitrarily larger than
the corresponding static monopoly profits -- the profit a monopolist for an equivalent consumable
good could generate. In this paper we show that the result of \citet{bagnoli1989durable}
is in fact driven by the infinite time horizon. Indeed, we prove that for finite time horizons and atomic agents, in any equilibrium satisfying the standard skimming property, duropoly profits are at most an additive factor more than static monopoly profits.
Specifically, this additive factor is equal to the static monopoly price, which implies that duropoly profits never exceed twice the static monopoly profits.

Finally we show that, for atomic consumers, equilibria may exist that do not satisfy the
skimming property. We prove that amongst all equilibria that maximize duropoly profits,
at least one of them satisfies the skimming property in the two-period setting. 
\noindent\emph{Keywords}: durable goods monopoly, discrete buyers, profit bounds, inter-temporal
price discrimination, skimming property.

\renewcommand*{\thefootnote}{\fnsymbol{footnote}}

\renewcommand*{\thefootnote}{\arabic{footnote}}

\newpage

\section{Introduction}

A {\em durable good} is a long-lasting good that can be consumed repeatedly over time.
Theoretically less is known about durable goods than their more well-studied
counterparts, consumable and perishable goods. However, on the practical side, durable goods abound
and are very familiar to us. For example, many of the most important consumer items
are (at least to some extent) durable, such as land, housing, cars, etc.
A {\em duropolist} is a monopolist in the market of a durable good --
topically, duropolists include several well-known purveyors of digital goods. Indeed, Amazon has recently been awarded a US patent (8,364,595) for establishing market places for second-hand digital-content items, and Apple has recently applied for a similar patent (20130060616).

Pricing a durable good is not as simple as it may appear.
Specifically, whilst a durable good is more valuable to the consumer than an equivalent perishable good, it is questionable
whether a duropolist has additional monopoly power beyond that
of an equivalent monopolist for a perishable good.
Indeed, quite the opposite may be true.
In 1972, Nobel recipient Ronald Coase made the startling conjecture that, in fact,
a duropolist has no monopoly power at all!
Specifically, a duropolist who lacks commitment power cannot sell the good above the competitive
price if the time between periods approaches zero \citep{coase1972durability}. The intuition
behind the Coase conjecture is that if the monopolist charges a high price then consumers anticipate a future
price reduction (as they expect the duropolist to later target lower value consumers) and therefore they
prefer to wait. The duropolist, anticipating this consumer behaviour, will then drop prices down to the competitive level.
In essence, the argument is that a duropolist is not a monopolist at all:
the firm does face stiff competition -- not from other firms but, rather, from
future incarnations of itself! This is known as the {\em commitment problem}: the duropolist cannot credibly commit to charging a high price.


The Coase conjecture was first proven by \citet{gul1986foundations}
under an infinite time horizon model with non-atomic consumers.
They showed that if buyers strategies are stationary (that is, the
distribution of consumers after the duropolist announces a price $p$, lower than all previous prices,
is independent of the prior price history) then, as period length goes to zero, the duropolist's first price offer
converges to the lowest consumer valuation or the marginal cost, whichever is higher.
\citet{ausubel1989reputation} later showed that if the stationary condition is relaxed, the
duropolists profits at subgame perfect equilibria can range from Coasian profits to the static monopoly
profit.\footnote{However, profits larger than the Coasian value occur only in the case where there is no gap between the lowest consumer value and the marginal cost of production (e.g. for $c=0$, the lowest consumer value is 0).} \citet{stokey1979intertemporal}
studied pricing mechanisms
for duropolists that {\em possess} commitment power in a continuous time model. She showed that duropolists
can then attain the static monopoly profit by committing to a fixed price; all sales are then made at the beginning of the game. \citet{mcafee2008capacity} examined the Coase conjecture in a model where there is small cost for production capacity which can be augmented at each period. In this setting, the authors showed that in the the monopoly profits are equal to those that can be obtained if she could commit ex ante to a fixed capacity. Recently, \citet{ortner2016durable} studied a model where the duropolist incurs in a stochastic cost.


Underlying the results of \citet{gul1986foundations} is the assumption of an
infinite time horizon. There are nevertheless situations where trade must take place before a hard deadline.
For example, consider a TV network selling advertising space a week in advance of a show. Theoretical and empirical
evidence of the strong effects of deadlines have been observed in many bargaining contexts such as in contract
negotiations and civil case settlements -- see, for example, \citet{cramton2002strikes}, \citet{williams1983legal}
and \citet{fuchs2011bargaining}.

If there is a finite horizon and non-atomic consumers, a feasible action
for the duropolist is to decline to sell goods until the final period and then announce the static monopoly price,
obtaining the static monopoly profits discounted to the beginning of the game. Although this strategy is
not an equilibrium, \citet{guth1998durable} showed that when consumer valuations follow a uniform distribution,
there exists a subgame perfect equilibrium, as period lengths approach zero, in which the duropolist profits converge to the static monopoly profits
discounted to the beginning of the game.


All these results assumed non-atomic consumers. Less is known about the duropolist problem with
atomic consumers. The major difficulty in analyzing equilibria in this setting is that a deviation
from a single consumer can modify the equilibrium price path. This characteristic, as we will later show, permits the
existence of subgame perfect Nash equilibria where the {\em skimming property} does {\bf not} hold, i.e. an equilibrium
where a buyer with a higher valuation might buy later (and at a lower price) than a buyer with lower
valuation.
\citet{bagnoli1989durable} studied the duropoly problem with atomic consumers.
When the time horizon is infinite, they proved another surprising result: the
existence of a subgame perfect Nash equilibrium in which the duropolist extracts all the consumer surplus. To
obtain this, they considered the following pair of strategies. The duropolist strategy, dubbed \emph{Pacman}, is
to announce at each time period, a price equal to the valuation of the consumer with the highest value who has yet to buy. The
strategy of each consumer, dubbed \emph{get-it-while-you-can}, is to buy the first time it induces a non-negative utility.
If there is an infinite horizon, and the discount factor is large enough, then these strategies are
sequential best responses to each other. This equilibrium refutes the Coase conjecture when there is a finite set of atomic
buyers and an infinite time horizon. Indeed, it suggests that a duropolist may have perfect price discriminatory power!
Moreover, it shows there exist subgame perfect Nash equilibria where duropoly profits exceeds the static monopoly profits
by an unbounded
factor.\footnote{For example, consider a game with $N$ consumers where buyer $i$ has a valuation of $1/i$. Then, as $N$ gets large, duropoly profits under the Pacman strategy approach $\log N$ whilst
static monopoly profits are clearly equal to $1$.} Later, \citet{von1995coase} showed that under certain conditions Pacman is the only equilibrium.

Another setting, in which the extremely large profits of the Pacman equilibrium of \citet{bagnoli1989durable}
cannot exist, was proposed by \citet{cason2001durable}. Instead of assuming a duropolist with perfect
information, the authors constructed a two-buyer and two-valuation model with infinite time periods in which the duropolist does not know exactly whether a consumer is of high type or of low type. They showed that in these games there exists a unique equilibrium that is Coasian.

Recently, \citet{montez2013inefficient} studied the duropoly problem under infinite horizon with atomic consumers that have two-types (high value consumers and low value consumers) and exactly two consumers are of the high-value type. He showed that there are sometimes inefficient equilibria where the time at which the market clears does not converge to zero as the length of the trading periods approaches zero.

\citet{bagnoli1989durable} also examined very small examples with a finite time horizon case.
They showed that in these games of two or three consumers, it is possible to
obtain subgame perfect equilibria where the duropolist extracts more revenue than the
price commitment strategy of \citet{stokey1979intertemporal}.
These examples again refute the Coase conjecture but they also suggest that,
even for finite time horizons, the duropolist may have {\bf more} monopoly power than the
equivalent static monopolist.
This is very interesting because, whilst duropolists are not believed to be powerless in practice,
the standard assumption is that duropolists are weaker than monopolists for consumable goods.
Indeed this argument has been accepted by the Federal Courts in the United States;
a history of duropolies and the law can be found in \citet{orbach2004durapolist}.

\subsection{Contributions}
Several questions arise from the work of \citet{bagnoli1989durable}.
Does this phenomenon (of duropoly profits exceeding static monopoly profits) arise for more natural
games where the number of consumers and the number of time periods is much larger than three?
If so, how can we characterize the monopolist profit maximizing strategy?
Finally, can we quantify {\em exactly} how much more profit a
duropolist can obtain at equilibria in comparison to a static monopolist?

Our main contributions is to answer those three questions. To achieve this, we first
characterize, in Section \ref{sec:opt-cond}, a class of subgame perfect equilibria that satisfy the standard skimming property:
high-value consumers buy before lower-valued consumers.\footnote{\citet{bagnoli1989durable}
state that such a characterization would be extremely interesting.} 
Our main result, proven in Section \ref{sec:results}, is then that, at equilibria,
duropoly profits are {\em approximately} equal to the static monopoly profits. Specifically we show that duropoly profits are at most the static monopoly profits plus the static monopoly price regardless of the number of consumers, their values, and the number of time periods.
It follows from this result that duropoly profits are at most twice the static monopoly profits. We also prove that this factor two bound is tight: we construct a (infinite) family of examples where duropoly profits approach double the static
monopoly profit as the number of consumers goes to infinity. This construction also gives tightness for the additive bound
as the difference between the duropoly profits and the static monopoly profits tends to the static monopoly price.

We believe that our main result sheds light into this classical problem in at least four ways.
To begin, this is the first theoretical result that concurs with the practical experience
that duropolists and static monopolists have comparable profitability.
(Recall that previous theoretical works have suggested that the duropolist either has no monopoly power
or has perfect price discriminatory power).

Second, the result that a duropolist can do up to an additive amount
better using a threat-based strategy rather than a price-commitment strategy
is actually best viewed from the opposite direction.
Specifically, a duropolist can obtain almost the optimum profit (losing at most an additive amount equal to
the static monopoly price) by mimicking a static monopolist via a price commitment strategy.
From a practical perspective this is important because a price-commitment strategy can generally
be implemented by the duropolist very easily, even with limited consumer information.
Furthermore, price-commitment strategies can be popular with consumers
as they are typically introduced within a money back guarantee or
envy-free pricing framework. In contrast, a threat based optimization
strategy is harder to implement and can antagonize consumers.

Thirdly, the standard view in the literature is that the surprising and well-known result
of \citet{bagnoli1989durable}, namely that the duropolist can extract all consumer
surplus, is due to the assumption of atomic consumers. Our results show that this is not
true in general -- their result is driven by the infinite time horizon. For finite time horizons, the power of a duropolist
is limited. This is true even when the Pacman strategy is an equilibrium; indeed, we show that the
\emph{Pacman} strategy can be an equilibrium in finite time horizon games only under very specific
conditions\footnote{The \emph{Pacman} strategy will not produce an equilibrium in games with finite horizons {\em unless} consumer valuations decrease exponentially.} -- see Section \ref{sec:ConSur}.

Finally, the main result highlights a distinction in how the time horizon affects bargaining power.
With non-atomic consumers, a finite time horizon increases the bargaining power of the duropolist. In \citet{guth1998durable},
a finite time-horizon increases duropolist profits from the Coasian result to the static monopoly profits.
With atomic consumers, the finiteness of the time horizon,
reduces duropolist profits from all consumer surplus to approximately the static monopoly profits (and at most twice that).

To conclude the paper, we examine (in Section \ref{sec:NoSkimSum}) subgame perfect equilibria in the absence of the skimming-property. We provide the first example of a subgame perfect equilibrium in which the skimming-property does not hold. Furthermore, we prove that for two-period games, amongst all equilibria that maximize duropoly profits at least one satisfies the skimming property.



\section{The Model}\label{sec:model}
We now present the durable good monopoly model of \citet{bagnoli1989durable} that we will analyze in the subsequent sections.
Consider a durable good market with one seller (a duropolist), $N$~consumers and a finite horizon of $T$ time periods.
The $N$ consumers have valuations $v_1 \ge v_2 \ge \dots \ge v_N$\footnote{We also use notation $v(y_j)$ instead of $v_{y_j}$ in certain cases to avoid nested subscripts} and the firm can produce units of the good at a
unitary cost of $c$ dollars. Here we assume, without loss of generality, that $c=0$.
Consequently, profit and revenue are interchangeable in this setting.

We can view this as a sequential game over $T$ periods.
At time $t$, $1 \le t \le T$, the firm will select a
price $\mu_t$ to charge for the good.\footnote{Note neither the model of \citet{bagnoli1989durable}, that we analyze in this paper, nor any of the other durable good monopoly models mentioned in the literature review section allow for discriminatory pricing mechanisms in which two or more consumers can be charged different prices in the same time period.}
The duropolist seeks a pricing strategy that maximises her revenue, namely $\sum_{t=1}^{T}(x_t \cdot \mu_t)$,
where $x_t$ denotes  the number of consumers who buy in period~$t$.

Each consumer $i$ desires at most one item and seeks to maximize her {\em utility},
which is $v_i - \mu_t$ if she buys the good in period $t$.\footnote{Discount factors
can easily be introduced into the model.} The consumers decide simultaneously if they will buy an item for $\mu_t$. 
The game then proceeds to period $t+1$. If a consumer doesn't buy an item before the end of period $T$ her utility is zero.

For such a sequential game, the solutions we examine are pure subgame perfect Nash equilibria that satisfy the standard \emph{skimming property} defined below.

\begin{definition}[Skimming property]
An equilibrium satisfies the \emph{skimming property} if whenever a buyer with value $v$ is willing to buy at price $\mu_t$, given the previous history of prices $\mu_1,\mu_2,\hdots,\mu_{t-1}$, then a buyer with value $w>v$ is also willing to buy at this price given the same history.
\end{definition}

For subgame perfect Nash equilibria (SPNE) that satisfy the skimming property, consumers' strategies can be characterized using a cutoff function. Given a history of prices $h_t=(\mu_1,\mu_2,\hdots,\mu_{t-1})$ and the current offered price $\mu_t$, consumers with valuations above cutoff $\kappa(h_t,\mu_t,t)$ buy and consumers with valuations below the cutoff do not buy  (see \citet{tirole} for a discussion).  When consumers are non-atomic it can be shown that all subgame perfect equilibria satisfy the skimming property \citep{fudenberg1985}. In the case of atomic consumers and an infinite time horizon, the monopolist can extract all consumer surplus using the \emph{pacman} strategy, in which case the skimming property is clearly satisfied. Intuitively, the skimming property says that higher value consumers pay a higher (or at least equal) price compared to consumers with a lower valuation. 

\citet{ausubel1989reputation} define two special types of SPNE that are \textit{Markovian} in the sense that they depend only on the most recent information available. A SPNE is a \textit{weak-Markov} equilibrium if consumers' accept/reject decisions depend only on the current price and period. A SPNE is a \textit{strong-Markov} equilibrium if, in addition to the weak Markov property, the duropolist conditions her strategy only on the payoff-relevant part of the history. In the infinite horizon case, this is the set of remaining consumers. In the finite horizon case, it may depend on the number of periods left as well. When consumers are non-atomic, such equilibria only exist when the Weak-Markov property is satisfied and $= \kappa(\mu_t,t):= \kappa(h_t,\mu_t,t) $ (since consumers' accept/reject decisions  do not depend on the price history) is strictly increasing in $\mu_t$ \citep{fudenberg1985}. In the atomic case, we can obtain strong-Markov equilibria even if $\kappa$ is constant over an interval.

When constructing an SPNE in the atomic finite-horizon model, we will therefore restrict the strategy space of the duropolist and consumers so that they satisfy the strong-Markov conditions: the prices the duropolist chooses are a function that depends only on the remaining consumers and the number of time periods left, that is, $\mu:\mathcal{P}([N])\times T \rightarrow \mathcal{R}^+$ and the consumers strategies are such that $i$ buys in period $t$ iff $v_i \geq \kappa(\mu_t,t)$ for some function $\kappa$. However, our main result from Section $\ref{sec:results}$ only requires the equilibrium satisfies the skimming property.



Observe that in the model of \citet{bagnoli1989durable} which we studied in this paper, the values of each consumer are known to all participants in the market. In Appendix 2 we extend the equilibrium results to a restricted incomplete information setting, where the duropolist knows the distribution of values, but not which value corresponds to which consumer. 

\subsection{An Example}\label{sec:example}
We now present an small example to illustrate the model and the concepts involved.
Consider a two-period game with the following consumers' valuations.

\begin{table}[h]
\centering
\caption{Example of a game with 4 consumers.}
\begin{tabular}{|c|c|}
\hline
\textbf{Consumer} & \textbf{Consumer value}  \\
\hline
1 & 100 \\
\hline
2 & 85 \\
\hline
3 & 80 \\
\hline
4 & 50 \\
\hline
\end{tabular}
\label{ExValue}
\end{table}

\begin{table}\label{table_threat_prices}
\centering
\caption{Threat prices}
\begin{tabular}{|c|c|c|}
\hline
\textbf{Consumer} & \textbf{Consumer value}  & \textbf{Threat price} \\
\hline
1 & 100 & 80\\
\hline
2 & 85 & 80\\
\hline
3 & 80 & 50\\
\hline
4 & 50 & 50\\
\hline
\end{tabular}
\label{ExThreat}
\end{table}


Denote by $\Pi^D$ and $\Pi^M$ the revenue obtainable by the duropolist  and the
corresponding static monopolist, respectively.
Then the static monopoly profit $\Pi^M$ is equal to $240$, obtained by selling to the top three consumers for a price of $80$. However, the duropolist can, in fact, extract a revenue of $260$. Furthermore, the corresponding equilibrium
satisfies the skimming property: no consumer will buy earlier than another consumer with a higher valuation.
To understand SPNEs in this game, let's begin with a subgame
comprising of only the final (second) time period. In such a subgame, it is a dominant strategy for all consumers who have not yet bought
to pay any price less than or equal to their value. Consequently, in the final period, it is a dominant strategy for the duropolist to charge the
static monopoly price \textit{as calculated with respect to the set of consumers who have not yet bought}. Note that these strategies satisfy the strong-Markov property as everyone remaining with value above the price will buy, everyone else will not buy, and the price depends only on the set of consumers remaining.

Now consider the first time period. If the skimming property is satisfied, then there will be a cut-off point $j_1$ at
which consumers $j\leq j_1$ buy and consumers $j>j_1$
wait until period 2. In order for this to be an equilibrium, the consumer $j_1$ must prefer buying in period 1 to period 2.
Therefore, the duropolist can charge no more than the static monopoly price as calculated if all consumers $j \geq j_1$
wait until period 2. We call this price the \emph{threat price} for consumer $j_1$. The threat prices are listed in Table \ref{table_threat_prices}.

%

The consumers' strategies then correspond to ``buy in period 1 if and only if $\mu_1$ is at most their threat price", whilst the
duropolist's strategy is to charge the threat price which maximizes the total revenue. The period 2 strategies are the
dominant strategies described above: remaining consumers pay up to their value, while the duropolist charges
the static monopoly price calculated for the set of consumers that are left.

Charging $\mu_1=80$ means that the top two consumers would buy in period 1, while the last two consumers would wait
until period 2 and buy at $\mu_2=50$ (the static monopoly price for the remaining two consumers is $50$). The total
profit would therefore be $260$. It is easy to see that charging $50< \mu_1 <80$ gives a smaller profit.
Moreover, if $\mu_1>80$ then no consumers will buy in the first period. This would lead to a profit of $240$
as the static monopoly price would then be charged in the final period.
Finally charging $\mu_1=50$ would result in all consumers buying in period 1, as the duropolist
is guaranteed to charge at least $50$ in period 2. The total profit would then be $200$. Thus, for a profit maximizing
duropolist we have $\Pi^D=260$.
So duropoly profits are greater than static monopoly profits. For additional comparisons,
Coasian profits are $\Pi^C=200$ since the competitive price is $50$, and Price Discriminatory
profits are $\Pi^{PD}=315$, that is, the consumer surplus.

Observe that these strategies satisfy the skimming property, as threat prices are monotonically increasing with consumer
value, and satisfy the weak-Markov property, as $\kappa(\mu_t,t)$ is the smallest consumer value such that his threat price is larger than $\mu_t$. Note that $\kappa(\mu_1,1)$ is constant for all $\mu_1 \in (50,80]$.  Since the duropolist's strategy depends only on the values of remaining players in each period (trivially in period 1) it also satisfies the strong-Markov property. Furthermore, it is easy to see that this is an equilibrium. Consumers 3 and 4 would be worse off buying in the first period,
while both consumers 1 and 2 would still pay 80 if either deviated by waiting until the last period. Similarly, the duropolist
would find no buyers if he charged more than 80 in period 1, turning the game into a one-period static game and
giving him revenue $\Pi^M$, while charging anything between 80 and 50 would yield the same sales schedule
but with lower profit. Charging 50 or lower would result in all consumers buying in period 1 for profit at most 200.

We remark that the equilibrium property for the consumers arises from a simple property of static monopoly prices.
Take the set of consumers $\{j\geq j_1\}$ and compute the static monopoly price for these consumers. This is $j_1$'s threat
price. Now if we take $j_1$ and replace him by a higher valued consumer then the static monopoly price can only rise.
Hence, as long as $\mu_1$ is $j_1$'s threat price, any consumer of higher value that refuses to buy in period 1 would
be charged at least $\mu_1$ in period 2, ensuring it is a best response to buy in period 1.

In Section \ref{sec:opt-cond}, we show how these arguments can be extended to give subgame perfect equilibria
conditions for games with more than two time periods and explain how, given these constraints, a duropolist
can efficiently maximize profits.

\section{Sub-Game Perfect Equilibria Conditions}\label{sec:opt-cond}

%
We now characterize the subgame perfect equilibria that satisfy the strong-Markov conditions and maximize duropolist profits. To do this we reason backwards from the final time period $T$. It is easy to
determine the behaviour of rational consumers  and a profit maximizing duropolist at time $T$. Given this information, we can determine the behaviour of rational consumers at time $T-1$, etc.



To formalize this, let $\mathcal{G}_i$ denote the market
consisting of consumers $\{i, i+1, \dots, N\}$, and let
$\Pi(i,t)$ denote the maximum profit obtainable in the market $\mathcal{G}_i$ if we begin in time period $t$.
Thus $\Pi^D=\Pi(1,1)$. Now set $\Pi(i, T+1)= 0$ for all consumers $i$.
Let $p(i, t)$ be the profit maximizing price at period $t$ in the market $\mathcal{G}_i$ beginning at time $t$. First, consider the last period, $T$.
Any consumer $i$ (who has not yet bought the good) will buy in period $T$ if and only if this final price is at most $v_i$.
Therefore, in the market $\mathcal{G}_i$, starting at time $T$, a profit maximizing duropolist will
simply set $p(i,T)$ to be the static monopoly price $p_i$ for the market $\mathcal{G}_i$: $p(i,T) = p_i \equiv v_{j^*(i,T)}$, where
$$ j^*(i, T) = \arg \max_{j\ge i} \,(j-i+1)\cdot v_{j}.$$
Thus, $j^*(i, T)$ denotes the consumer with the lowest valuation who buys in the market $\mathcal{G}_i$ beginning at period $T$. The profit is then
\begin{eqnarray*}
 \Pi(i,T) &=& (j^*(i,T)-i+1)\cdot v_{j^*(i,T)}
 \end{eqnarray*}

In general we will denote by $j^*(i, t)$, the consumer with the lowest valuation who buys (under our proposed strategy) at period $t$ in the market $\mathcal{G}_i$ beginning at period $t$.  Now, suppose we are at time period $T-1$ in the market $\mathcal{G}_i$. If the duropolist at period $T-1$ wishes to sell to consumers $\{i,i+1,\hdots,k\}$, then the announced price has to be at most $k$'s threat price, $p(k,T) \equiv v_{j^*(k, T)}$.
To see this, suppose that the price announced at $T-1$ is higher and the duropolist still expects to sell the item to consumers $\{i,i+1,\hdots,k\}$. Then, if consumer $k$ refuses to buy while all consumers above her buy, the duropolist would, in the final time period $T$ be in the market $\mathcal{G}_k$, and announce a price $p(k,T)$, meaning that consumer $k$ would have benefited from deviating. So, the optimal strategy for the duropolist would be to sell to $k-i+1$ consumers at period $T-1$ at price $v_{j^*(k,T)}$, choosing the value of $k$ such that the profits from periods $T-1$ and $T$ are maximized:
 \begin{align}
j^*(i,T-1) &= \arg \max_{k \geq i} \{ (k-i+1) \cdot p(k, T) + \Pi(k+1,T)\} \nonumber \\
\Pi(i,T-1) &=  (j^*(i,T-1) -i+1) \cdot p(j^*(i,T-1), T) + \Pi(j^*(i,T-1)+1,T) \nonumber
 \end{align}
The price announced at period $T-1$ can then be written as
$$p(i,T-1) = p(j^*(i,T-1),T)$$
Observe then, that in the final period, we will be in the subgame composed of consumers $\{j^*(i,T-1)+1,\hdots,N\}$.

Iterating this argument backwards in terms of the periods, we have that
 \begin{align}\label{eq:characterization_prices}
&\Pi(i,t)& = \ \ & (j^*(i,t)-i+1)\cdot p(j^*(i, t) , t+1) + \Pi(j^*(i, t)+1, t+1) \nonumber \\
&j^*(i, t) &= \ \ & \arg \max_{j\ge i} \,\left((j-i+1)\cdot p(j, t+1)+ \Pi(j+1, t+1)\right)  \\
&p(i, t)& = \ \ & p(j^*(i, t), t+1) \nonumber
 \end{align}

We can generalize the concept of the threat price from our two-period example using the above recursion.
Specifically, we say that the \emph{threat price} $\tau(i,t)$ for consumer $i$ at period $t<T$ under the recursive scheme given in (\ref{eq:characterization_prices}) is the price $i$ is offered in the market $\mathcal{G}_i$ starting at
period $t+1$, namely $\tau(i,t) := p(i,t+1)$. That is, the price offered if $i$ and all consumers of lower value do not buy in period $t$.

We can now define the strategy of the duropolist and the consumers in any subgame. Consider a subgame whose remaining consumers are the set $S$ and let there be $T-t+1$ periods remaining (i.e. we are starting in period $t$). Then by re-indexing the consumer names, the duropolist can treat the subgame as a full game $\mathcal{G'}$ with $T-t+1$ total periods and $S$ as the set of all consumers. She then calculates, for all $i$ and $t$, the prices $p_{\mathcal{G'}}(i,t)$ from the recursion relationship (\ref{eq:characterization_prices}) and chooses the sales schedule which maximizes her profits for $\mathcal{G}'$. She then charges  $\mu_t = p_{\mathcal{G'}}(1,1)$ in period $t$. The consumers buy if and only if the price is less than or equal to their threat price as calculated for $\mathcal{G'}$. By definition, this price only depends on the payoff-relevant part of the history, as it only looks at the consumers remaining in the subgame. Since the price is always equal to the threat price of one consumer $j^*$, we can define $\kappa(\mu_t,t)$ to be the value of this critical consumer, $v_{j^*}$. We can show that this function is monotonically increasing in $\mu$ if the threat prices are decreasing in consumer value (a higher price means a higher valued critical consumer). This is proved in Lemma \ref{lem:threatmono}. Therefore $\kappa$ is indeed a cutoff function for the given consumer strategies. We conclude that, if this strategy profile is an SPNE, it must satisfy the strong-Markov property.

The reader may have noted that our recursion relationship does not allow the duropolist to refuse to sell any items in a period where there are still consumers left who have not yet bought. It can be shown that there is a subgame perfect equilibrium in which a sale occurs in each period until either all consumers have bought the item or the final time period is over. Moreover, this equilibrium achieves at least as much profit for the duropolist as any which allows the duropolist to not sell in some periods. A proof of this is included in Appendix $1$ as Lemma \ref{NeverWait}.

The following series of lemmas establish basic monotonicity results for static monopoly prices, threat prices, and the prices $p(i,t)$ which form the duropolist's equilibrium strategy. The proofs of these lemmas are give in Appendix $1$.

 \begin{lemma}\label{subgame_price_sequence}
The static monopoly prices on the markets $\mathcal{G}_i$ are non-increasing in $i$: $p_i \ge p_{i+1}$ for $i=1,\dots, N-1$.
\end{lemma}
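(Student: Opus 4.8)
The plan is to translate the price inequality into a statement about the optimal cutoff indices and then argue by contradiction via an interchange between the two optimization problems. Recall that the static monopoly price for $\mathcal{G}_i$ is $p_i = v_{j^*(i,T)}$, where $j^*(i,T) = \arg\max_{j \ge i}(j-i+1)\,v_j$. Write $a = j^*(i,T)$ and $b = j^*(i+1,T)$, so the claim is exactly $v_a \ge v_b$. Since the valuations are sorted non-increasingly, it suffices to rule out $v_a < v_b$; and such a strict inequality would force $a > b$ by monotonicity of the $v_j$'s (if $a \le b$ then $v_a \ge v_b$). Intuitively, removing the highest-value consumer can only push the profit-maximizing cutoff weakly to the right, which is precisely what $a \le b$ would express.

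First I would assume, for contradiction, $v_a < v_b$, so that $a > b$. Because $a \ge i$ and $b \ge i+1$, this gives $a > b \ge i+1$, whence $a \ge i+2$; in particular $b$ is a feasible price index for $\mathcal{G}_i$ (as $b \ge i$) and $a$ is a feasible price index for $\mathcal{G}_{i+1}$ (as $a \ge i+1$). Optimality of $a$ in $\mathcal{G}_i$ and of $b$ in $\mathcal{G}_{i+1}$ then yields the two revenue inequalities $(a-i+1)\,v_a \ge (b-i+1)\,v_b$ and $(b-i)\,v_b \ge (a-i)\,v_a$.

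Next I would set $\alpha = a-i$ and $\beta = b-i$, so that $\alpha > \beta \ge 1$, and rewrite the inequalities as $(\alpha+1)\,v_a \ge (\beta+1)\,v_b$ and $\beta\,v_b \ge \alpha\,v_a$. After disposing of the degenerate case $v_a = 0$ (which makes the first inequality force $v_b \le 0$, already contradicting $v_a < v_b$), division yields the sandwich $\tfrac{\alpha}{\beta} \le \tfrac{v_b}{v_a} \le \tfrac{\alpha+1}{\beta+1}$. The contradiction then falls out of the elementary fact that $\tfrac{\alpha}{\beta} \le \tfrac{\alpha+1}{\beta+1}$ is equivalent to $\alpha \le \beta$, which clashes with $\alpha > \beta$.

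I expect the only real subtlety to lie in the bookkeeping of the $\arg\max$ rather than in the algebra. The two optimality inequalities are legitimate only once the index ranges $a > b \ge i+1$ are pinned down so that $b$ genuinely competes in $\mathcal{G}_i$ and $a$ genuinely competes in $\mathcal{G}_{i+1}$, so I would establish those ranges before invoking optimality. I would also fix once and for all a tie-breaking rule for the $\arg\max$ (say, the smallest maximizing index), which is what legitimizes the implication $v_a < v_b \Rightarrow a > b$ in the presence of equal valuations. With those two points settled, the result reduces to the two-line ratio argument above.
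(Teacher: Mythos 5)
Your proof is correct and takes essentially the same route as the paper's: both hinge on the two cross-optimality inequalities $(a-i+1)v_a \ge (b-i+1)v_b$ and $(b-i)v_b \ge (a-i)v_a$ in the case $a>b$, the only difference being that the paper adds them to conclude $v_a \ge v_b$ directly while you divide them, which is algebraically equivalent but obliges you to handle $v_a=0$ separately.
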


The following lemma shows that the consumers' strategies defined above satisfy the skimming property.

\begin{lemma}\label{lem:threatmono}
In any game $\mathcal{G}$ with $T$ periods, the threat prices are non-increasing in $i$: for all $i \leq k$ and all $t<T$, $\tau(i,t) \geq \tau(k,t)$.
\end{lemma}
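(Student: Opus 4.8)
The plan is to prove the equivalent statement that the equilibrium prices $p(i,t)$ are non-increasing in $i$ for every fixed period $t$; since $\tau(i,t)=p(i,t+1)$ by definition, the lemma follows at once. I would argue by backward induction on $t$, from $t=T$ down to $t=1$. The base case $t=T$ is immediate: $p(i,T)=p_i$ is the static monopoly price of $\mathcal{G}_i$, which is non-increasing in $i$ by Lemma \ref{subgame_price_sequence}.

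For the inductive step, assume $p(\cdot,t+1)$ is non-increasing and fix consumers $i<k$. Let $a=j^*(i,t)$ and $b=j^*(k,t)$, so that $p(i,t)=p(a,t+1)$ and $p(k,t)=p(b,t+1)$, with $a\ge i$ and $b\ge k$. I would split on the order of $a$ and $b$. If $a\le b$, then the inductive hypothesis at period $t+1$, applied to the indices $a\le b$, gives $p(a,t+1)\ge p(b,t+1)$, which is exactly $p(i,t)\ge p(k,t)$.

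The case $a>b$ is where the real work lies, and I expect it to be the main obstacle. First one checks that $a>b\ge k$ forces $a\ge k$, so that $a$ and $b$ are both feasible choices in each of the two subgames. Writing $f_\ell(j)=(j-\ell+1)\,p(j,t+1)+\Pi(j+1,t+1)$ for the objective the recursion maximizes in $\mathcal{G}_\ell$, the key observation is the identity $f_i(j)-f_k(j)=(k-i)\,p(j,t+1)$, in which the $\Pi$-terms cancel. Combining the optimality inequality $f_i(a)\ge f_i(b)$ (optimality of $a$ in $\mathcal{G}_i$, legitimate since $b\ge k\ge i$) with $f_k(b)\ge f_k(a)$ (optimality of $b$ in $\mathcal{G}_k$, legitimate since $a\ge k$) and substituting the identity yields $(k-i)\big(p(a,t+1)-p(b,t+1)\big)\ge 0$; dividing by $k-i>0$ gives $p(i,t)=p(a,t+1)\ge p(b,t+1)=p(k,t)$. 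This is a standard exchange argument; the two points to be careful about are verifying feasibility of the swapped choices before invoking each optimality inequality, and observing that ties in the $\arg\max$ are harmless, since the optimality inequalities only require $a$ and $b$ to be maximizers and not unique ones. It is worth noting that the inductive hypothesis is actually used only when $a\le b$, the case $a>b$ being settled purely by the two optimality conditions.
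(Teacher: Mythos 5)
Your proof is correct, and the inductive skeleton (backward induction on $t$, base case from Lemma \ref{subgame_price_sequence}) matches the paper's; the difference is in how the inductive step is closed. The paper proves the stronger intermediate fact that the cutoff indices are monotone, $j^*(i,t+1)\le j^*(k,t+1)$ for $i\le k$: it takes the optimality inequality for $\mathcal{G}_k$ at $j^*(k,t+1)$ versus any $l\ge j^*(k,t+1)$, adds $(k-i)\bigl(p(j^*(k,t+1),t+2)-p(l,t+2)\bigr)\ge 0$ (which needs the inductive hypothesis), and concludes that $j^*(k,t+1)$ also weakly dominates all larger indices in $\mathcal{G}_i$; price monotonicity then follows from the inductive hypothesis applied to the ordered cutoffs. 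Your version instead compares $a=j^*(i,t)$ and $b=j^*(k,t)$ directly and, in the only nontrivial case $a>b$, adds the two optimality inequalities $f_i(a)\ge f_i(b)$ and $f_k(b)\ge f_k(a)$ so that the $\Pi$-terms cancel via $f_i(j)-f_k(j)=(k-i)\,p(j,t+1)$. This buys two things: the hard case needs no inductive hypothesis at all, and the argument is manifestly insensitive to ties in the $\arg\max$ (the paper's concluding step ``it follows that $j^*(i,t+1)\le j^*(k,t+1)$'' implicitly relies on a tie-breaking convention, since weak dominance only guarantees that \emph{some} maximizer is at most $j^*(k,t+1)$). What you lose is the cutoff-monotonicity statement itself, which the paper reuses later (it is invoked explicitly in the proof of Lemma \ref{lem:EqPriceDec}), so if one adopted your proof one would still need to extract that fact separately.
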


The next two lemmas are required to show that a deviation from a consumer by delaying a purchase or buying early does not yield higher utility.

\begin{lemma}\label{lem:monotone}
Consider two duropoly games, $\mathcal{G}$, with $T$ periods and a set $S$ of consumers, and $\mathcal{G'}$, with $T$ periods and a set $S'$ of consumers such that only the top valued consumer in $S$ and $S'$ differ, and the top valued consumer in $S'$ has the higher value. If we use $p_\mathcal{G} (1,1)$ and  $p_\mathcal{G'} (1,1)$ to denote the first period prices as calculated by the recursion relationship above, then $p_\mathcal{G'} (1,1) \geq p_\mathcal{G} (1,1)$.
\end{lemma}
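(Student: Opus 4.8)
The plan is to prove the statement by induction on the number of periods $T$, exploiting the fact that changing only the top value of the consumer set leaves almost all of the recursion (\ref{eq:characterization_prices}) untouched.

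For the base case $T=1$ the first-period price is simply the static monopoly price $p(1,1)=v_{j^*}$ with $j^*=\arg\max_{j}\,j\cdot v_j$. Raising the top value $v_1$ only increases the term $j\cdot v_j$ at $j=1$ and leaves every other term unchanged, so, fixing the smallest-index tie-breaking convention in the $\arg\max$, the maximizer either stays at some index $\geq 2$, in which case the price is unchanged, or it moves to $j=1$, in which case the price becomes the new (larger) top value, which dominates the old price $v_{j^*}\le v_1$. Either way the first-period price does not decrease.

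For the inductive step, suppose the statement holds for all games with $T-1$ periods and let $\mathcal{G},\mathcal{G}'$ have $T$ periods, differing only in the top value (higher in $\mathcal{G}'$). The crucial observation is that for every index $j\geq 2$ the subgame $\mathcal{G}_j$ contains only consumers $\{j,\dots,N\}$ and is therefore identical in the two games; consequently $p_{\mathcal{G}}(j,2)=p_{\mathcal{G}'}(j,2)$ and $\Pi_{\mathcal{G}}(j+1,2)=\Pi_{\mathcal{G}'}(j+1,2)$ for all $j\geq 2$, and the only subgame whose data can change is $\mathcal{G}_1$. Applying the induction hypothesis to $\mathcal{G}_1$ started at period $2$ --- a $(T-1)$-period game in which only the top value differs --- yields $p_{\mathcal{G}'}(1,2)\geq p_{\mathcal{G}}(1,2)$. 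Writing the period-$1$ objective as $f(j)=j\cdot p(j,2)+\Pi(j+1,2)$, this means $f_{\mathcal{G}'}(1)\geq f_{\mathcal{G}}(1)$ while $f_{\mathcal{G}'}(j)=f_{\mathcal{G}}(j)$ for all $j\geq 2$; that is, passing from $\mathcal{G}$ to $\mathcal{G}'$ raises the objective at $j=1$ only.

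It then remains to track how the maximizer $j^*(1,1)$ and the resulting price $p(1,1)=p(j^*(1,1),2)$ move, and here I would split into two cases under the smallest-index tie-breaking rule. If $j^*_{\mathcal{G}}(1,1)=1$, then $j=1$ remains a maximizer in $\mathcal{G}'$ (its objective only went up and it already dominated every $j\geq2$), so $j^*_{\mathcal{G}'}(1,1)=1$ and $p_{\mathcal{G}'}(1,1)=p_{\mathcal{G}'}(1,2)\geq p_{\mathcal{G}}(1,2)=p_{\mathcal{G}}(1,1)$. If $j^*_{\mathcal{G}}(1,1)>1$, then the set of maximizers over $j\geq2$ is unchanged, so either the maximizer stays at the same index $\geq 2$ (identical objective there, hence identical price) or the increase at $j=1$ makes $1$ the new maximizer; in the latter case Lemma \ref{lem:threatmono}, which gives that $p(\cdot,2)=\tau(\cdot,1)$ is non-increasing in the consumer index, ensures $p_{\mathcal{G}'}(1,1)=p_{\mathcal{G}'}(1,2)\geq p_{\mathcal{G}'}(j^*_{\mathcal{G}}(1,1),2)=p_{\mathcal{G}}(1,1)$. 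In every case $p_{\mathcal{G}'}(1,1)\geq p_{\mathcal{G}}(1,1)$, completing the induction. The main obstacle is precisely this bookkeeping of the $\arg\max$ across the two games: because raising the top value can shift the optimal first-period cutoff, one must combine the localized change of the objective with the monotonicity of threat prices (Lemma \ref{lem:threatmono}) and commit to a consistent tie-breaking rule, so that the cutoff can never jump to a strictly larger index whose threat price would be lower.
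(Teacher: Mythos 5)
Your proof is correct and follows essentially the same route as the paper's: induction on $T$, observing that only the $j=1$ branch of the recursion's objective changes when the top value is raised (the paper phrases this as ``schedules selling to more than one consumer in period 1 are unaffected''), applying the induction hypothesis to the $(T-1)$-period subgame starting at period 2, and invoking the threat-price monotonicity of Lemma \ref{lem:threatmono} when the maximizer jumps to $j=1$. Your rendering in terms of the objective $f(j)$ with an explicit tie-breaking convention is a somewhat more careful bookkeeping of the same case analysis the paper performs on optimal sales schedules.
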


\begin{lemma}\label{lem:EqPriceDec}
In any game $\mathcal{G}$ with $T$ periods, if the duropolist and consumers follow the strategies described above, then prices are non-increasing in time.
\end{lemma}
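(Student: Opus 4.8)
The plan is to read ``prices are non-increasing in time'' as a statement about the realized price path $\mu_1,\mu_2,\dots$ generated when both sides follow the equilibrium strategies, and to reduce this temporal monotonicity to the \emph{within-period} monotonicity of the subgame prices $p(\cdot,t)$ that the previous lemmas already supply. First I would fix notation for the equilibrium path: writing $\mathcal{G}_{i_t}$ for the subgame reached at the start of period $t$, we have $i_1=1$, the duropolist charges $\mu_t=p(i_t,t)$, consumers $i_t,\dots,j^*(i_t,t)$ buy, and $i_{t+1}=j^*(i_t,t)+1$. Since $j^*(i_t,t)\ge i_t$ by definition, a sale occurs in every period and the index strictly increases, so the path is well-defined until all consumers have purchased; hence the realized prices are exactly the sequence $p(i_t,t)$.

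The key move is to rewrite $\mu_t$ using the recursion $p(i,t)=p(j^*(i,t),t+1)$ from (\ref{eq:characterization_prices}). Setting $k_t=j^*(i_t,t)$, this gives $\mu_t=p(i_t,t)=p(k_t,t+1)$, while by construction of the path $\mu_{t+1}=p(i_{t+1},t+1)=p(k_t+1,t+1)$. Thus both $\mu_t$ and $\mu_{t+1}$ are prices of period $t+1$, and the desired inequality $\mu_t\ge\mu_{t+1}$ becomes exactly $p(k_t,t+1)\ge p(k_t+1,t+1)$ --- a comparison of two same-period subgame prices whose consumer indices differ by one.

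This within-period monotonicity is precisely what the earlier lemmas give. Recalling $\tau(i,s)=p(i,s+1)$, Lemma \ref{lem:threatmono} states $\tau(k_t,t)\ge\tau(k_t+1,t)$ for every $t<T$, i.e.\ $p(k_t,t+1)\ge p(k_t+1,t+1)$; since the relevant transitions run over $t=1,\dots,T-1$ this covers every step, and for the final transition $t=T-1$ one may equivalently invoke Lemma \ref{subgame_price_sequence}, as $p(\cdot,T)$ is just the static monopoly price. Chaining these inequalities over $t$ yields $\mu_1\ge\mu_2\ge\cdots$, as claimed.

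I do not expect a genuine obstacle here: the entire content is the bookkeeping identity $\mu_t=p(k_t,t+1)$ that relocates the temporal comparison into a single period, after which Lemma \ref{lem:threatmono} closes it immediately. The only points needing care are confirming that the recursion forces a sale in each period (so that the price path really is $p(i_t,t)$) and checking that the index appearing in $\mu_{t+1}$ is exactly one more than the index $k_t$ produced by rewriting $\mu_t$; tracing both against the two-period example in Section \ref{sec:example}, where $\mu_1=p(2,2)=80\ge p(3,2)=50=\mu_2$, is a useful sanity check that the alignment is correct.
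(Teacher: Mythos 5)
Your proposal is correct and follows essentially the same route as the paper: both rewrite $\mu_t=p(j_t,t+1)$ and $\mu_{t+1}=p(j_t+1,t+1)$ via the recursion and reduce the temporal comparison to the monotonicity of threat prices in the consumer index (Lemma \ref{lem:threatmono}), after first checking that the realized path really is the planned sales schedule. The only (cosmetic) difference is that you apply $\tau(\cdot,t)\ge\tau(\cdot+1,t)$ directly, whereas the paper unfolds one more level and additionally invokes the monotonicity of $j^*$ before applying the same lemma.
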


We are now ready to prove the following result.

\begin{theorem} The strategies defined above constitute a SPNE. \end{theorem}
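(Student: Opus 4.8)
The plan is to prove the theorem by backward induction on the time periods, showing that the prescribed profile induces a Nash equilibrium in \emph{every} subgame. Since the horizon is finite, the one-shot deviation principle applies, so it suffices to consider a subgame beginning at period $t$ with remaining set $S$ (re-indexed as a game $\mathcal{G'}$) and to show that no player gains from a single-period deviation, assuming the prescribed strategies are played from period $t+1$ onward. The base case $t=T$ is immediate: buying whenever the price is at most one's value is a dominant strategy for each remaining consumer, and against these strategies the duropolist's unique best response is the static monopoly price $p(1,T)$ she is prescribed to charge. So fix $t<T$ and assume the induction hypothesis; in particular, after any prefix $\{1,\dots,k\}$ purchases in period $t$, the duropolist's continuation profit in the resulting subgame is exactly $\Pi(k+1,t+1)$.

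First I would dispose of the duropolist's deviations. Given the consumers' rule ``buy iff the price is at most one's threat price'', Lemma \ref{lem:threatmono} guarantees that the set of buyers at any announced price is a prefix $\{1,\dots,k\}$, and that the highest price inducing exactly this prefix to buy is $\tau(k,t)=p(k,t+1)$. Hence every deviation of the duropolist reduces to a choice of prefix $k$, yielding total profit $k\cdot p(k,t+1)+\Pi(k+1,t+1)$, which is precisely the quantity maximized in the recursion (\ref{eq:characterization_prices}) defining $j^*(1,t)$. The prescribed price $\mu_t=p(j^*(1,t),t+1)$ therefore attains the maximum, and the duropolist has no profitable deviation.

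The crux is the consumers' deviations, where atomicity matters because a single consumer's choice reshapes the subgame faced from $t+1$. Write $j^*=j^*(1,t)$, so $\mu_t=\tau(j^*,t)$ and, by Lemma \ref{lem:threatmono}, consumers $\{1,\dots,j^*\}$ buy while $\{j^*+1,\dots\}$ wait. Consider a prescribed buyer $i\le j^*$ who deviates by waiting. The remaining buyers still purchase, so the period-$(t+1)$ subgame has consumer set $\{i\}\cup\{j^*+1,\dots\}$, which is exactly $\mathcal{G}_{j^*}$ with its top consumer $j^*$ replaced by the higher-valued consumer $i$. By Lemma \ref{lem:monotone} (applied to the $(T-t)$-period subgame starting at $t+1$), the first price in this subgame is at least the first price in $\mathcal{G}_{j^*}$, namely $\tau(j^*,t)=\mu_t$; since $i$ is now the top consumer she buys at that price, obtaining utility at most $v_i-\mu_t$, no better than buying on path. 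Conversely, consider a prescribed waiter $i>j^*$ who deviates by buying now, obtaining $v_i-\mu_t$. On the equilibrium path prices are non-increasing in time by Lemma \ref{lem:EqPriceDec}, so if $i$ ever buys she does so at a price at most $\mu_t$, giving utility at least $v_i-\mu_t$; and if she never buys then the period-$T$ price exceeds $v_i$, whence $\mu_t$ (being weakly larger) also exceeds $v_i$ and the deviation yields negative utility against an on-path utility of $0$. Both deviations are thus unprofitable.

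The main obstacle is the buyer-waits deviation: unlike the non-atomic case, the deviator does not simply fall back into $\mathcal{G}_i$, and the key is to recognize that the off-path subgame is $\mathcal{G}_{j^*}$ with its top value raised to $v_i$ — precisely the configuration controlled by the monotonicity Lemma \ref{lem:monotone} (which in turn rests on Lemmas \ref{subgame_price_sequence} and \ref{lem:threatmono}). Assembling the duropolist case with the two consumer cases completes the inductive step, and hence the backward induction, establishing that the prescribed profile is a subgame-perfect Nash equilibrium.
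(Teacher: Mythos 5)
Your proof is correct and takes essentially the same route as the paper: the same reduction of a waiting deviation to the subgame $\mathcal{G}_{j^*}$ with its top consumer replaced by the higher-valued deviator (handled by Lemma \ref{lem:monotone}), and the same appeal to Lemma \ref{lem:EqPriceDec} for early purchases; your one-shot-deviation framing replaces the paper's explicit ``delay by more than one period'' iteration, and your explicit check of the duropolist's deviations matches the paper's discussion surrounding the recursion. The only point you gloss over is the duropolist's option of selling to \emph{nobody} in a period (the empty prefix, which the recursion excludes); the paper handles this separately in Lemma \ref{NeverWait}.
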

\proof Since we can treat any subgame as an instance of a full game with a different set of consumers, there is no loss of generality in assuming that the deviation occurs in the first period of the full game. Let $j^*(1,1)$ be the lowest value consumer sold to in equilibrium and assume that a consumer $x \leq j^*(1,1)$ deviates by not buying in period 1. If $x=j^*(1,1)$, then $x$ is charged her threat price in the next period, which by definition is $p^*(1,1)$, so there is no advantage in a deviation. If $x< j^*(1,1)$, then the remaining consumers for period 2 are $\{x,j^*+1,...,N\}$. We know that if the set of consumers was $\{j^*,j^*+1,...,N\}$, then the price would be $p^*(1,1)$. But by Lemma \ref{lem:monotone}, the price  with consumers $\{x,j^*+1,...,N\}$ must be at least as high as  $p^*(1,1)$. Therefore $x$ cannot gain by delaying her purchase for one period.

One may wonder whether consumer $x$ could benefit from delaying the purchase by more than one period. But this is not the case. Consider the subgame $\mathcal{G}'$ arrived at after $x$ delays purchase for $t-1$ periods in the full game, and after re-indexing so the remaining consumers are sorted by value from highest to lowest. If $x = j^*(1,t)$ for this subgame, the price at period $t+1$ is $p(j^*(1, t),t+1) = p(j^*(j^*(1, t),t+1),t+2)$. This means that the price at $t+2$ will be the same as the price at $t+1$ if consumer $x$ doesn't buy and everyone else follows the equilibrium path. If, on the other hand, consumer $x<  j^*(1,t)$ for subgame $\mathcal{G'}$, the price at period $t+2$ could only increase or stay equal to $p(j^*(j^*(1, t),t+1),t+2)$ by Lemma \ref{lem:monotone}. By repeated use of this argument we conclude that, at equilibrium, no consumer would benefit from delaying its purchase.

It remains to show that no consumer can benefit from buying early.  If a consumer deviates from the equilibrium path by buying early, she pays a price $p^*(1,t)$ when she could have bought in period $t'>t$ at price $p^*(k,t')$ for some $k > 0$. But since prices are non-increasing as a function of time along the proposed sales path (Lemma \ref{lem:EqPriceDec}), she cannot do any better.

It follows that we have a strategy profile which is an equilibrium in every subgame. \qed

To compute such an equilibrium, we can compute $j^*(i,t)$ for each $(i,t)$ going backwards from period $T$, and choose the sales path $x_t$ which maximizes profit. The prices $\mu_t$ are then computed by ``passing back" the next period's threat price. We may solve the corresponding dynamic program to find the maximum profit $\Pi^D$ for the duropolist.

It is easy to see that if there is another strong-Markov SPNE, it cannot result in more revenue for the duropolist. In any proposed SPNE where we sell in every period, any period price cannot be higher than the threat price of any consumer who buys in that period, as otherwise she would earn more profit by waiting one more period. So given a sales schedule, the monopolist can do no better than to charge the threat price of the lowest valued consumer to buy in each period. However, (\ref{eq:characterization_prices}) finds the optimal sales schedule in terms of revenue when the duropolist charges threat prices in each period. Lemma \ref{NeverWait} covers the case where a strong-Markov SPNE may choose not to sell in one or more periods when consumers remain to buy. The full result we have, then, is

\begin{corollary}\label{cor:uniquemax}
The optimal revenue $\Pi^D$ given by the dynamic program derived from Equations \eqref{eq:characterization_prices} is the maximum revenue obtainable by a strong-Markov SPNE.
\end{corollary}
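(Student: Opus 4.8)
The plan is to prove the two inequalities $\Pi^D \le M$ and $M \le \Pi^D$, where $M$ denotes the maximum revenue over all strong-Markov SPNE. The lower bound $M\ge\Pi^D$ is immediate: the preceding Theorem shows that the strategies induced by \eqref{eq:characterization_prices} form a strong-Markov SPNE, and by construction they yield revenue $\Pi(1,1)=\Pi^D$. The work is therefore entirely in the upper bound $M\le\Pi^D$, which I would establish by backward induction on the number $T$ of periods, proving the stronger statement that every strong-Markov SPNE of a $T$-period game on a consumer set $S$ earns at most the recursion value given by \eqref{eq:characterization_prices}. The base case $T=1$ is forced: in the final period buying is a dominant strategy and the duropolist's unique best response is the static monopoly price, so the equilibrium revenue equals the recursion value. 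By Lemma~\ref{NeverWait} the maximum is attained by an equilibrium that sells in every period, so it suffices to bound these.

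For the inductive step, take such an equilibrium on $\mathcal{G}=\mathcal{G}_1$ and suppose it sells to the prefix $\{1,\dots,k\}$ in the first period at price $\mu_1$ (the buyers form a prefix by the cutoff structure implied by the skimming property). The continuation on $\{k+1,\dots,N\}$ is itself a strong-Markov SPNE of a $(T-1)$-period game, so by the inductive hypothesis its revenue is at most $\Pi(k+1,2)$, giving a total of at most $k\mu_1+\Pi(k+1,2)$. The heart of the argument is to show $\mu_1\le\tau(k,1)=p(k,2)$. Consider the lowest-value buyer, consumer $k$: were she to refuse, the remaining set in the next period would be exactly $\mathcal{G}_k=\{k,k+1,\dots,N\}$, and for buying to be a best response she must weakly prefer paying $\mu_1$ now to the price she is offered in that continuation, which is precisely the threat price $\tau(k,1)$. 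Since $k$ carries the smallest threat price among all buyers by Lemma~\ref{lem:threatmono}, charging her threat price is the most the duropolist can extract given this schedule. Substituting yields a total of at most $k\cdot p(k,2)+\Pi(k+1,2)$, which is one term of the maximisation defining $j^*(1,1)$ in \eqref{eq:characterization_prices} and hence at most $\Pi(1,1)$; taking the maximum over the possible cut-offs $k$ closes the induction, and combined with Lemma~\ref{NeverWait} establishes $M\le\Pi^D$.

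The step I expect to be the main obstacle is the bound $\mu_1\le\tau(k,1)$, because the price consumer $k$ would face after deviating lies off the equilibrium path and, a priori, could be governed by a \emph{different} strong-Markov SPNE of the subgame $\mathcal{G}_k$ than the one encoded in \eqref{eq:characterization_prices}; the threat price is defined through our specific recursion, so I must argue that it genuinely bounds the deviation price in every strong-Markov SPNE. I would resolve this by a second backward induction from the dominant-strategy final period showing that the continuation price offered to the \emph{top} consumer of any subgame is pinned down --- indeed that the top consumer of $\mathcal{G}_i$ pays exactly the static monopoly price $p_i$ regardless of the starting period --- so that the relevant deviation price is identical across all strong-Markov SPNE. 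This argument leans on the monotonicity of static monopoly prices (Lemma~\ref{subgame_price_sequence}), the monotonicity of threat prices (Lemma~\ref{lem:threatmono}), and the fact that prices are non-increasing in time (Lemma~\ref{lem:EqPriceDec}). The last piece of care is to handle ties in the maximisation of \eqref{eq:characterization_prices}, where several cut-offs are simultaneously optimal: here I would verify that every optimal cut-off induces the same offered price, so that the bound is unaffected by the choice of maximiser.
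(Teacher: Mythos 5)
Your overall route is the same as the paper's: the lower bound is immediate from the preceding theorem, Lemma~\ref{NeverWait} disposes of equilibria with no-sale periods, and the upper bound comes from arguing that in any sell-every-period strong-Markov SPNE the price in each period is capped by the threat price of the lowest-valued buyer in that period, so that the recursion \eqref{eq:characterization_prices}, which optimises the sales schedule under exactly that pricing rule, dominates. The paper presents this as a short informal paragraph; your write-up is a more explicit backward induction of the same argument. You also correctly put your finger on the one step that the paper itself glosses over: the price consumer $k$ faces after a unilateral deviation is governed by the equilibrium's continuation at the off-path state $(\mathcal{G}_k,\,t+1)$, which is \emph{some} strong-Markov SPNE of that subgame but not a priori the one encoded by \eqref{eq:characterization_prices}, so the inequality $\mu_1\le\tau(k,1)$ is not automatic.

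The problem is that your proposed repair of that step does not work. You claim the deviation price is pinned down because ``the top consumer of $\mathcal{G}_i$ pays exactly the static monopoly price $p_i$ regardless of the starting period.'' That is false even for the paper's own equilibrium: under \eqref{eq:characterization_prices} the top consumer of $\mathcal{G}_i$ starting at period $t<T$ is offered $p(i,t)=p(j^*(i,t),t+1)$, which eventually resolves to $p_j$ for some $j\ge i$ and can be strictly below $p_i$; and across \emph{arbitrary} strong-Markov SPNE the only available control is Lemma~\ref{lem_bound_price}, which gives the one-sided bound that consumer $k$ pays \emph{at most} $p_k$ in the continuation. An upper bound of $p_k$ on the continuation price yields only $u_k\ge v_k-p_k$ and hence $\mu_1\le p_k$, whereas your induction needs $\mu_1\le\tau(k,1)=p(k,2)$, which can be strictly smaller than $p_k$. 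So the exact step you flagged as the main obstacle remains open in your argument: you would need a separate induction showing that in \emph{every} strong-Markov SPNE of the subgame $(\mathcal{G}_k,\,T-1)$ the top consumer secures utility at least $v_k-p(k,2)$ (equivalently, that no strong-Markov continuation can credibly charge her more than the recursion's threat price while surviving her further deviations), and neither your sketch nor the cited lemmas supply this. The tie-handling remark at the end is a non-issue --- for an upper bound, taking the maximum over cut-offs absorbs any ambiguity in the argmax --- but the deviation-price step is a genuine gap as written.
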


\section{When can the Duropolist Extract all the Consumer Surplus?} \label{sec:ConSur}
As discussed, \cite{bagnoli1989durable} proved that a duropolist who faces atomic consumers with an infinite time horizon can always extract all consumer surplus. They left open the case of finite time horizons. Although such equilibria may still exist under a finite horizon, the conditions required for their existence are very restrictive. Indeed, applying the techniques we have developed, we characterize in this section necessary and sufficient conditions for this phenomenon to happen. We require some definitions and a few lemmas.

Let $\mathcal{G}$ be a game with $N$ consumers with valuations $v_1 \geq v_2, \hdots,\geq v_N$ and let $p_i$ be the static monopoly price of the subgame consisting of consumers $\{i, i+1, \dots, N\}$. More generally, given a subset $S \subseteq [N]$ we define $p(S)$ to be the static monopoly price of the subgame consisting of consumers $S \subseteq [N]$.

\begin{lemma} \label{highest_value_lemma}
A game $\mathcal{G}$ satisfies $p_i=v_i$ for all $i \in [N]$ if and only if $p(S) = \max\{v_x: x \in S\}$ for all $S \subseteq [N]$.
\end{lemma}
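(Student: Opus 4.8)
The two directions are of very different difficulty, so the plan is to dispose of the reverse implication in one line and then concentrate on the forward one. If $p(S)=\max\{v_x:x\in S\}$ holds for every $S\subseteq[N]$, then specialising to the suffix $S=\{i,i+1,\dots,N\}$, whose maximum value is $v_i$, immediately gives $p_i=p(\{i,\dots,N\})=v_i$. All the content lies in the forward direction. There my plan is to translate the hypothesis into a family of ``one-unit'' inequalities on the valuations, and then, for an arbitrary set $S$, bound the revenue of every candidate price against the revenue of the top price. Recall that $p(S)$ is the revenue-maximising price (breaking ties toward the higher price), so ``$p(S)=\max_S$'' is exactly the assertion that charging the top value is revenue-optimal on $S$.

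To fix intuition I would first treat the case of distinct valuations. Fix $S=\{s_1<s_2<\dots<s_m\}$, with values $v_{s_1}\ge\cdots\ge v_{s_m}$. Charging $v_{s_k}$ on $S$ sells to the top $k$ consumers for revenue $k\,v_{s_k}$, while charging $v_{s_1}$ yields $v_{s_1}$; thus $p(S)=\max_S$ is precisely the system $v_{s_1}\ge k\,v_{s_k}$ for all $k$. The key observation is that distinct integer indices force $s_k\ge s_1+(k-1)$, hence $v_{s_k}\le v_{s_1+k-1}$. Since for distinct values the hypothesis $p_{s_1}=v_{s_1}$ says exactly $v_{s_1}\ge(j-s_1+1)v_j$ for all $j\ge s_1$, evaluating at $j=s_1+k-1$ gives $v_{s_1}\ge k\,v_{s_1+k-1}\ge k\,v_{s_k}$, which is what is needed. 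In short, a subset is ``spread out'' in the index order, so suffix-optimality at its top element already controls every lower price.

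The hard part will be ties, and I expect this to be the only genuine obstacle: with equal values, $p_i=v_i$ no longer yields $v_{s_1}\ge k\,v_{s_k}$, because selling to a whole block of equally top-valued consumers can beat selling to just one of them. The fix I would use is to apply the hypothesis at the \emph{last} consumer of each equal-value block. Writing the distinct values as $V_1>V_2>\cdots>V_r$ with multiplicities $c_1,\dots,c_r$, the condition $p_i=v_i$ at the last index of block $t$ leaves a single consumer of value $V_t$ in the suffix, and therefore encodes the strong inequalities $V_t\ge\bigl(1+\sum_{t<t''\le t'}c_{t''}\bigr)V_{t'}$ for all $t'>t$. Now take any $S$ whose top value $V_{t_0}$ occurs $a\ge1$ times in $S$, and any lower level $V_{t_1}$ reached by $b$ consumers of $S$. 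Counting shows $b\le a+\Sigma$ with $\Sigma=\sum_{t_0<t''\le t_1}c_{t''}$, and combining the strong inequality at $t_0$ with $a\ge1$ gives $a\,V_{t_0}\ge a(1+\Sigma)V_{t_1}\ge(a+\Sigma)V_{t_1}\ge b\,V_{t_1}$, so no price beats the top price and $p(S)=\max_S$. The one delicate step is this counting estimate ($b\le a+\Sigma$ together with $a(1+\Sigma)\ge a+\Sigma$), where the ``$+1$'' produced by evaluating at the last element of a block is doing the real work; everything else is routine bookkeeping.
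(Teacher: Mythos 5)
Your proof is correct, but it runs in the opposite logical direction from the paper's and is considerably more explicit about ties. The paper proves the forward implication by contraposition in three lines: if some $S$ has $p(S)<v_i$ with $i$ the top-valued consumer of $S$, then the witnessing price $w_j$ with $j\cdot w_j> v_i$ earns at least as much in the suffix $\mathcal{G}_i$ (which contains every consumer of $S$ below the top one), forcing $p_i<v_i$. Your direct argument exploits exactly the same containment --- a subset has at most as many buyers at each price level as the suffix headed by its top element --- but turns it into the explicit chain $aV_{t_0}\ge a(1+\Sigma)V_{t_1}\ge(a+\Sigma)V_{t_1}\ge bV_{t_1}$. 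What your version buys is a clean treatment of repeated values: your device of invoking the hypothesis at the \emph{last} consumer of each equal-value block (so that the suffix contains exactly one copy of the top value and the revenue at the top price is literally $V_{t}$) is precisely the refinement the paper's own proof silently needs, since as written its step ``$j\cdot w_j>v_i$ implies $p_i<v_i$'' compares against revenue $v_i$ rather than $m\cdot v_i$ when $m>1$ consumers share the value $v_i$. The cost is length: the paper's contrapositive gets away with a single witness price, while you must verify the inequality for every candidate price level on every $S$. Both proofs handle the reverse implication identically, by specialising to suffixes.
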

\begin{proof}
Suppose there exists a subset $S \subseteq [N]$ such that $p(S) < v_i$ where $i = \arg \max \{v_x: x \in S\}$. Let the valuations of the consumers in $S$ be $v_i \geq v'_2 \geq v'_3 \geq \hdots \geq v'_{|S|}$. Then, we have $j\cdot v'_j > v_i$ for some $j>1$. But then $p_i < v_i$ since by setting a price of $v'_j$ in the subgame with consumers $\{i,i+1,\hdots,N\}$ yields a profit of at least $j \cdot v'_j > v_i$ (since in the market $\mathcal{G}_i$ there may be more consumers with valuations between $v'_j$ and $v_i$). The remaining implication follows directly.
\end{proof}

Let $w_1 > w_2 > \hdots > w_{M}$ denote the $M$ \emph{distinct} consumer valuations sorted in decreasing order. Let $n_i$ denote the number of consumers with value $w_i$. The following technical lemma (whose proof is in the Appendix 1) is required to prove the main result of this section. We set $w_i = n_i = 0$ for all $i > M$.


\begin{lemma}\label{lem:pacman1}
If $p_i=v_i$ for all $i \in [N]$, the following inequality holds for every natural number $\beta \geq 2$, $k=2,\hdots,\beta$ and all $i=1,\hdots,k-1$,
$$n_i \cdot w_i - n_i \cdot w_k - n_{\beta+i}w_{\beta+i} \geq 0.$$
\end{lemma}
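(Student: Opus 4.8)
The plan is to distill from the hypothesis $p_i = v_i$ a single family of inequalities comparing the ``sell only to the top value'' revenue against selling to an entire block of values, and then to assemble three elementary consequences of it. By Lemma \ref{highest_value_lemma}, the assumption $p_i = v_i$ for all $i$ is equivalent to $p(S) = \max\{v_x : x \in S\}$ for every $S \subseteq [N]$. Applying this to the set $S$ consisting of all consumers whose value lies in $\{w_a, w_{a+1}, \dots, w_b\}$ (for any $a \le b$), the optimal monopoly revenue is attained at the top price $w_a$ and equals $n_a w_a$, whereas charging $w_b$ sells to every consumer in $S$ and earns $(n_a + n_{a+1} + \cdots + n_b)\,w_b$. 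Comparing these gives
$$ n_a w_a \;\ge\; (n_a + n_{a+1} + \cdots + n_b)\,w_b \qquad\text{for all } a \le b. \qquad (\star) $$

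From $(\star)$ I would record two consequences. First, taking $b = a+1$ and discarding the nonnegative term $n_a w_{a+1}$ yields $n_a w_a \ge n_{a+1} w_{a+1}$, so the sequence $a \mapsto n_a w_a$ is non-increasing; this persists for $a > M$ since there $n_a w_a = 0$ by the stated convention. Second, taking $a = i$ and $b = k$ in $(\star)$ and moving the $n_i w_k$ term to the left gives $n_i(w_i - w_k) \ge (n_{i+1} + \cdots + n_k)\,w_k$.

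The conclusion then follows from the chain
$$ n_i(w_i - w_k) \;\ge\; (n_{i+1} + \cdots + n_k)\,w_k \;\ge\; n_k w_k \;\ge\; n_{\beta+i}\,w_{\beta+i}, $$
which, after moving the negative terms across, is exactly the claimed inequality. The middle step uses $i \le k-1$, so that the index $k$ actually appears in the sum $n_{i+1}+\cdots+n_k$ while the other summands are nonnegative; the final step invokes the monotonicity of $a \mapsto n_a w_a$ together with $k \le \beta < \beta + i$ (valid because $i \ge 1$ and $k \le \beta$). The degenerate cases $k > M$ or $\beta+i > M$ are absorbed automatically by the convention $w_a = n_a = 0$ for $a > M$: the vanishing terms reduce the statement to $n_i w_i \ge 0$ or $n_i(w_i - w_k) \ge 0$, both immediate.

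I do not expect a genuine obstacle here: once $(\star)$ is established the argument is a short telescoping comparison. The only points demanding care are the index bookkeeping — checking that $i \le k-1$ forces $n_k$ into the block sum and that $k \le \beta + i$ is precisely what licenses the monotonicity step — and verifying that the boundary cases with indices exceeding $M$ are handled by the convention rather than needing a separate treatment.
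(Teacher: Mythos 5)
Your proof is correct. It draws on the same source of leverage as the paper's --- the hypothesis $p_j=v_j$ forces ``sell only at the top value'' to be revenue-optimal on every relevant set of consumers --- but the decomposition is genuinely different. The paper disposes of the case $k>M$ and then extracts two pointwise inequalities from tail subgames: $w_i \ge (1+n_{i+1}+\cdots+n_k)\,w_k \ge 2w_k$ (from the subgame headed by the last consumer of value $w_i$) and $w_k \ge (1+n_{k+1}+\cdots+n_{\beta+i})\,w_{\beta+i} \ge n_{\beta+i}w_{\beta+i}$; it then closes with $n_i w_i - n_i w_k - n_{\beta+i}w_{\beta+i} \ge n_i w_i - (n_i+1)w_k \ge \tfrac{1}{2}(n_i-1)w_i \ge 0$. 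You instead isolate the block inequality $(\star)$ via Lemma \ref{highest_value_lemma}, deduce from it that $a\mapsto n_a w_a$ is non-increasing, and chain $n_i(w_i-w_k)\ge n_k w_k \ge n_{\beta+i}w_{\beta+i}$. The trade-off: the paper's bound $n_{\beta+i}w_{\beta+i}\le w_k$ is a single application of the hypothesis, whereas yours passes through the monotone sequence $n_a w_a$ from index $k$ up to $\beta+i$; in exchange you avoid the $w_i\ge 2w_k$ step and the final factor-of-two algebra, and the monotonicity of $n_a w_a$ is a clean, reusable intermediate fact. Your index bookkeeping ($i\le k-1$ putting $n_k$ into the block sum, $k\le\beta<\beta+i$ licensing the monotonicity step) and your absorption of the degenerate cases $k>M$ and $\beta+i>M$ into the convention $n_a=w_a=0$ are both sound.
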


Recall that the duropolist strategy named \emph{Pacman} is to announce at each time period, a price equal to the valuation of the consumer with the highest value who has yet to buy, and that the consumer strategy known as \emph{get-it-while-you-can} is to buy the first time it induces a non-negative utility. The next lemma gives sufficient conditions for such strategies to be at equilibrium. Observe that we are not imposing that the number of periods is greater than or equal to the number of different consumer valuations. This makes the argument not trivial.

\begin{lemma} \label{pacman_equilibrium_lemma_old_version}
If $p_i=v_i$ for all $i \in [N]$, then there exists an equilibrium in which the duropolist uses the pacman strategy and consumers follow the get-it-while-you-can strategy.
\end{lemma}
\begin{proof}
We proceed by induction in the number of time periods. For games with a single period (i.e., $T=1$) the lemma holds because the duropolist announces the optimal static monopoly price, which is $p_1$.
Suppose now that the lemma holds for all games with at most $T-1$ periods and consider a game with $T$ periods. Let $A$ be the set of consumers that buy at period 1 under an equilibrium $\mathcal{E}$. Observe that by Lemma \ref{highest_value_lemma}, the subgame that begins at period 2, with consumers $[N] \setminus A$ satisfies the that $p_i=v_i$ and therefore there exist an equilibrium where the duropolist uses the Pacman strategy from then on. This means that consumers expect zero profits whenever they don't buy in the first time period, and therefore they would buy in the first time period at any price that is not above their valuation. If $M \leq T$ the duropolist may announce at $t=1$ the price $\mu_1=v_1$. All consumers with a valuation of $v_1$ would buy and, by the inductive hypothesis and the fact that the different valuations in the remaining game is still less than the time periods left ($M-1 \leq T-1$), the duropolist would be able to extract all consumer surplus. Thus, pacman is an optimal strategy.
We now analyze the case where $M > T$.
Observe that because consumers will buy in the first period if and only if the price is not above their value, the duropolist's strategy space  can be restricted, without loss of generality, to announcing a first price equal to the valuation of some consumer. Let $\Pi(k)$ denote the profits for the whole game if the first price is $w_k$. Since by induction hypothesis the pacman strategy is an equilibrium after the first period, we have that
$$\Pi(k) = \sum_{i=1}^kn_i \cdot w_k + \sum_{j=k+1}^{T+k-1}n_j\cdot w_j.$$
Now we want to show that
$$\Pi(1) \geq \Pi(k)$$ for all $k=1,\hdots,M$.
Consider first the case where $k \leq T$. By Lemma \ref{lem:pacman1}, by setting $\beta=T$, we have that for all $k=1,\hdots,T$ and all $i=1,\hdots,k-1$:
\begin{equation*}
0 \leq  n_i \cdot w_i - n_i \cdot w_k - n_{T+i}\cdot w_{T+i}
\end{equation*}
Summing over $i$ we obtain
\begin{eqnarray*}
0 &\leq&
 \sum_{i=1}^{k-1} (n_i \cdot w_i - n_i \cdot w_k - n_{T+i}\cdot w_{T+i})   \\
&=&\sum_{i=1}^{k-1} n_i \cdot w_i - \sum_{i=1}^{k-1} n_i \cdot w_k - \sum_{j=T+1}^{T+k-1}n_{j}\cdot w_{j}   \\
&=&\sum_{i=1}^{k-1} n_i \cdot w_i  + \sum_{i=k}^{T} n_i \cdot w_i  - \sum_{i=1}^{k-1} n_i \cdot w_k - \sum_{i=k}^{T} n_i \cdot w_i - \sum_{j=T+1}^{T+k-1}n_{j}\cdot w_{j}   \\
&=&\sum_{i=1}^{T} n_i \cdot w_i - \sum_{i=1}^{k-1} n_i \cdot w_k - \sum_{j=k}^{T+k-1}n_{j}\cdot w_{j}  \\
&=&\sum_{i=1}^{T} n_i \cdot w_i - \sum_{i=1}^{k} n_i \cdot w_k - \sum_{j=k+1}^{T+k-1}n_{j}\cdot w_{j}  \\
&=&\Pi(1) - \Pi(k)
\end{eqnarray*}
Thus $\Pi(1) \geq \Pi(k)$. Second, consider the case where $k > T$. By Lemma \ref{lem:pacman1}, setting $\beta=k >T $, we have that for all $i=1,\hdots,k-1$:
\begin{eqnarray*}
0
&\leq&  \sum_{i=1}^{T} \left(n_i \cdot w_i - n_i \cdot w_k - n_{k+i}\cdot w_{k+i}\right)  \\
&=& \sum_{i=1}^{T} n_i \cdot w_i - \sum_{i=1}^T n_i \cdot w_k - \sum_{j=k+1}^{T+k}n_{j}\cdot w_{j} \\
&=&\Pi(1) - \Pi(k)
\end{eqnarray*}
Thus, again, $\Pi(1) \geq \Pi(k)$. So there exists an equilibrium where the duropolist uses the pacman strategy.
\end{proof}

We are now ready to prove the following theorem which provides sufficient and necessary conditions for the existence of an equilibrium that extracts all consumer surplus.

\begin{theorem}[Pacman Theorem] \label{pacman_theorem}
Consider a duropoly game $\mathcal{G}$ with $M \leq N$ distinct valuations. There exists an equilibrium at which the duropolist extracts all the consumer surplus if and only if $M \leq T$ and $v_i = p_i$ for all $i \in [N]$.
\end{theorem}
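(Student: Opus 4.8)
The plan is to prove the two directions separately, with the reverse (necessity) direction requiring the bulk of the work.

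\textbf{Sufficiency.} Assume $M \le T$ and $v_i = p_i$ for all $i$. By Lemma \ref{pacman_equilibrium_lemma} there is an equilibrium in which the duropolist plays \emph{Pacman} and the consumers play \emph{get-it-while-you-can}. Along this equilibrium the duropolist charges $w_1$ in the first period, $w_2$ in the second, and in general $w_t$ in period $t$, so the $n_t$ consumers of value $w_t$ all buy at price exactly $w_t$. Since $M \le T$, there are enough periods to exhaust all $M$ distinct valuations; hence every consumer buys and pays exactly her value, and the duropolist collects $\sum_{i=1}^N v_i$, i.e. the entire consumer surplus.

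\textbf{Necessity of $M \le T$.} Suppose some equilibrium extracts all consumer surplus. Then every consumer buys (a non-buyer contributes zero, leaving surplus unextracted) and pays exactly her value. Because only a single price $\mu_t$ is posted each period, any two consumers buying in the same period pay the same price; for both to pay their value they must have equal valuations. Thus each period can fully extract from at most one distinct valuation class, and since all $M$ classes must be sold, we need $M \le T$.

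\textbf{Necessity of $v_i = p_i$.} This is the crux. First note $p_i \le v_i$ always, since $p_i = v_{j^*(i,T)}$ with $j^*(i,T) \ge i$ and valuations are non-increasing. Suppose, for contradiction, that $p_i < v_i$ for some $i$, and let $i$ be the largest such index. Using Lemma \ref{subgame_price_sequence} (the $p_i$ are non-increasing) together with the fact that valuations are constant on each class, one checks that $i$ must be the last index $j_k$ of some valuation class $k$, so $p_{j_k} < w_k = v_{j_k}$: indeed if $i+1$ shared the class of $i$ then $p_{i+1}\le p_i < v_i = v_{i+1}$, contradicting maximality. Now consider the full-extraction equilibrium at the period in which class $k$ is sold: a single consumer of that class who delays leaves, after the remaining $n_k-1$ class members buy, exactly the subgame $\mathcal{G}_{j_k}$ consisting of one consumer of value $w_k$ together with all lower classes. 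In the final period of any continuation of this subgame the duropolist charges, as a last-period dominant strategy, the static monopoly price of whoever remains; I will argue this price is strictly below $w_k$, so the highest-value (deviating) consumer secures strictly positive utility by waiting. This contradicts her zero on-path utility, so the deviation is profitable and no full-extraction equilibrium survives unless $p_{j_k} = w_k$. Having established $p_{j_k} = v_{j_k}$ at the last index of every class, Lemma \ref{subgame_price_sequence} upgrades this to all indices: for $i$ in class $k$ we have $w_k = p_{j_k} \le p_i \le v_i = w_k$, hence $p_i = v_i$.

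The main obstacle lies in this last deviation argument: one must rule out continuations in which the duropolist first sells the lower classes and only then re-extracts $w_k$ from the lone delaying consumer, which on its face would keep her utility at zero. I expect to handle this by inducting on the number of distinct valuations (equivalently, on the number of remaining periods) in $\mathcal{G}_{j_k}$, using Lemmas \ref{subgame_price_sequence} and \ref{highest_value_lemma} to show that whenever the static monopoly price of a reached configuration drops below its top value the top consumer strictly gains, the base case being a single remaining period where the duropolist is forced to post $p_{j_k} < w_k$.
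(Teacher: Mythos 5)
Your sufficiency argument and your argument that $M \le T$ is necessary match the paper's. The divergence is in the necessity of $v_i = p_i$, and there the proposal has a genuine gap --- one you yourself flag but do not close. The deviation you construct (a lone member of class $k$ delays) is not obviously profitable: in the continuation the duropolist can sell to all lower classes first, so that by the final period the deviator is the only consumer left, at which point the last-period static monopoly price of the remaining set is $w_k$ and she again earns zero. Your ``I will argue this price is strictly below $w_k$'' and ``I expect to handle this by inducting\dots'' are exactly the missing content; the claim is not immediate from Lemmas \ref{subgame_price_sequence} and \ref{highest_value_lemma}, because those control static monopoly prices of fixed consumer sets, not what the equilibrium continuation actually charges the deviator. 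The paper packages precisely this inductive deviation argument as Lemma \ref{lem_bound_price} (in equilibrium, consumer $i$ never pays more than $p_i$ when she buys before the last period), proved by induction on the number of periods, and then the theorem follows with no further deviation reasoning: a full-extraction equilibrium has non-increasing prices, hence satisfies the skimming property, hence consumers buy in decreasing order of value; if consumer $i$ buys in the last period she is in the lowest class and $p_i = v_i$ trivially, and if she buys earlier Lemma \ref{lem_bound_price} caps her payment at $p_i < v_i$, contradicting full extraction.

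To repair your write-up the cleanest route is to invoke Lemma \ref{lem_bound_price} directly (after first noting, as the paper does, that a full-extraction equilibrium must satisfy the skimming property so that the lemma applies), rather than re-deriving its content; alternatively you must actually carry out the induction you sketch, which amounts to reproving that lemma. As written, the proof is incomplete at its central step.
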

\proof
Take a game $\mathcal{G}$ with $M \leq T$ and $v_i = p_i$ for all $i \in [N]$.  By Lemma \ref{pacman_equilibrium_lemma_old_version} there exists an equilibrium in which the duropolist uses the pacman strategy.
Moreover, since $M \leq T$, under this equilibrium the duropolist obtains all the consumer surplus.

The contrapositive also holds. First, assume $M > T$. Then, since the number of time periods is less than the number of different valuations it is impossible for the
duropolist to extract the value of every consumer before the end of the game. Second, assume $v_i > p_i$ for some $i \in [N]$.
Now take an equilibrium that extracts all the consumer surplus. At equilibrium, prices must be non-increasing over time.
Moreover, since all consumer surplus is extracted, it implies that consumers also purchase in decreasing order of value over time which means that the skimming property holds. If consumer $i$ bought in the last period it means she has the lowest valuation, i.e. $v_i=w_M$, and $p_i=v_i$ which is a contradiction. In the case consumer $i$ buys before the last period, Lemma~\ref{lem_bound_price} (see next section) implies that consumer $i$ never pays more than $p_i$, again a contradiction.
\qed

We conclude this section with some observations. Recall, from Footnote~$2$, that there exist finite time horizon games in which the Pacman solution
has profits that are a factor $\log N$ greater than static monopoly profits. On the other hand, the
main result of next section is that duropoly profits are approximately the static monopoly profits and never more than twice the static monopoly profits. Thus, when Pacman is an equilibrium, static monopoly profits are at least half of Pacman profits. To see this, observe that the condition $p_i=v_i$ for all $ i \in [N]$, implies that consumer valuations decrease exponentially. In these scenarios,  $n_1 \cdot v_1$ (which is static monopoly revenue) is at least half of the sum of all consumer valuations.

\section{A Relationship between Duropoly Profits and Static Monopoly Profits}\label{sec:results}

In this section, we will prove our main result: the profits of the duropolist in a skimming property-satisfying SPNE of the duropoly game are at least the profits of the corresponding static monopolist, but at most double.
The first result is straightforward.

\begin{lemma}
$\Pi^M \le \Pi^D$
\end{lemma}
\begin{proof}
The proof is by induction on the number of periods. The base case is trivial. Consider the (possibly sub-optimal) sales schedule where the duropolist sells at $p_1$ in period 1, and then follows an equilibrium strategy for the remaining subgame $\mathcal{G}'$. Let $k_1$ be the number of consumers sold to in period 1 under this schedule. Let $\Pi^M_\mathcal{G} = j \cdot v_{j} \equiv j \cdot p_1$ and $\Pi^M_{\mathcal{G}'}= (k-k_1)v_{k}$. Since $j =|\{i | v_i \geq v_j\}|$, $j \geq k_1$ (no one with value less than $v_j$ is willing to pay $v_j$). But $k = \arg\max_{i\geq k_1} (i-k_1)v_i$, therefore $(k-k_1)v_{k} \geq (j-k_1)v_{j}$. So
\begin{align*}
\Pi^D  \geq  k_1 p_1 + \Pi^D_{\mathcal{G}'}
 \geq  k_1 p_1 + \Pi^M_{\mathcal{G}'}
  \geq  k_1 p_1 + (j-k_1)v_j
 =  k_1 p_1 + (j-k_1)p_1
= \Pi^M,
\end{align*}
where, in the second inequality, we used the induction hypothesis.
\end{proof}

The second claim is more substantial.

\begin{theorem}\label{thm:main}
$\Pi^D \le (\Pi^M + p_1) \le 2\cdot \Pi^M$
\end{theorem}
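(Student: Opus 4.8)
The plan is to prove the key inequality $\Pi^D \le \Pi^M + p_1$ by induction on the number of time periods $T$, mirroring the structure of the preceding theorem that established $\Pi^M \le \Pi^D$. The final inequality $\Pi^M + p_1 \le 2\Pi^M$ is then immediate: since the static monopolist could always sell a single unit at price $p_1 = v_1$ to the highest-value consumer, we have $p_1 \le \Pi^M$, giving $\Pi^M + p_1 \le 2\Pi^M$. So the entire substance lies in the additive bound.

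For the inductive step, I would decompose the duropolist's optimal play using the recursion \eqref{eq:characterization_prices}. Let $j^* = j^*(1,1)$ be the lowest-value consumer sold to in the first period, so that $k_1 := j^*$ consumers buy at the first-period price $p(1,1) = p(j^*, 2)$, which is the threat price of consumer $j^*$, i.e. the static monopoly price of the subgame $\mathcal{G}_{j^*}$. Thus $p(1,1) = p_{j^*} \le p_1$ by Lemma \ref{subgame_price_sequence}. The total duropoly profit splits as
\begin{equation*}
\Pi^D = \Pi(1,1) = k_1 \cdot p_{j^*} + \Pi(j^*+1, 2),
\end{equation*}
where $\Pi(j^*+1,2)$ is the optimal duropoly profit of the subgame $\mathcal{G}_{j^*+1}$ played over the remaining $T-1$ periods. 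Applying the induction hypothesis to that subgame gives $\Pi(j^*+1,2) \le \Pi^M_{\mathcal{G}_{j^*+1}} + p_{j^*+1}$, where $\Pi^M_{\mathcal{G}_{j^*+1}}$ is the static monopoly profit on the consumers $\{j^*+1,\dots,N\}$.

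The crux is then to show that $k_1 \cdot p_{j^*} + \Pi^M_{\mathcal{G}_{j^*+1}} + p_{j^*+1} \le \Pi^M + p_1$. The natural route is to exhibit a static monopoly pricing on the full game $\mathcal{G}$ whose revenue is at least $k_1 \cdot p_{j^*} + \Pi^M_{\mathcal{G}_{j^*+1}}$, absorbing the leftover $p_{j^*+1} \le p_1$ into the additive slack. Concretely, the first-period block contributes $k_1 \cdot p_{j^*}$ where all $k_1$ consumers have value at least $p_{j^*}$ (since $p_{j^*} = v_{j^*}$ or is attained at a consumer of index $\ge j^*$); and the static monopoly on $\mathcal{G}_{j^*+1}$ sells some number of units at a price no larger than $p_{j^*}$. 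The idea is that a single static price on $\mathcal{G}$ — either $p_{j^*}$ itself, selling to everyone the duropolist's first block reaches plus whatever the tail monopolist reaches — recovers most of this. I expect the main obstacle to be precisely this combinatorial comparison: the duropolist's first-period revenue and the tail's static revenue are computed at two different prices, and one must argue that collapsing them onto a single static monopoly price on the full game loses at most $p_1$. The cleanest handle is likely to bound $\Pi^M \ge k_1 \cdot p_{j^*} + (\Pi^M_{\mathcal{G}_{j^*+1}} - p_{j^*+1})$ by observing that the static monopolist can charge $p_{j^*+1}$ (the tail's own relevant price) to capture the tail revenue while the higher-value head consumers, all willing to pay $p_{j^*+1} \le p_{j^*}$, are also captured — so the single price $p_{j^*+1}$ on $\mathcal{G}$ already yields at least $\Pi^M_{\mathcal{G}_{j^*+1}}$ plus $k_1 \cdot p_{j^*+1}$, and comparing against $k_1 p_{j^*}$ forces careful tracking of the gap $k_1(p_{j^*} - p_{j^*+1})$, which is where the additive $p_1$ term must cover the shortfall. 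Getting this telescoping bookkeeping exactly right, using Lemma \ref{subgame_price_sequence} to order the subgame prices, is the delicate step.
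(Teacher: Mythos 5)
There is a genuine gap at exactly the point you flag as ``the delicate step,'' and the route you sketch for closing it does not work. Your induction on $T$ reduces the problem to showing
$k_1\, p_{j^*} + \Pi^M_{\mathcal{G}_{j^*+1}} + p_{j^*+1} \le \Pi^M + p_1$, and your proposed certificate is the single-price lower bound $\Pi^M \ge y_{j^*+1}\cdot p_{j^*+1} = k_1\, p_{j^*+1} + \Pi^M_{\mathcal{G}_{j^*+1}}$, which leaves you needing $k_1(p_{j^*}-p_{j^*+1}) + p_{j^*+1} \le p_1$. This fails already for $T=2$ with valuations $(10,10,1)$: there $p_1=p_2=10$, $p_3=1$, the optimal first-period cutoff is $j^*=2$ (so $k_1=2$, $\Pi^D = 20+1 = 21$), and the required inequality reads $2\cdot 9 + 1 = 19 \le 10$, which is false. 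The underlying problem is that charging the tail price $p_{j^*+1}$ on the full game can be a terrible static strategy (here it certifies only $\Pi^M \ge 3$ against an actual $\Pi^M = 20$), so collapsing the two-price revenue onto that single price loses far more than $p_1$. A secondary issue is that your decomposition $\Pi^D = k_1 p_{j^*} + \Pi(j^*+1,2)$ leans on the recursion \eqref{eq:characterization_prices}, whereas the theorem is asserted for every skimming-property SPNE, not only the strong-Markov ones it characterizes.

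The paper avoids this impasse by routing through the intermediate quantity $\sum_{i=1}^N p_i$ rather than inducting on $T$. It first shows $\Pi^D \le \sum_{i=1}^N p_i$ (Theorem \ref{thm:up-prices}): in any skimming SPNE, consumer $i$ never pays more than $p_i$ before the last period (Lemma \ref{lem_bound_price}), and the last-period revenue, which is a static monopoly revenue on some tail subgame $\mathcal{G}_m$, is itself at most $\sum_{j=m}^N p_j$ (Lemma \ref{lem:End}, an induction on the number of distinct subgame prices). It then shows $\sum_{i=1}^N p_i \le \Pi^M + p_1$ by a short induction on $N$ (Theorem \ref{thm:down-prices}). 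The per-consumer bound $p_i$ is what makes the bookkeeping telescope; your per-period bound $k_1 p_{j^*}$ charges every first-period buyer the cutoff consumer's threat price and cannot recover the gap. Your observation that $p_1 \le \Pi^M$ gives the second inequality is correct and matches the paper.
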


In Section \ref{sec:tight} we show that Theorem \ref{thm:main} is in fact tight, i.e., duropoly profits can
be as close as desired to twice the static monopoly profits. Observe however, that in many situations Theorem \ref{thm:main} suggests that duropoly profits can be only slightly higher than static monopoly profits. For example, this happens when the value of the highest value consumer is negligible with respect to the static monopoly profits, i.e. $p_1 \le v_1 << \Pi^M$.


We prove Theorem \ref{thm:main} in two steps. To describe them, recall we have $N$ consumers with valuations
$v_1\ge v_2\ge \cdots \ge v_N$. Furthermore, $\mathcal{G}_i$ is the market consisting of
consumers $\{i, i+1, \dots, N\}$ and $p_i$ is the static monopoly price for the market $\mathcal{G}_i$. Each $p_i$ is equal to some consumer's value so we will define $y_i$ as the consumer with the smallest index such that $p_i=v(y_i)$.
First we show that
$\Pi^D \le \sum_{i=1}^{N} p_i$, and second we show that $\Pi^M +v_1 \ge \sum_{i=1}^{N} p_i$.
\begin{lemma}\label{thm:up-prices}
The maximum profit of the duropolist satisfies
$\Pi^D \le \sum_{i=1}^{N} p_i$.
\end{lemma}


To prove this we require the following three lemmas.

\begin{lemma}\label{lem_bound_price}
In equilibrium, consumer $i$ never pays more than $p_i$ whenever she buys before the last period.
\end{lemma}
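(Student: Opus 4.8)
The plan is to prove an auxiliary monotonicity statement and then read the lemma off the structure of the equilibrium sales path. The auxiliary claim is that equilibrium prices never exceed the static monopoly price of the subgame in which they are charged: $p(i,t) \le p_i$ for every consumer $i$ and every period $t$. I would establish this by backward induction on $t$. The base case $t=T$ is immediate, since $p(i,T)=p_i$ by definition. For the inductive step, the recursion $p(i,t)=p(j^*(i,t),t+1)$ together with the inductive hypothesis gives $p(i,t)=p(j^*(i,t),t+1)\le p_{j^*(i,t)}$; since $j^*(i,t)\ge i$, Lemma \ref{subgame_price_sequence} yields $p_{j^*(i,t)}\le p_i$, and the claim follows.

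Next I would apply this to a consumer $i$ who buys strictly before the last period. Treating the relevant subgame as a fresh game, the consumers who transact at period $t$ form a consecutive block $\{i_t,\dots,b_t\}$, where $b_t=j^*(i_t,t)$ is the lowest-valued buyer of that period, and all of them are charged the single price $p(i_t,t)$. By the recursion this price equals $p(i_t,t)=p(b_t,t+1)$, i.e. the threat price $\tau(b_t,t)$ of the cutoff consumer. Since $i$ is one of the buyers at period $t$, we have $i\le b_t$, so the chain
$$p(i_t,t) = p(b_t,t+1) \le p_{b_t} \le p_i$$
holds, where the first inequality is the auxiliary claim applied at period $t+1$, and the second is Lemma \ref{subgame_price_sequence} using $b_t\ge i$. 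This is exactly the desired bound: consumer $i$ pays $p(i_t,t)\le p_i$.

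The only subtle point -- and the reason the hypothesis ``before the last period'' is essential -- is that the reduction $p(i_t,t)=p(b_t,t+1)$ requires a genuine next period $t+1$, so that the price charged is a threat price living on the strictly smaller subgame $\mathcal{G}_{b_t}$ rather than a static monopoly price on $\mathcal{G}_{i_t}$. In the final period the price is simply $p_{i_t}$ with $i_t\le i$, which can exceed $p_i$, so the claim genuinely needs $t<T$. I expect the main work to be setting up the block-and-cutoff bookkeeping cleanly (identifying that $i$ pays the threat price of the \emph{lowest}-valued buyer $b_t\ge i$, not of the block's top consumer $i_t\le i$); once that is in place, the two inequalities are an immediate consequence of the monotonicities already established.
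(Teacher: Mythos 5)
Your argument establishes the bound only for the particular strong-Markov equilibrium constructed from the recursion \eqref{eq:characterization_prices}: every step (the auxiliary claim $p(i,t)\le p_i$, the identification of the period-$t$ price with the threat price $p(b_t,t+1)$ of the cutoff buyer $b_t=j^*(i_t,t)$) presupposes that the prices charged are exactly the $p(i,t)$ generated by that recursion. But Lemma \ref{lem_bound_price} is stated for an arbitrary equilibrium and is used that way: it feeds into Lemma \ref{lem:mM} and Theorem \ref{thm:up-prices}, which bound the profit of \emph{any} skimming-property SPNE, and into the Pacman Theorem, where it is applied to a hypothetical surplus-extracting equilibrium that need not follow the recursion at all. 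In a general skimming SPNE nothing forces the period-$t$ price to equal a threat price computed from \eqref{eq:characterization_prices}, so your chain $p(i_t,t)=p(b_t,t+1)\le p_{b_t}\le p_i$ has no starting point. This is a genuine gap of scope, not just of presentation.

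The paper closes it with a best-response (deviation) argument, by induction on the number of periods, that never assumes any structural form for the prices. Suppose some consumer pays more than her static monopoly price before the last period; take $i$ to be the smallest-valued such consumer. The skimming property makes the set of buyers in her period a prefix, and Lemma \ref{subgame_price_sequence} transfers the overpayment to the lowest-valued buyer of that period, so one may assume $i$ is that cutoff buyer. If $i$ instead refuses to buy, the continuation is the subgame $\mathcal{G}_i$ with fewer periods, where either only the final period remains (and she is charged exactly $p_i$) or the induction hypothesis applies; either way she would pay at most $p_i$, contradicting equilibrium. Your computation is correct for the constructed equilibrium and would serve as a sanity check there, but to prove the lemma as stated you need the deviation argument (or something equally equilibrium-agnostic) rather than the recursion.
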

\begin{proof}
We proceed by induction in the number of time periods. For $T=2$, let $A$ be the set of consumers that buy at $t=1$. Suppose for the purpose of contradiction that some consumer $i \in A$ pays a price higher than $p_i$. Because of the skimming property, we have that $A=\{1,\dots,k\}$ for some $k$. By Lemma \ref{subgame_price_sequence}, this price is also  more than $p_k$, so consumer $k$ pays more than $p_k$. But if consumer $k$ refuses to buy at $t=1$, then at $t=2$ the duropolist would charge $p_k$ which is a contradiction since consumer $k$ would have obtained a higher profit by waiting.

Now suppose that the lemma is true for all games of 1 to $T$ periods and consider a game $\mathcal{G}$ of $T+1>2$ periods. Let $E$ denote a subgame perfect Nash equilibrium with the skimming property in $\mathcal{G}$. For the purpose of contradiction, suppose that at some time period $t$ ($t<T+1$), there is at least one consumer $j$ that pays more than $p_j$. Among all those consumers, let $i$ denote the one with the lowest valuation. Since it is not true that at period $t$ consumer $\ell$ pays more than $p_{\ell}$ for $\ell >i$, and by Lemma \ref{subgame_price_sequence} $p_i \geq p_{\ell}$, then consumer $i$ is the lowest valuation consumer that buys at period $t$. Therefore, if consumer $i$ refuses to buy at period $t$ we end in the market $\mathcal{G}_i$ with $T+1-t$ periods. If $T+1-t=1$ (i.e. $t$ was the second to last period), the duropolist will charge the price $p_i$ at the last period. If $T+1-t>1$, it holds by the induction hypothesis that consumer $i$ would never pay more than $p_i$. Thus, we can conclude that such equilibrium $E$ cannot exist as consumer $i$ would have obtained a higher profit by waiting.
\end{proof}

\begin{lemma}\label{lem:mM}
The maximum revenue of the duropolist satisfies
$$\Pi^D \le \max_{m\le N}\, \left((y_m-m+1)\cdot v(y_m) + \sum_{i=1}^{m-1} p_i\right)$$
\end{lemma}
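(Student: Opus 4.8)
The plan is to decompose the duropolist's equilibrium revenue into revenue collected strictly before the final period and revenue collected in the final period, and to bound each piece separately. Fix a profit-maximizing skimming-property SPNE attaining $\Pi^D$. By the skimming property, together with the fact that all consumers face the same (non-discriminatory) price in each period, higher-value consumers buy no later than lower-value consumers; hence the set of consumers still present at the start of the last period $T$ is a suffix $\{m, m+1, \dots, N\}$ for some index $m$, and consumers $\{1,\dots,m-1\}$ have all purchased strictly before period $T$.

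First I would bound the revenue from the early buyers. Each consumer $i \in \{1,\dots,m-1\}$ buys before the last period, so Lemma \ref{lem_bound_price} guarantees that consumer $i$ pays at most $p_i$. Summing over these consumers, the total revenue collected before period $T$ is at most $\sum_{i=1}^{m-1} p_i$. Next I would bound the last-period revenue: in period $T$ the only consumers present are $\{m,\dots,N\}$, and any single-price sale to a subset of them yields at most the static monopoly profit of the subgame $\mathcal{G}_m$, which by definition is $\Pi(m,T) = (y_m - m + 1)\cdot v(y_m)$ (recall $p_m = v(y_m)$ with $y_m = j^*(m,T)$). Adding the two contributions gives
$$\Pi^D \le (y_m - m + 1)\cdot v(y_m) + \sum_{i=1}^{m-1} p_i,$$
and since the index $m$ determined by the equilibrium satisfies $m \le N$, the claimed bound follows by maximizing the right-hand side over all $m \le N$.

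The remaining point to handle is the edge case in which every consumer purchases before the final period, leaving no last-period sale; this formally corresponds to $m = N+1$, which lies outside the range of the maximum. I would subsume it into the $m = N$ term: since $\mathcal{G}_N$ consists of the single consumer $N$, we have $y_N = N$ and $p_N = v_N$, so the $m = N$ term equals $\sum_{i=1}^{N} p_i$, which already upper-bounds the all-early-buyers revenue $\sum_{i=1}^{N} p_i$ obtained from Lemma \ref{lem_bound_price}. Thus no term is lost.

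Since Lemma \ref{lem_bound_price} and the definition of the static monopoly profit $\Pi(m,T)$ do the substantive work, I do not expect a serious analytic obstacle here; the only step requiring genuine care is establishing the clean suffix structure of the remaining consumers from the skimming property, and in particular handling ties in consumer values. I would address ties by relabelling equal-valued consumers so that the early buyers always form an initial segment $\{1,\dots,m-1\}$, which is legitimate because Lemma \ref{lem_bound_price} bounds each buyer's payment individually and so is insensitive to how a block of equal-valued consumers is split across periods.
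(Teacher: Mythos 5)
Your proposal is correct and follows essentially the same route as the paper: decompose revenue into pre-final-period payments (bounded termwise by $p_i$ via Lemma \ref{lem_bound_price}) and final-period revenue (equal to the static monopoly profit of the remaining suffix subgame $\mathcal{G}_m$), then maximize over $m$. Your explicit handling of the ``everyone buys early'' edge case and of ties among equal-valued consumers is a minor tidying of details the paper leaves implicit, not a different argument.
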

\begin{proof}
Consider the subgame perfection conditions. In the final time period $T$, consumer $i$ is willing to pay up to $v_i$.
In periods $1$ to $T-1$, consumer $i$ is willing to pay up $p_i=v(y_i)$, the static monopoly price for the market $\mathcal{G}_i$.

Suppose that in the optimal solution, the duropolist sells to consumers $\{m, m+1,\dots, M\}$ where $1\le m \le M\le N$ in the final
period $T$. Since consumer $M=y_m$ buys in the final period, the revenue then is exactly $(y_m-m+1)\cdot v(y_m)$.
By Lemma \ref{lem_bound_price}, consumers who buy in earlier periods, that is consumers $\{1,2,\dots m-1\}$, pay at most their static monopoly prices. Therefore,
the maximum revenue is upper bounded by
$$(y_m-m+1)\cdot v(y_m)+ \sum_{i=1}^{m-1} p_i$$
The result follows by taking the maximum over all consumers $m$.
\end{proof}

\begin{lemma}\label{lem:End} The static monopoly revenue for the market $\mathcal{G}_m$ is at most $$\sum_{j=m}^N p_j$$
\end{lemma}
\begin{proof}
Without loss of generality, by re-indexing so that $m=1$, it suffices to show that
\begin{equation} \label{eq:A1}
\Pi^M = y_1\cdot v(y_1) \le \sum_{j=1}^N p_j
\end{equation}
Let $C= \{ p_j: j = 1,\dots,N \}$. We proceed by induction on $|C|$, that is, the number of distinct static monopoly prices
over all the markets $\mathcal{G}_j$. For the base case, $|C|=1$, we have $p_1=p_j$ for all $j$. Thus $p_1=p_N=v_N$ and $y_1=N$.
Every consumer then pays $v_N$ and the total revenue is
$$y_1\cdot v(y_1)= N\cdot v_N = \sum_{j=1}^N p_j$$

Assume the proposition holds for $|C|=k-1 \geq 1$. Now take the case $|C|=k$.
Let consumer $l$ be the highest index consumer in the original game with $p_{l} = p_1$.
Thus $p_{l+1} < p_1 = v(y_1)$.
By the induction hypothesis, applied to the market $\mathcal{G}_{l+1}$ on consumers $\{l+1,l+2,\dots,N\}$,
we have
\begin{eqnarray}\label{eq:induction_1}
\sum_{i=l+1}^N p_i &\geq& (y_{l+1}- l) \cdot v(y_{l+1})
\end{eqnarray}
Consequently,
\begin{eqnarray*}
\sum_{i=1}^N p_i &=&  l \cdot p_1+ \sum_{i=l+1}^N p_i \\
&\geq& l \cdot p_1  + (y_{l+1}- l) \cdot v(y_{l+1}) \\
&\geq&  l \cdot p_1  + (y_{l}- l)\cdot v(y_{l}) \\
&=& l \cdot v(y_l)+ (y_{l}- l)\cdot v(y_{l})\\
&=& y_{l} \cdot v(y_l) \\
&=& y_{1} \cdot v(y_1)
\end{eqnarray*}
Here the first equality follows by definition of $l$. The first inequality follows by
(\ref{eq:induction_1}). The second inequality holds as $v(y_{l+1})$ is
the static monopoly price for the market $\mathcal{G}_{l+1}$.
The final three equalities follow by definition of $l$. That is $p_1=p_l$ and so $y_l = y_1$.

This shows that (\ref{eq:A1}) holds as desired.
\end{proof}

\noindent{\em Proof of Lemma \ref{thm:up-prices}.}
Combine Lemma \ref{lem:mM} and Lemma \ref{lem:End}. \qed

%
%
%
%

It remains to prove the upper bound on $\sum_{i=1}^N p_i$.

\begin{lemma}\label{thm:down-prices}
$\sum_{i=1}^N p_i \le \Pi^M +p_1$
\end{lemma}
\begin{proof}
We proceed by induction on $N$. For games with a single consumer, the statement is trivially true.
Recall that $p_i=v(y_i)$ is the static monopoly price for the market $\mathcal{G}_i$ on consumers $\{i, i+1, \dots, N\}$ and $y_i$ is the index of the lowest value consumer whose value is not less than $p_i$.
Consider now a game $\mathcal{G}$ with $N+1$ consumers. It remains to show that $$\sum_{i=1}^{N+1} p_i \le v(y_1) +  y_1 \cdot v(y_1).$$

We proceed as follows:
\begin{eqnarray}\label{eq:plug}
\sum_{i=1}^{N+1} p_i
& = & v(y_1) + \sum_{i=2}^{N+1} p_i \nonumber\\
& \leq & v(y_1) + v(y_2) + (y_2-1) \cdot v(y_2)  \label{thm:induction_hypothesis}\\
& = & v(y_1) +  y_2 \cdot v(y_2)  \nonumber\\
& \le & v(y_1) +  y_1 \cdot v(y_1) \label{thm:optimality_p_1}
\end{eqnarray}
Here equation (\ref{thm:induction_hypothesis}) follows by the induction hypothesis and inequality (\ref{thm:optimality_p_1}) comes from the fact that $v(y_1)$ is the static monopoly price of the market $\mathcal{G}_1$.
\end{proof}

\noindent {\em Proof of Theorem \ref{thm:main}.}
The result follows from Lemmas \ref{thm:up-prices} and \ref{thm:down-prices}. \qed

%
%

\subsection{A Tight Example} \label{sec:tight}

The factor $2$ bound is tight. There are examples in which the duropolist can extract twice as
much profit as the equivalent static monopolist. To see this, assume
that $T=2$ and set $v_j=v_H$ for $1\le j \le k$ and $v_j=v_L= \frac{1}{n-k+1}\cdot v_H$ for all $k+1\le j\le n$.
The optimal solution is to charge $V_H$ in the first period and $V_L$ in the
last period. The high value consumers will buy in the first period and the
low value consumers in the last period. It can be verified that this solution satisfies
the equilibria conditions.

The total revenue is then
\begin{eqnarray*}
k\cdot v_H + (n-k)\cdot v_L &=&
k\cdot v_H +  \frac{n-k}{n-k+1} \cdot v_H \\
&=& \Pi^M + \left(1-\frac{1}{n-k+1}\right)\cdot v_H
\end{eqnarray*}
Thus in the limit $n \gg k$ we obtain $\Pi^M+v_{H}$. As $v_H=p_1$, the
duropoly profits exceed static monopoly profits by an additive amount approaching static monopoly profits.
Moreover, in the case $k=1$ we have that $\Pi^M= v_{H}$
and, thus, duropoly profits are twice static monopoly profits.

\section{Non-Skimming Equilibria}\label{sec:NoSkimSum}

In this section, we discuss equilibria that do not satisfy the skimming property. For non-atomic consumers, it is easy to
show that such  equilibria cannot exist. However, for atomic consumers, subgame perfect Nash equilibria that do not satisfy the
skimming property can exist. To see this, consider the following three-consumer, two period problem.\footnote{One can also show that a non-skimming equilibrium exists in this example if we introduce a discount factor with value strictly less than 1.}

\begin{table}[h]
\centering
\caption{An Example with a Non-Skimming Equilibrium}
\begin{tabular}{|M|M|M|}

\hline
Consumer valuation & Period 1 Strategy & Period 2 Strategy \\
\hline
80 & Buy if $\mu_1 \le 45$ & Buy if $\mu_2 \le 80$\\
\hline
70 & Buy if $\mu_1 \le 70$ & Buy if $\mu_2 \le 70$\\
\hline
45 & Buy if $\mu_1 \le45$ & Buy if $\mu_2 \le 45$\\
\hline
\hline
Monopolist & $\mu_1=70$ & Charge Static Monopoly Price (for remaining consumers)\\
\hline
\end{tabular}
\label{ExSkim}
\end{table}

Because period 2 is the last period of the game, all the consumers who did not buy in period~1 have a dominant strategy
which is ``buy iff $\mu_2 \le v$". The duropolist's best response to this
is to charge a value $\mu_2$ equal to the static monopoly price for the subgame with all consumers who
did not buy in period 1. It is simple to check that all of the consumers' strategies are mutual best responses. The period 1 price is also the price which maximizes revenue for the duropolist, given the consumer strategies.
So this is a subgame perfect Nash equilibrium. However, in this equilibrium, the consumer with value 70 buys in period 1,
while the consumer with value 80 buys in period 2 (where $p_2 = 45$). Therefore the skimming property is not satisfied. Also note that the consumer with the second highest value pays his full value, which is significantly more than $p_2$, the static monopoly price in the market $\mathcal{G}_2$ ($p_2=45$).

Note, however,  that if we swapped the period 1 strategies of the top two consumers, we would still have an
equilibrium, and one which satisfies the skimming property. We can prove that this is always the case.

\begin{theorem}\label{lemma:2Skim}
In a 2 period game, for every strategy equilibrium which does not satisfy the skimming property, there is a corresponding equilibrium at the same prices which satisfies the skimming property and earns the same revenue for the duropolist.
\end{theorem}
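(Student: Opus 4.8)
The plan is to prove this by an exchange (swap) argument: starting from the given non-skimming equilibrium, I repeatedly swap the first-period decisions of an adjacently mis-ordered pair of consumers, reducing the number of skimming violations by one at each step while preserving the posted prices, the revenue, and the equilibrium property. Since a $2$-period game has period $2$ in dominant strategies (every remaining consumer buys iff the price is at most her value, and the duropolist posts the static monopoly price of the remaining set), an equilibrium is completely described by the first-period price $\mu_1$ and the set $A$ of first-period buyers; write $k_1=|A|$, $R=[N]\setminus A$, and $\mu_2=p(R)$. I would first record the two facts I need from subgame perfection: every first-period buyer has value at least $\mu_1$, and $\mu_2\le\mu_1$ (otherwise a high-value waiter could profitably buy early), so prices are non-increasing. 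A skimming violation then means some waiter has strictly higher value than some first-period buyer; choosing an \emph{adjacent} inverted pair $i,i+1$ (with $v_i\ge v_{i+1}$, $i$ waiting and $i+1$ buying) is the unit of the induction.

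The swap exchanges the first-period actions of $i$ and $i+1$, so the new remaining set is $R_1=R-i+(i+1)$, i.e.\ a value-$v_i$ consumer is replaced by a value-$v_{i+1}$ one, both of which exceed $\mu_2$. The key lemma I would isolate is that exchanging two consumers whose values both lie above $\mu_2$ leaves the second-period static monopoly price and revenue unchanged: the number of remaining consumers with value $\ge\mu_2$ and the entire sub-$\mu_2$ part of the market are untouched, so selling to the $\ge\mu_2$ group at price $\mu_2$ still earns the same total, and a majorization comparison (via monotonicity of static monopoly prices, Lemma \ref{lem:monotone}) pins the optimal price of $R_1$ at exactly $\mu_2$. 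Hence $p(R_1)=\mu_2$, the period-$2$ revenue is preserved, the period-$1$ revenue $k_1\mu_1$ is unchanged, and both posted prices are identical, giving the required equality of prices and revenue.

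It then remains to check that the swapped profile is still a subgame perfect equilibrium. For the swapped pair the incentives transfer cleanly: consumer $i$ now buys, and she is willing to because consumer $i+1$ was willing to buy in the original equilibrium, which forces $p(R\cup\{i+1\})\ge\mu_1$, and since $v_i\ge v_{i+1}\ge\mu_1$ the same threat price makes buying at $\mu_1$ at least as good as waiting; consumer $i+1$ now waits, and she is content to because $p(R_1)=\mu_2\le\mu_1$, so waiting and paying $\mu_2$ dominates buying at $\mu_1$. For every other first-period buyer $u$ the only change in her deviation subgame $R\cup\{u\}$ is the same bounded exchange $v_i\mapsto v_{i+1}$ among values $\ge\mu_1$; a short case analysis on the position of $p(R\cup\{u\})$ relative to $v_{i+1}\le v_i$, again using Lemma \ref{lem:monotone}, shows that $p(R_1\cup\{u\})\ge\mu_1$ is maintained, so $u$ still prefers to buy, and every other waiter still faces the same second-period price $\mu_2\le\mu_1$. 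Because the duropolist's strategy is literally unchanged and her revenue at every first-period price is unaffected, $\mu_1$ remains an optimal first-period price.

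The induction then finishes the proof: each swap strictly decreases the number of inverted pairs while keeping a valid equilibrium at the same prices and revenue, so after finitely many swaps no inversions remain and the resulting equilibrium satisfies the skimming property. I expect the main obstacle to be the invariance lemma together with the ``other buyers are undisturbed'' step — precisely the claim that moving a higher-value consumer out of, and a lower-value consumer into, the first-period cohort cannot push any relevant threat (static monopoly) price below $\mu_1$. This is exactly where the structural constraint $\mu_2\le\mu_1$ does the work, since it guarantees that the swapped values sit above the second-period price and hence that the monopoly prices governing the incentive constraints are controlled by the unchanged lower portion of the market.
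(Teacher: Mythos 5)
Your proposal is correct and is essentially the paper's own argument: an exchange of the period-1 actions of an inverted pair, an invariance claim showing the period-2 static monopoly price (and hence revenue) is unchanged by the swap, and a check that every other period-1 buyer's threat price stays at or above $\mu_1$, iterated until no inversions remain (the paper swaps the highest-valued waiter with the lowest-valued buyer below her rather than an adjacent pair, an immaterial difference). One caution on the step you rightly flag as the main obstacle: Lemma \ref{lem:monotone} concerns the multi-period recursion prices and is not the right instrument there --- what is actually needed are the elementary static-monopoly-price facts under replacing or inserting a single value (the paper's Claims \ref{claim:smp1} and \ref{claim:smp2}), and in the subcase where $v_u\ge v_i$ and $p(R\cup\{u\})=v_i$ the required bound $p(R_1\cup\{u\})\ge\mu_1$ follows by comparing against the swapped-in buyer's constraint $p(R\cup\{i+1\})\ge\mu_1$ rather than against $u$'s own original threat price or the bound $\mu_2\le\mu_1$ alone.
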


To prove this theorem, we first require some results about static monopoly prices.

%

\begin{claim}\label{claim:smp1}
Let $V= \{v_1,v_2,\hdots,v_N\}$ be a set of valuations $v_1 \geq v_2 \hdots \ge v_N$.
Let $k = \arg \max_i iv_i$ and $p_V^{sm} =v_k$, the static monopoly price calculated for the set $V$. Now let $V'$ be the set $V$ with some $v_j$ replaced by a value $v_j > v' \geq v_k$, and re-index the set so $v'_1 \geq v'_2 \hdots \geq v'_n$.  Then $p_{V'}^{sm} = v_k$.
If, instead, $v' \geq v_j > v_k$ and we re-index the set so $v'_1 \geq v'_2 \hdots \geq v'_N$, then we still have $p_{V'}^{sm} \geq v_k$.
\end{claim}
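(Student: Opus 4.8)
The plan is to work with the revenue function $R_W(p) := p \cdot |\{w \in W : w \geq p\}|$ defined for an arbitrary multiset of valuations $W$ and any price $p \geq 0$; the static monopoly price $p_W^{sm}$ is then a maximizer of $R_W$. The crucial preliminary observation is that $R_W(p) \leq \max_i i\, v_i$ for \emph{every} price $p$, not merely for $p$ equal to a valuation: if exactly $m$ consumers have value at least $p$, then $p \leq v_m$ (the $m$-th largest value), so $R_W(p) = m p \leq m v_m \leq \max_i i\, v_i$. Applied to the original set $V$ this gives $R_V(p) \leq k v_k$ for all $p$, with equality at $p = v_k$. Everything then reduces to controlling how $R$ changes under a single-coordinate modification.

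For the first statement I would replace $v_j$ by $v'$ with $v_k \leq v' < v_j$ and split on the price $p$. If $p \leq v'$ or $p > v_j$, the modified consumer lies on the same side of the threshold before and after, so the buyer count is unchanged and $R_{V'}(p) = R_V(p) \leq k v_k$. If $v' < p \leq v_j$, the modified consumer stops buying, so $R_{V'}(p) = R_V(p) - p \leq k v_k$. Since $v' \geq v_k$ the modified consumer still buys at price $v_k$ (as it did before), so the buyer count at $v_k$ is preserved and $R_{V'}(v_k) = k v_k$. Hence the maximum revenue on $V'$ equals $k v_k$ and is attained at $v_k$, giving $p_{V'}^{sm} = v_k$.

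For the second statement, with $v' \geq v_j > v_k$, raising a single valuation can only increase the revenue function: $R_{V'}(p) \geq R_V(p)$ for all $p$, with equality except on the interval $v_j < p \leq v'$. The point is to rule out an optimal price below $v_k$. For any $p < v_k$ we have $p < v_k < v_j$, so $p$ lies outside the exceptional interval and $R_{V'}(p) = R_V(p) \leq k v_k$. On the other hand the modified consumer, having value $v' \geq v_j > v_k$, still buys at $v_k$, so $R_{V'}(v_k) = k v_k$. Thus no price strictly below $v_k$ yields more revenue than $v_k$ itself, so a maximizer can be taken to be at least $v_k$, i.e.\ $p_{V'}^{sm} \geq v_k$.

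The main obstacle is essentially bookkeeping: one must track precisely how the count $|\{w \geq p\}|$ changes after the single-coordinate modification and re-indexing, and be careful about the tie-breaking convention in the definition of $\arg\max$ so that the attained maximizer is recorded as exactly $v_k$ (first statement) or at least $v_k$ (second statement). The substantive content---that $R_W(p) \leq \max_i i\, v_i$ holds for \emph{all} prices and that a one-coordinate change perturbs $R_W$ only on a controlled price interval---is shared by both statements.
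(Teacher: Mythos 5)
Your proof is correct. The paper's own argument works in the index domain: for the first statement it notes that lowering one valuation makes every order statistic weakly decrease ($v'_i \le v_i$ for all $i$) while leaving $v'_i = v_i$ whenever $v_i \le v'$, so $k v'_k = k v_k \ge i v_i \ge i v'_i$ for all $i$; for the second statement it uses that raising a valuation strictly above $v_k$ leaves the tail $v'_i = v_i$ for $i \ge k$ unchanged, so no index larger than $k$ can become optimal. You instead work in the price domain with the revenue function $R_W(p) = p\,\lvert\{w \ge p\}\rvert$, prove the uniform bound $R_W(p) \le \max_i i v_i$ for \emph{all} prices, and track the single interval of prices on which a one-coordinate modification changes the buyer count. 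The two arguments encode the same monotonicity fact; yours buys freedom from re-indexing and from reasoning about how sorted positions shift (the most fiddly part of the paper's version), at the cost of an extra case split on price intervals. Both your proof and the paper's leave the same mild looseness about tie-breaking in the $\arg\max$ (you explicitly flag it, the paper silently assumes it away); this is harmless for the way the claim is subsequently used.
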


\begin{proof}
First suppose $v' <v_j$. Note that $v'_i \leq v_i$ for every $i$. Therefore, by definition of $k$, $kv_k \geq iv_i \geq iv'_i$ for all $i$.
For all $i$ such that $v_i \leq v'$, we still have $v'_i = v_i$, so therefore $v_k = v'_k$ and $kv'_k \geq iv'_i$ for all $i$. Therefore $k = \arg \max_i iv'_i$ and $p_{V'}^{sm} = v_k$.

Next assume $v' \geq v_j$. Note that, because $v_j >v_k$, we still have $v_i '=  v_i$ for every $i \geq k$, and as $kv_k \geq iv_i$ for all $i \geq k$,
 it follows that the index which maximizes $lv_l$ can be no larger than $k$. Therefore $p_{V'}^{sm} \geq v'_k = v_k$.
\end{proof}

\begin{claim}\label{claim:smp2}
Let $V= \{v_1,v_2,\hdots,v_N\}$ be a set of valuations $v_1 \geq v_2 \hdots \ge v_N$.
Let $k = \arg \max_i iv_i$ and $p_V^{sm} =v_k$. Now add a value $v' \ge v_k$ to the set of valuations, call the new set $V'$
and re-index the set so  $v'_1 \geq v'_2 \hdots \geq v'_{N+1}$. Then $p_V^{sm} \le p_{V'}^{sm} \leq v'$.
\end{claim}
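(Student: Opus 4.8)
The plan is to argue entirely at the level of posted-price revenue. For a finite (multi)set of valuations $W$, let $R_W(p) = p\cdot|\{w\in W: w\ge p\}|$ be the revenue from posting price $p$, so that the static monopoly revenue is $\max_p R_W(p)$ and is attained at $p=p_W^{sm}$. First I would record the standard observation that this maximum is always attained at a price equal to some valuation, so that $R_V(p)\le\rho$ for \emph{every} real price $p$, where $\rho:=\max_p R_V(p)$ is the static monopoly revenue of $V$, attained at $p=v_k$.

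The key step is a decomposition of $R_{V'}$ in terms of $R_V$ according to whether the newly added consumer buys at price $p$: if $p\le v'$ then $R_{V'}(p)=R_V(p)+p$, and if $p>v'$ then $R_{V'}(p)=R_V(p)$. Using the hypothesis $v'\ge v_k$, the new consumer buys at the \emph{pivot} price $p=v_k$, so $R_{V'}(v_k)=R_V(v_k)+v_k=\rho+v_k$. Both inequalities of the claim then fall out by comparing every competing price against this single pivot.

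For the lower bound $p_V^{sm}=v_k\le p_{V'}^{sm}$: any price $p<v_k$ satisfies $R_{V'}(p)=R_V(p)+p\le\rho+p<\rho+v_k=R_{V'}(v_k)$, so no price strictly below $v_k$ is optimal for $V'$. For the upper bound $p_{V'}^{sm}\le v'$: any price $p>v'$ satisfies $R_{V'}(p)=R_V(p)\le\rho<\rho+v_k=R_{V'}(v_k)$, and since the pivot $v_k\le v'$ is itself an admissible price earning strictly more, no price above $v'$ is optimal. Hence the optimum for $V'$ lies in $[v_k,v']$, which is exactly the statement.

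Since the argument is short, the difficulty is bookkeeping rather than a genuine obstacle. The point I would be most careful about is justifying that the static monopoly price may be taken as $\arg\max_p R_V(p)$ over all real prices and that it suffices to consider posted prices equal to valuations, so that ``the optimal price'' and the re-indexed value $v'_{k'}$ returned by the $\arg\max$ on $V'$ denote the same quantity. The second point is that both comparisons are \emph{strict} --- from $p<v_k$ in the lower bound and from $v_k>0$ in the upper bound --- which is what allows me to \emph{exclude} optimal prices outside $[v_k,v']$ regardless of how ties in the defining $\arg\max$ are broken when $V'$ is re-indexed.
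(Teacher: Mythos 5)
Your proof is correct and takes essentially the same route as the paper's: both compare every competing price against the pivot $v_k$, whose revenue in $V'$ is $kv_k+v_k$, via the inequalities $jv_j\le kv_k<(k+1)v_k$ (to exclude prices above $v'$) and $(k+1)v_k\ge (j+1)v_j$ for $j\ge k$ (to exclude prices below $v_k$). Your revenue-function decomposition $R_{V'}(p)=R_V(p)+p$ for $p\le v'$ is just a cleaner packaging of the paper's index bookkeeping, and the strict inequality you flag as needing $v_k>0$ is the same one the paper uses implicitly.
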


\begin{proof}
Let $v' = v'_l$ (i.e. $v'$ is the $l$-th highest value in $V'$). We have that $v'_i = v_{i-1}$ for $i>l$. The result follows immediately as for all $j <l$, $j v_j \leq k v_k < (k+1) v_k = (k+1) v'_{k+1}$, so regardless of what the maximizing index actually is, it cannot be less than $l$, and therefore $p_{V'}^{sm} \leq v_l =v'$. But we also have that $(k+1) v'_{k+1} = (k+1)v_k \geq (j+1)v_j = (j+1) v'_{j+1}$ for all $j \geq k$. Therefore the maximizing index cannot be more that $k+1$, and thus $v'_{k+1} = v_k = p_V^{sm} \leq p_{V'}^{sm}$.
\end{proof}

Consider an equilibrium $\mathcal{E}$ which violates the skimming property. The violation must occur in period 1, as in the second period the dominant strategy followed by the consumers ensures that there is a cut-off value above which all remaining consumers buy. We must then have at least one pair of consumers which we denote by their values $v,w$ with $w>v$, where $v$ buys in
period 1  and $w$ does not buy in period 1. Without loss of generality, let  $w$ be the highest valued consumer who
doesn't buy in period 1, and $v$ be the lowest
valued consumer who buys with value below $w$. We denote by $E$ the set of consumers under the equilibrium $\mathcal{E}$ who did
not buy in period 1. Then $w \in E$ and $v \notin E$.

The period 2 price, $\mu_2(E)$, is fixed by the fact that
it is a dominated strategy to charge the static monopoly price of the remaining consumers.
For a consumer $y\notin E $ who buys in period 1, we will also need to compare $\mu_1(E)$ against its  threat price, $\mu_2(E^y)$.
Here $E^y = E \cup \{y\}$, and so $\mu_2(E^y)$ is the period 2 price $y$ would face if he did not buy.

\begin{lemma}\label{lem:eq}
At the equilibrium $\mathcal{E}$, we have
\begin{equation}
w >v \ge \mu_2(E^v) \ge \mu_1(E)  \ge \mu_2(E)
\end{equation}
\end{lemma}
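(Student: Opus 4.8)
The plan is to verify the four-link chain from right to left, reading each inequality off either a rationality condition for one of the two distinguished consumers or the price-monotonicity established in Claim \ref{claim:smp2}. The key preliminary observation is that $w = \max E$: since $E$ is exactly the set of consumers who do not buy in period~1 and $w$ was chosen as the highest-valued such consumer, no member of $E$ exceeds $w$. Because the period-2 price $\mu_2(E)$ is the static monopoly price of $E$, it is at most $\max E = w$, so consumer $w$ would indeed purchase if she reached period~2.

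First I would establish the rightmost inequality $\mu_1(E) \ge \mu_2(E)$ from the rationality of $w$. Consumer $w$ declines to buy in period~1 at price $\mu_1(E)$ yet would buy in period~2 at $\mu_2(E) \le w$; for waiting to be weakly optimal we need $w - \mu_2(E) \ge w - \mu_1(E)$, i.e. $\mu_1(E) \ge \mu_2(E)$ (the degenerate case $\mu_1(E) > w$ gives the same conclusion, since then $\mu_1(E) > w \ge \mu_2(E)$). Next, since consumer $v$ actually buys in period~1 she obtains non-negative utility, so $v \ge \mu_1(E)$; combining with the previous step yields $v \ge \mu_2(E)$.

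The central step is the second inequality $v \ge \mu_2(E^v)$, and this is where I would invoke Claim \ref{claim:smp2}. The set $E^v$ is obtained from $E$ by inserting the single value $v$, and we have just shown $v \ge \mu_2(E) = p_E^{sm}$. Applying Claim \ref{claim:smp2} with $V = E$ and added value $v' = v$ gives $\mu_2(E) \le \mu_2(E^v) \le v$, which simultaneously delivers $v \ge \mu_2(E^v)$ and the monotonicity $\mu_2(E^v) \ge \mu_2(E)$. Finally, the third inequality $\mu_2(E^v) \ge \mu_1(E)$ follows from $v$'s preference to buy rather than wait: a deviating $v$ would face the set $E^v$ in period~2, so equilibrium requires $v - \mu_1(E) \ge v - \mu_2(E^v)$ whenever $v \ge \mu_2(E^v)$ (and the inequality is immediate otherwise, since then $\mu_1(E) \le v < \mu_2(E^v)$). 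The leftmost inequality $w > v$ is the defining property of the chosen pair.

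I expect the main obstacle to be the central step $v \ge \mu_2(E^v)$: it cannot be read off consumer $v$'s incentives directly and genuinely needs the price-monotonicity of Claim \ref{claim:smp2} together with the prior derivation of $v \ge \mu_2(E)$. I would be careful to order the arguments so that $v \ge \mu_2(E^v)$ is obtained from the $w$-based inequality and $v$'s non-negative utility alone, and only afterwards use it in the $\mu_2(E^v) \ge \mu_1(E)$ step, thereby avoiding any circular dependence.
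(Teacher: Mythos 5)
Your proof is correct and follows essentially the same route as the paper's: the rightmost inequality from $w$'s unwillingness to buy in period~1, $\mu_1(E)\le v$ and $\mu_1(E)\le\mu_2(E^v)$ from $v$'s equilibrium conditions, and Claim \ref{claim:smp2} applied to $V=E$ with added value $v\ge\mu_2(E)$ to obtain $v\ge\mu_2(E^v)$. Your extra care with the degenerate cases ($\mu_1(E)>w$ and $v<\mu_2(E^v)$) is harmless but not needed, since $\mu_1(E)\le v<w$ already rules them out.
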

\begin{proof}
By assumption $w>v$.
Consumer $v$ wants to buy in the first period.
The equilibrium conditions then imply that (i) $\mu_1(E)\le  v$ and (ii) $\mu_1(E)\le \mu_2(E^v)$.
Consumer $w$ does not want to buy in the first period. But since the first period price is less than his value,
he would be willling to buy in the first period if $\mu_2(E) > \mu_1(E)$. The equilibrium conditions then imply
that $\mu_2(E)\le \mu_1(E) <w$ (and $w$ does buy in the second period).
It only remains to prove that $v \ge \mu_2(E^v)$. We have seen that  $v \ge \mu_2(E)$.
But then, by Claim \ref{claim:smp2}, we obtain $v \ge \mu_2(E^v)$.
\end{proof}

To prove Theorem \ref{lemma:2Skim}, we will show that with the same period 1 price, the strategy
profile $\mathcal{S}$ which consists of $v$ and $w$
swapping actions from $\mathcal{E}$ (and all other consumer strategies remaining the same) is also an equilibrium.
We will also show that the period 2 price stays the same after swapping. By repeated
swapping, we will remove all pairs where the lower valued consumer buys before the higher valued
consumer, until  we reach an equilibrium where the skimming property is satisfied, while the prices are unchanged.
So let $S = \{E - w\} \cup \{v\}$ be the set of consumers who did not buy in period 1 under the swapped strategy $\mathcal{S}$.

\begin{lemma}\label{lem:eq2}
For the strategy profiles $\mathcal{E}$ and $\mathcal{S}$, we have
\begin{equation}
w >v \ge \mu_2(E^v) = \mu_2(S^w) \ge \mu_1(E) = \mu_1(S) \ge \mu_2(E) = \mu_2(S)
\end{equation}
\end{lemma}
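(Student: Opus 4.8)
The goal is to verify the chain of equalities and inequalities for the swapped profile $\mathcal{S}$, where $S = \{E - w\} \cup \{v\}$. The inequalities $w > v \ge \mu_2(E^v) \ge \mu_1(E) \ge \mu_2(E)$ are exactly Lemma \ref{lem:eq}, so nothing new is needed there; likewise $\mu_1(E) = \mu_1(S)$ is immediate because swapping the two consumers does not change the announced first-period price (the duropolist's period-1 price is unchanged by construction). The real content is the two equalities involving period-2 prices, namely $\mu_2(E^v) = \mu_2(S^w)$ and $\mu_2(E) = \mu_2(S)$. The plan is to observe that $S^w = \{E - w\} \cup \{v\} \cup \{w\} = E \cup \{v\} = E^v$ as \emph{sets of consumers}, so $\mu_2(S^w) = \mu_2(E^v)$ holds trivially as both sides are the static monopoly price of the identical consumer set. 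This is the key set-theoretic identity to state explicitly.

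The remaining equality $\mu_2(E) = \mu_2(S)$ is the substantive one, since $E$ and $S$ are genuinely different sets: $E$ contains $w$ but not $v$, while $S$ contains $v$ but not $w$. Both sets agree on all consumers other than this single swapped element, and $w > v$. The plan is to apply Claim \ref{claim:smp1}: starting from the set $S$ (which contains the lower value $v$) and replacing $v$ by the higher value $w$ yields $E$. By Lemma \ref{lem:eq} we know $\mu_2(E) \le v$, i.e. the static monopoly price of $E$ is at most $v$; since replacing $w$ by the smaller $v$ keeps every consumer with value at or above the monopoly price unchanged, Claim \ref{claim:smp1} (first part) gives that the monopoly price is preserved under this replacement. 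I would invoke the claim in the direction that makes $v_k = \mu_2(E) \le v = v_j$ the threshold, concluding $\mu_2(S) = \mu_2(E)$.

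The main obstacle is bookkeeping: one must pin down which of the two consumers $v$ or $w$ plays the role of the ``replaced'' value $v_j$ and which plays the replacement $v'$ in Claim \ref{claim:smp1}, and verify the hypothesis $v_j > v' \ge v_k$ (or its variant) is genuinely met rather than assumed. The safe route is to set the monopoly price $v_k := \mu_2(E)$; since $w$ buys in period 2 at this price and $v \ge \mu_2(E) = v_k$ by Lemma \ref{lem:eq}, both $v$ and $w$ lie at or above $v_k$, so deleting $w$ and inserting the value $v$ (with $v \ge v_k$) leaves the maximizing index and hence the monopoly price unchanged. Once these two equalities are established, the full chain for $\mathcal{S}$ follows by substituting them into the chain already proved for $\mathcal{E}$ in Lemma \ref{lem:eq}, completing the proof.
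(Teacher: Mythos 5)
Your proposal is correct and follows essentially the same route as the paper: the inequalities and $\mu_1(E)=\mu_1(S)$ are handled identically, $\mu_2(S^w)=\mu_2(E^v)$ comes from the set identity $S^w=E^v$, and $\mu_2(S)=\mu_2(E)$ is obtained by applying the first case of Claim \ref{claim:smp1} to the replacement of $w$ by $v$, using $v\ge\mu_2(E)$ from Lemma \ref{lem:eq} to verify its hypothesis. Nothing further is needed.
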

\begin{proof}
The inequalities follow from Lemma \ref{lem:eq}. Thus it remains to prove the three equalities.
Since $S$ and $E$ differ in exactly the pair $\{w, v\}$ we have that $S^w=E^v$.
The equilibrium constraints imply the duropolist applies static monopolist pricing in the
final period; thus $\mu_2(S^w)=\mu_2(E^v)$.
Next, in the proposed solution corresponding to $S$, the duropolist is, by choice, setting the
same price in period $1$ for $S$ as for $E$; thus $\mu_1(S)=\mu_1(E)$.
Finally, we know that $v\ge \mu_2(E)$. But, then, if we reduce the value of consumer $w$ to $v$,
the static monopoly price will not change by the first case of Claim \ref{claim:smp1}; equivalently
 $\mu_2(S)=\mu_2(E)$.
\end{proof}

\begin{lemma}\label{lem:swap-eq}
The solution corresponding to $S$ is an equilibrium.
\end{lemma}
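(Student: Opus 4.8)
The plan is to verify that the swapped strategy profile $\mathcal{S}$ constitutes a subgame perfect equilibrium by checking the incentive constraints of every player: the duropolist and each consumer. The key structural fact, already secured by Lemma \ref{lem:eq2}, is that all relevant prices are unchanged under the swap, namely $\mu_1(S)=\mu_1(E)$, $\mu_2(S)=\mu_2(E)$, and $\mu_2(S^w)=\mu_2(E^v)$. Since period 2 is the final period, the consumers who have not yet bought follow their dominant strategy ``buy iff $\mu_2 \le$ own value'', and the duropolist best-responds with the static monopoly price for the remaining set; this part is automatic and need not be re-argued. So the real content is confined to the period 1 incentives.

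First I would check the two consumers who actually swapped. Consumer $v$ now delays to period 2: I need $v$ to weakly prefer waiting, i.e. that buying in period 1 at $\mu_1(S)=\mu_1(E)$ gives no more utility than waiting and facing $\mu_2(S)=\mu_2(E)$. Since Lemma \ref{lem:eq2} gives $\mu_1(E)\ge \mu_2(E)$, the second-period price is (weakly) lower, so $v$ is content to wait. Conversely, consumer $w$ now buys in period 1: I need $w$ to weakly prefer buying at $\mu_1(S)=\mu_1(E)$ over waiting and facing his threat price $\mu_2(S^w)=\mu_2(E^v)$. The chain $\mu_2(E^v)\ge \mu_1(E)$ from Lemma \ref{lem:eq2} delivers exactly this, since the threat price $w$ faces upon deviating is at least the period 1 price, and $w>v\ge\mu_1(E)$ ensures positive utility from buying.

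Next I would argue that no other consumer's incentive is disturbed. For any consumer $y$ other than $v$ and $w$, the quantity governing $y$'s period 1 decision is the comparison of $\mu_1$ against $y$'s own threat price $\mu_2(\cdot^{\,y})$, which is the static monopoly price of the remaining set augmented by $y$. The remaining sets under $\mathcal{E}$ and $\mathcal{S}$ differ only by the pair $\{v,w\}$, and by the first case of Claim \ref{claim:smp1} replacing $w$ by the smaller value $v$ (which still satisfies $v\ge\mu_2(E)$) leaves the static monopoly price unchanged; hence each $y$'s threat price is identical under the two profiles. Since $\mu_1$ is also unchanged, every such $y$ makes the same buy/wait decision and retains the same best response. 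Finally, the duropolist's period 1 price maximizes revenue against the consumer strategies, and since the sales schedule and all prices coincide between $\mathcal{E}$ and $\mathcal{S}$, the duropolist's revenue and best-response status are preserved.

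The main obstacle will be handling the threat prices of the other consumers cleanly: one must be careful that the monopoly-price invariance from Claim \ref{claim:smp1} is applied to the correct augmented sets $E^y$ versus $S^y$, and that the hypothesis $v\ge\mu_2(E)$ needed for that claim genuinely holds, which is precisely the last inequality established in Lemma \ref{lem:eq}. Once that invariance is in place, all incentive comparisons reduce to the inequality chain of Lemma \ref{lem:eq2}, and the verification is routine.
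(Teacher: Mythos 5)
Your verification for consumers $v$ and $w$ themselves, and for consumers who wait in both profiles, matches the paper's proof. The gap is in your treatment of the \emph{other period-1 buyers}. You claim that for every such consumer $u$ the threat price is invariant under the swap, $\mu_2(S^u)=\mu_2(E^u)$, by applying the first case of Claim \ref{claim:smp1} to the augmented set with $w$ replaced by $v$, and you justify the hypothesis of that claim by ``$v\ge\mu_2(E)$.'' But the hypothesis actually required is $v\ge\mu_2(E^u)$ --- the static monopoly price of the \emph{augmented} set $E^u=E\cup\{u\}$, not of $E$ --- and Lemma \ref{lem:eq} only gives you $v\ge\mu_2(E^v)$ and $v\ge\mu_2(E)$. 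By Claim \ref{claim:smp2}, adding a high-value buyer $u$ can push the static monopoly price up to as much as $u$ itself, so $\mu_2(E^u)$ can exceed $v$ whenever $u\ge w$ (and even for some $u<w$). When that happens the first case of Claim \ref{claim:smp1} does not apply, and the invariance you assert is in fact false: for example, with $E=\{101,40,40,40\}$, $v=70$, $u=150$, one computes $\mu_2(E^u)=101$ (from $2\cdot 101=202$) while $\mu_2(S^u)=40$ (from $5\cdot 40=200$), so the threat price of $u$ changes --- and can decrease --- after the swap.

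The conclusion you need ($u$ still prefers to buy in period 1 under $\mathcal{S}$) is nonetheless true, but it requires a different argument from invariance, and this is exactly where the paper's proof does real work. It splits the remaining period-1 buyers into $u<w$ and $u\ge w$. For $u<w$ it applies the first case of Claim \ref{claim:smp1} to $V=E^v=S^w$ (whose static monopoly price \emph{is} at most $v$ by Lemma \ref{lem:eq}), replacing $w$ by $u$, to get $\mu_2(S^u)=\mu_2(E^v)\ge\mu_1(E)=\mu_1(S)$. For $u\ge w$ it abandons any equality and instead establishes the lower bound $\mu_2(S^u)\ge\mu_2(S^w)\ge\mu_1(S)$, using Claim \ref{claim:smp2} to bound $\mu_2(S^w)\le w$ and then the \emph{second} case of Claim \ref{claim:smp1}. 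You correctly identified this as ``the main obstacle'' but then dismissed it by invoking the wrong hypothesis; as written, the step fails for every period-1 buyer with value at least $w$.
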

\begin{proof}
We prove that the strategy profile consisting of $v$ and $w$ swapping actions and all other
consumers acting the same way is still an equilibrium by considering the best responses of
every consumer separately.

\noindent $\bullet$ {\bf Consumer $\mathbf{v}$}: Consumer $v$ buys in the second period for $S$. If this is a best response
strategy then we need $\mu_2(S)\le \min [v, \mu_1(S)]$. This is true, by Lemma \ref{lem:eq2}.

\noindent $\bullet$ {\bf Consumer $\mathbf{w}$}: Consumer $w$ buys in the first period for $S$. If this is a best response
strategy then we need $\mu_1(S)\le \min [w, \mu_2(S^w)]$. This is true, by Lemma \ref{lem:eq2}.

\noindent $\bullet$ {\bf A consumer $\mathbf{u\neq w}$ who buys in period 1}: We wish to show
that $u$ still wishes to buy in the first period after we swap: that is, $\mu_1(S) \le \min [u, \mu_2(S^u)]$.

Since $v$ is the lowest valued consumer who buys in period 1, we must have $u \geq v$.
We know, by Lemma \ref{lem:eq2} that $\mu_1(S)\le v\le u$. Thus it suffices to prove that
$\mu_1(S) \le \mu_2(S^u)$.
We now have two possibilities:
\begin{enumerate}
\item[(a)] {\bf u $<$ w}: By Lemma \ref{lem:eq}, we have
$\mu_2(E^v) \leq v$. Now set $V = E^v = S^w$ and apply the first case of Claim \ref{claim:smp1} with $v'=u$ and $v_j=w$.
Thus, $V'= S^u$ and $\mu_2(E^v) = \mu_2(S^u)$. By the equilibrium conditions, we know
$\mu_1(E)\le \mu_2(E^v)$. Therefore $\mu_1(S) = \mu_1(E) \le \mu_2(E^v)=\mu_2(S^u)$.

\item[(b)]{\bf  u $\geq$ w}:
By Lemma \ref{lem:eq2}, we have $\mu_2(E) = \mu_2(S)<w$.
Now set $V = S$ and apply Claim~\ref{claim:smp2} by  adding a new consumer with value
$v'=w$; thus  $V'= S^w$ and $\mu_2(S^w)\le w$.  Applying the second case of Claim \ref{claim:smp1}
with $V=S^u$ and $V' = S^w$ and $v'=u$, then gives $\mu_2(S^u)\ge \mu_2(S^w)$. Furthermore, by Lemma~\ref{lem:eq2},
we know $\mu_1(S)\le \mu_2(S^w)$; thus $\mu_1(S) \le \mu_2(S^u)$, as desired.
\end{enumerate}

\noindent $\bullet$ {\bf Any other consumer $\mathbf{u\neq \{w, v\}}$}: Such a consumer $u$ either buys in period 2
or does not buy at all. In the former case, $\mu_2(E)\le \min [u, \mu_1(E)]$.
In the latter case, $u \le \mu_2(E) \le \mu_1(E)$. But, by Lemma \ref{lem:eq2}, we have
$\mu_1(S)=\mu_1(E)$ and $\mu_2(S)=\mu_2(E)$. Therefore, in either case, the same best response
conditions hold for $S$ as they did for $E$.
\end{proof}

\begin{proof}[Proof of Theorem \ref{lemma:2Skim}:]
By Lemma \ref{lem:swap-eq}, we know that $\mathcal{S}$ gives an equilibrium.
From Lemma \ref{lem:eq2}, we have that $\mu_1(S) = \mu_1(E)$ and $\mu_2(S) = \mu_2(E)$.
Since the same number of consumers buy in each period under $\mathcal{S}$ and $\mathcal{E}$, the duropolist earns the same revenue in $\mathcal{S}$ as in $\mathcal{E}$. By repeatedly
swapping pairs which are out of order, we obtain an equilibrium with the same revenue which
satisfies the skimming property.
\end{proof}

We conjecture that our two-periods result extends to the $T$-periods case, and therefore our bounds would apply not just to skimming-property satisfying equilibria, but to all sub-game perfect Nash equilibria.


\section{Conclusions}
In this paper we studied the durable good monopoly problem, a classical problem in bargaining theory. In our setting, we consider consumers to be atomic and that there is a finite time horizon during which sales occur. We characterized all profit maximising strong-Markovian equilibria and proved that, in those equilibria, duropolist profits are comparable to those of a static monopolist. This is in contrast with previous results in which duropoly profits are either arbitrary small or arbitrary large compared to those of a static monopolist.

The paper leaves interesting questions for future research. We provided the first example of a sub-game perfect equilibrium in which the classical skimming-property does not hold. Although such equilibria might be rare in practice, we proved that for two-period games, the duropolist profits are not larger in those non-skimming equilibria than in the skimming equilibria. Whether the duropolist profits are approximately equal to the static monopoly profits (and never more than twice that) under non-skimming equilibria for games with an arbitrary number of time periods is a question we believe very worthy of study.

Finally, in this paper we studied the case in which there is no discount factor, neither for the duropolist, or for the consumers. However, one could easily extend the equilibrium characterization we provided in Section 3 in order to capture the monopolist and consumers behaviour when there is a discount factor. Moreover, under a setting with discounts, it is still true that consumers have more bargaining power than under a infinite time horizon (it is a simple exercise to construct an example with discounts where the Pacman strategy is not an equilibrium). But the specific question about how much profits can a duropolist obtain with respect to the static monopoly profits under a subgame perfect Nash equilibrium remains open.


\ \\ \noindent{\bf Acknowledgements.}\ \\
We are very grateful to Juan Ortner, Jun Xiao, Anton Ovchinnikov and Sven Feldmann for their helpful comments and suggestions. 

\bibliography{references_duropoly}

\begin{thebibliography}{17}
\providecommand{\natexlab}[1]{#1}
\providecommand{\url}[1]{\texttt{#1}}
\expandafter\ifx\csname urlstyle\endcsname\relax
  \providecommand{\doi}[1]{doi: #1}\else
  \providecommand{\doi}{doi: \begingroup \urlstyle{rm}\Url}\fi

\bibitem[Ausubel and Deneckere(1989)]{ausubel1989reputation}
Lawrence~M Ausubel and Raymond~J Deneckere.
\newblock Reputation in bargaining and durable goods monopoly.
\newblock \emph{Econometrica}, 57\penalty0 (3):\penalty0 511--531, 1989.

\bibitem[Bagnoli et~al.(1989)Bagnoli, Salant, and
  Swierzbinski]{bagnoli1989durable}
Mark Bagnoli, Stephen~W Salant, and Joseph~E Swierzbinski.
\newblock Durable-goods monopoly with discrete demand.
\newblock \emph{The Journal of Political Economy}, 97\penalty0 (6):\penalty0
  1459--1478, 1989.

\bibitem[Cason and Sharma(2001)]{cason2001durable}
Timothy~N Cason and Tridib Sharma.
\newblock Durable goods, {C}oasian dynamics, and uncertainty: Theory and
  experiments.
\newblock \emph{Journal of Political Economy}, 109\penalty0 (6):\penalty0
  1311--1354, 2001.

\bibitem[Coase(1972)]{coase1972durability}
Ronald~H Coase.
\newblock Durability and monopoly.
\newblock \emph{Journal of Law and Economics}, 15:\penalty0 143, 1972.

\bibitem[Cramton and Tracy(1992)]{cramton2002strikes}
Peter Cramton and Joseph~S Tracy.
\newblock Strikes and holdouts in wage bargaining.
\newblock \emph{The American Economic Review}, 82\penalty0 (1):\penalty0
  100--121, 1992.

\bibitem[Fuchs and Skrzypacz(2011)]{fuchs2011bargaining}
William Fuchs and Andrzej Skrzypacz.
\newblock Bargaining with deadlines and private information.
\newblock Technical report, Mimeo, University of California Berkley, 2011.

\bibitem[Fudenberg and Tirole(1991)]{tirole}
Drew Fudenberg and Jean Tirole.
\newblock \emph{Game Theory}.
\newblock MIT Press, 1991.

\bibitem[Fudenberg et~al.(1985)Fudenberg, Levine, and Tirole]{fudenberg1985}
Drew Fudenberg, David Levine, and Jean Tirole.
\newblock Infinite horizon models of bargaining with one-sided incomplete
  information.
\newblock In A.~Roth, editor, \emph{Game Theoretic Models of Bargaining}.
  Cambridge University Press, 1985.

\bibitem[Gul et~al.(1986)Gul, Sonnenschein, and Wilson]{gul1986foundations}
Faruk Gul, Hugo Sonnenschein, and Robert Wilson.
\newblock Foundations of dynamic monopoly and the coase conjecture.
\newblock \emph{Journal of Economic Theory}, 39\penalty0 (1):\penalty0
  155--190, 1986.

\bibitem[Guth and Ritzberger(1998)]{guth1998durable}
Werner Guth and Klaus Ritzberger.
\newblock On durable goods monopolies and the {C}oase-conjecture.
\newblock \emph{Review of Economic Design}, 3\penalty0 (3):\penalty0 215--236,
  1998.

\bibitem[McAfee and Wiseman(2008)]{mcafee2008capacity}
R~Preston McAfee and Thomas Wiseman.
\newblock Capacity choice counters the coase conjecture.
\newblock \emph{The Review of Economic Studies}, 75\penalty0 (1):\penalty0
  317--331, 2008.

\bibitem[Montez(2013)]{montez2013inefficient}
Jo{\~a}o Montez.
\newblock Inefficient sales delays by a durable-good monopoly facing a finite
  number of buyers.
\newblock \emph{The RAND Journal of Economics}, 44\penalty0 (3):\penalty0
  425--437, 2013.

\bibitem[Orbach(2004)]{orbach2004durapolist}
Barak Orbach.
\newblock The duropolist puzzle: Monopoly power in durable-goods market.
\newblock \emph{Yale Journal on Regulation}, 21:\penalty0 67--118, 2004.

\bibitem[Ortner(2016)]{ortner2016durable}
Juan Ortner.
\newblock Durable goods monopoly with stochastic costs.
\newblock \emph{Theoretical Economics}, \penalty0 (Forthcoming), 2016.

\bibitem[Stokey(1979)]{stokey1979intertemporal}
Nancy~L Stokey.
\newblock Intertemporal price discrimination.
\newblock \emph{The Quarterly Journal of Economics}, 93\penalty0 (3):\penalty0
  355--371, 1979.

\bibitem[von~der Fehr and Kuhn(1995)]{von1995coase}
Nils-Henrik~Morch von~der Fehr and Kai-Uwe Kuhn.
\newblock Coase versus pacman: Who eats whom in the durable-goods monopoly?
\newblock \emph{Journal of Political Economy}, 103\penalty0 (4):\penalty0
  785--812, 1995.

\bibitem[Williams(1983)]{williams1983legal}
Gerald~R Williams.
\newblock \emph{Legal negotiation and settlement}.
\newblock West Publishing Company, 1983.

\end{thebibliography}

\section*{Appendix 1: Proofs omitted from main text.}
     \setcounter{lemma}{0}\addtocounter{section}{-6}

    \renewcommand{\thelemma}{\Alph{section}\arabic{lemma}}
    \setcounter{conjecture}{0}
    \renewcommand{\theconjecture}{\Alph{section}\arabic{conjecture}}
     \setcounter{theorem}{0}
    \renewcommand{\thetheorem}{\Alph{section}\arabic{theorem}}

\begin{lemma}[Lemma \ref{subgame_price_sequence}]
Let the consumers in $\mathcal{G}$ be ordered by value so $v_i \geq v_{i+1}$ for all $i$. Then static monopoly prices are non-increasing in $i$: $p_i \ge p_{i+1}$ for $i=1,\dots, N-1$.
\end{lemma}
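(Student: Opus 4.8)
The plan is to prove the inequality directly from the definition of the static monopoly price, exploiting the fact that $\mathcal{G}_{i+1}$ is obtained from $\mathcal{G}_i$ by deleting exactly one consumer, namely the highest-valued one, consumer $i$. Recall that on any subgame $\mathcal{G}_i$ the optimal static monopoly revenue is $\max_{j \ge i}(j-i+1)v_j$, and $p_i = v_{j^*(i,T)}$ where $j^*(i,T)$ attains this maximum; since the $v_j$ are non-increasing, a smaller maximizing index corresponds to a higher price. Hence, to prove $p_i \ge p_{i+1}$ it suffices to exhibit an optimal choice for $\mathcal{G}_i$ whose price is at least $p_{i+1}$, equivalently whose index is at most the optimal index for $\mathcal{G}_{i+1}$.

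First I would fix $m = j^*(i+1,T)$, so that $p_{i+1} = v_m$ and, by optimality on $\mathcal{G}_{i+1}$ (where selling down to index $l$ earns $(l-i)v_l$), we have $(m-i)v_m \ge (l-i)v_l$ for every $l \ge i+1$. The key step is then to transfer this to $\mathcal{G}_i$, where selling down to index $l$ instead earns $(l-i+1)v_l = (l-i)v_l + v_l$. For any $l > m$ I would simply add the two inequalities $(l-i)v_l \le (m-i)v_m$ (optimality on $\mathcal{G}_{i+1}$) and $v_l \le v_m$ (monotonicity of valuations, since $l > m$) to conclude $(l-i+1)v_l \le (m-i+1)v_m$. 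Thus in $\mathcal{G}_i$ every index exceeding $m$ is dominated by the choice $m$, so the maximum on $\mathcal{G}_i$ is attained at some index $\le m$, giving $p_i = v_{j^*(i,T)} \ge v_m = p_{i+1}$.

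The only subtlety, and the step I would treat most carefully, is the handling of ties in the $\arg\max$, since the chain above could hold with equality. But equality in $(l-i+1)v_l \le (m-i+1)v_m$ forces both $(l-i)v_l = (m-i)v_m$ and $v_l = v_m$; as valuations are non-increasing, $v_l = v_m$ means $v_m = v_{m+1} = \cdots = v_l$, so every tied index yields the identical price $v_m$. Hence the conclusion $p_i \ge p_{i+1}$ is unaffected by whatever tie-breaking convention is used in defining $j^*$. I would close by remarking that this lemma is the ``deletion'' counterpart of Claim \ref{claim:smp2} (which shows that \emph{adding} a consumer of value at least the current static monopoly price cannot decrease that price), but the self-contained two-line comparison above is the most direct route and avoids any dependence on later sections.
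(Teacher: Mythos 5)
Your proof is correct and follows essentially the same route as the paper's: both argue directly from the definition of the static monopoly price, comparing the revenue functions of $\mathcal{G}_i$ and $\mathcal{G}_{i+1}$ (which differ by exactly $v_l$ at index $l$) and adding an optimality inequality to a monotonicity inequality. The paper organizes this as a case split on the relative order of the two maximizing indices and resolves the nontrivial case by summing the two optimality conditions, whereas you show every index beyond $j^*(i+1,T)$ is dominated in $\mathcal{G}_i$; your explicit handling of ties in the $\arg\max$ is a small extra care the paper glosses over, but the underlying argument is the same.
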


\proof
Without loss of generality we can restrict to the case $i=1$. Let  $p_1 = v_a$ and $p_{2}=v_b$ and therefore $a \ge 1$ and $b \ge 2$. As a first case, consider that $a \le b$. Given that valuations are non-increasing, it follows that $p_1 = v_a \ge p_2 = v_b$.  As the second case, suppose that $a>b \ge 2$. We know that $$av_a \geq bv_b,$$ by definition of $p_1$. Now if $a>b $, in the market $\mathcal{G}_2$ the static monopolist has the option of selling to exactly the consumers in $[2,a]$. Therefore, as $v_b$ is the static monopoly price for $\mathcal{G}_2$, the game with consumers $[2,N]$, it follows that $$(b-1)v_b \ge (a-1)v_a.$$ Combining these two inequalities gives us $v_a \geq v_b$. But $a>b$ implies $v_b \geq v_a$, so we conclude that $v_a = v_b$. In either case, $v_a \geq v_b$.\newline \qed

\begin{lemma} [Lemma \ref{lem:threatmono}]
In any game $\mathcal{G}$ with $T$ periods, for all $i \leq k$ and all $t<T$, $\tau(i,t) \geq \tau(k,t)$.
\end{lemma}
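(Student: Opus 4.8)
The plan is to prove the slightly stronger statement that the equilibrium prices $p(i,s)$ are non-increasing in the consumer index $i$ for every fixed period $s$; since $\tau(i,t) = p(i,t+1)$ by definition, the lemma is exactly the case $s = t+1$. I would argue by downward induction on $s$, from $s = T$ to $s = 2$, carrying the hypothesis $P(s)$: ``$p(i,s) \ge p(k,s)$ whenever $i \le k$.'' The base case $s = T$ is immediate, since $p(i,T) = p_i$ is precisely the static monopoly price of the subgame $\mathcal{G}_i$, and Lemma \ref{subgame_price_sequence} already gives $p_i \ge p_k$ for $i \le k$.

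For the inductive step I would assume $P(s+1)$ and deduce $P(s)$ from the recursion (\ref{eq:characterization_prices}). Writing $f_i(j) := (j-i+1)\,p(j,s+1) + \Pi(j+1,s+1)$, so that $j^*(i,s) = \arg\max_{j \ge i} f_i(j)$ and $p(i,s) = p(j^*(i,s), s+1)$, the key observation is that $f_i(j)$ has increasing differences in $(i,j)$. Indeed $f_{i+1}(j) - f_i(j) = -p(j,s+1)$, so for $j' < j$ one computes $[f_{i+1}(j) - f_{i+1}(j')] - [f_i(j) - f_i(j')] = p(j',s+1) - p(j,s+1) \ge 0$, where the inequality is exactly the induction hypothesis $P(s+1)$ applied to the pair $j' < j$. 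Thus raising $i$ makes larger sale-thresholds relatively more attractive, so the maximizer is monotone; I would fix a consistent tie-breaking rule (say, the smallest maximizer) and establish $j^*(i,s) \le j^*(k,s)$ for $i \le k$.

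The monotonicity of the maximizer is the step needing the most care, both because of ties and because the feasible set $\{j \ge i\}$ itself shrinks as $i$ grows. I would prove $j^*(i,s) \le j^*(k,s)$ by contradiction: if $j_i := j^*(i,s) > j^*(k,s) =: j_k$, then, since $j_k \ge k \ge i$, the index $j_k$ is feasible for $f_i$ and $j_i$ is feasible for $f_k$; minimality of $j_i$ forces $f_i(j_i) > f_i(j_k)$, while optimality of $j_k$ gives $f_k(j_k) \ge f_k(j_i)$, and these two facts contradict increasing differences. The shrinking domain only helps here, as it can push $j_k$ upward but never below the range of $f_i$.

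Once $j^*(i,s) \le j^*(k,s)$ is in hand, the induction hypothesis $P(s+1)$ yields $p(i,s) = p(j^*(i,s),s+1) \ge p(j^*(k,s),s+1) = p(k,s)$, which is $P(s)$. Reading off $s = t+1$ for every $t < T$ then gives $\tau(i,t) = p(i,t+1) \ge p(k,t+1) = \tau(k,t)$, proving the lemma.
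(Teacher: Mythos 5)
Your proof is correct and takes essentially the same route as the paper's: backwards induction on the period, with the heart of the argument being that $j^*(\cdot,s)$ is non-decreasing in the consumer index, which the paper establishes by multiplying the inductively known inequality $p(j^*(k,t+1),t+2)\ge p(l,t+2)$ by $k-i$ and adding it to the optimality condition for $j^*(k,t+1)$ --- exactly your increasing-differences computation for $f_i(j)-f_k(j)=(k-i)\,p(j,s+1)$. Your Topkis-style packaging, with an explicit smallest-maximizer tie-break and a contradiction, is a slightly more careful presentation of the same step, not a different argument.
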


\proof We proceed by backwards induction on $t$. For $t = T-1$, $\tau(i,t)=p_i$, the static monoply price for the market $\mathcal{G}_i$, for all $i$. By Lemma \ref{subgame_price_sequence}, $p_i \geq p_k$ whenever $i \leq k$. Now consider any earlier period $t$. $p(i,t+1) = p(j^*(i,t+1),t+2)$ for all $i$. By the inductive hypothesis, if $j^*(i,t+1) \leq j^*(k,t+1)$, then we are done. Recall that $j^*(k,t+1)$ is determined by the sales schedule which maximizes revenue earned in $\mathcal{G}_i$ for the remaining periods. In particular, for all $l \geq j^*(k,t+1)$,

\begin{align*} (j^*(k,t+1)-k+1)\cdot p(j^*(k, t+1) , t+2) + \Pi(j^*(k, t+1)+1, t+2)   \\   \geq  (l-k+1)\cdot p(l , t+2) + \Pi(l+1, t+2)
\end{align*}

Again, by the inductive hypothesis, $p(j^*(k, t+1) , t+2) \geq p(l , t+2)$. Now multiply this inequality by $k-i$ (which is non-negative) and add it to the above to get

\begin{align*}(j^*(k,t+1)-i+1)\cdot p(j^*(k, t+1) , t+2) + \Pi(j^*(k, t+1)+1, t+2)   \\   \geq  (l-i+1)\cdot p(l , t+2) + \Pi(l+1, t+2),
\end{align*}

for every $l \geq j^*(k,t+1)$. Since $j^*(i,t+1)$ satisfies

$$j^*(i, t+1) = \arg \max_{j\ge i} \,\left((j-i+1)\cdot p(j, t+2)+ \Pi(j+1, t+2)\right)$$

it follows that $j^*(i,t+1) \leq j^*(k,t+1)$, which gives us our result. \qed

\begin{lemma}[Lemma \ref{lem:monotone}]
Consider two duropoly games, $\mathcal{G}$, with $T$ periods and a set $S$ of consumers, and $\mathcal{G'}$, with $T$ periods and a set $S'$ of consumers such that only the top valued consumers in $S$ and $S'$ differ and the top valued consumer in $S'$, who we name $x$, has the higher value. If we use $p^*_\mathcal{G} (1,1)$ and  $p^*_\mathcal{G'} (1,1)$ to denote the first period prices as calculated by the recursion relationship in Section \ref{sec:opt-cond}, then $p^*_\mathcal{G'} (1,1) \geq p^*_\mathcal{G} (1,1)$.
\end{lemma}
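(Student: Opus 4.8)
The plan is to induct on the number of periods $T$, exploiting that $\mathcal{G}$ and $\mathcal{G'}$ share every consumer except the top one. Index the consumers $1,\dots,n$ in each game so that consumer $1$ is the highest valued, writing $a$ for its value in $\mathcal{G}$ and $a'>a$ for its value in $\mathcal{G'}$, while consumers $2,\dots,n$ are common to the two games. Throughout I write $p_{\mathcal{G}}(j,t)$ for the prices produced by the recursion \eqref{eq:characterization_prices} in $\mathcal{G}$, and $j^*_{\mathcal{G}}$, $j^*_{\mathcal{G'}}$ for the first-period maximizers $j^*(1,1)$ of the two games, so that the quantity in the statement is $p^*_{\mathcal{G}}(1,1)=p_{\mathcal{G}}(j^*_{\mathcal{G}},2)$. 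The decisive structural observation is that every subgame $\mathcal{G}_j$ with $j\ge 2$ omits consumer $1$ and is therefore literally the same instance in both games; hence the recursion returns the same threat prices $p_{\mathcal{G}}(j,2)=p_{\mathcal{G'}}(j,2)$ for all $j\ge 2$ and the same continuation profits $\Pi(j,2)$ for all $j\ge 2$. Only the data attached to consumer $1$ can differ between the games.

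For the base case $T=1$ the first-period price is just the static monopoly price of the whole game, and raising the top valuation from $a$ to $a'$ cannot lower it: the index maximizing $j\cdot v_j$ is either unchanged, leaving the price fixed, or becomes $j=1$, in which case the price rises to $a'\ge a\ge$ the old price. For the inductive step, recall $p^*_{\mathcal{G}}(1,1)=p_{\mathcal{G}}(j^*_{\mathcal{G}},2)$ where $j^*_{\mathcal{G}}=\arg\max_{j\ge 1} f(j)$ with $f(j)=j\cdot p(j,2)+\Pi(j+1,2)$. By the observation above, $f_{\mathcal{G'}}(j)=f_{\mathcal{G}}(j)$ for every $j\ge 2$, so the two objectives can differ only in $f(1)=p(1,2)+\Pi(2,2)$. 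Here $\Pi(2,2)$ is common, while $p(1,2)$ is precisely the first-period price of the $(T-1)$-period game played on the full consumer set over periods $2,\dots,T$; since that subgame again differs only in the top valuation, the induction hypothesis gives $p_{\mathcal{G'}}(1,2)\ge p_{\mathcal{G}}(1,2)$, and therefore $f_{\mathcal{G'}}(1)\ge f_{\mathcal{G}}(1)$.

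It remains to convert $f_{\mathcal{G'}}(1)\ge f_{\mathcal{G}}(1)$, together with the agreement of the objectives on $j\ge 2$, into a price comparison. By Lemma \ref{lem:threatmono} threat prices are non-increasing in the index, so $p_{\mathcal{G'}}(1,2)$ is the largest threat price in $\mathcal{G'}$; being at least $p_{\mathcal{G}}(1,2)$, it also dominates every threat price of $\mathcal{G}$. Fixing the smallest-index tie-breaking rule in the $\arg\max$, I would argue in two cases. If $j^*_{\mathcal{G'}}=1$ then $p^*_{\mathcal{G'}}(1,1)=p_{\mathcal{G'}}(1,2)$ dominates $p^*_{\mathcal{G}}(1,1)=p_{\mathcal{G}}(j^*_{\mathcal{G}},2)$ and we are done. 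Otherwise $j^*_{\mathcal{G'}}\ge 2$; since the objectives agree for $j\ge 2$ and $f_{\mathcal{G'}}(1)\ge f_{\mathcal{G}}(1)$, index $1$ is no more competitive in $\mathcal{G}$ than in $\mathcal{G'}$, so the smallest-index maximizer is unchanged, $j^*_{\mathcal{G}}=j^*_{\mathcal{G'}}$, and the two prices coincide. I expect the only delicate point to be exactly this handling of ties: without a fixed tie-breaking convention the two games could select revenue-equivalent first-period schedules with different marginal consumers, so the crux of the argument is to verify that the smallest-index selection is itself monotone in the top valuation.
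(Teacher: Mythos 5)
Your proof is correct and follows essentially the same route as the paper's: induction on $T$, using that every subgame excluding the top consumer is identical in $\mathcal{G}$ and $\mathcal{G}'$, applying the inductive hypothesis to the period-$2$ continuation to bound the top consumer's threat price, and invoking the monotonicity of threat prices (Lemma \ref{lem:threatmono}); your case split on whether $j^*_{\mathcal{G}'}=1$ is just the paper's split on whether the optimal schedule sells to one or to several consumers in period 1, rephrased in terms of the recursion's objective. Your explicit treatment of the tie-breaking rule in the $\arg\max$ is a point the paper glosses over, and is a welcome tightening rather than a divergence.
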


\proof It is easily seen that in the case $T=1$, the result holds (the price either remains the same or increases to $v_x$). Consider the optimal sales schedule for the game $\mathcal{G}$, and assume this schedule sells to more than one person in period 1. But prices for a schedule which sells to more than one person in period 1 do not depend on the value of the top consumer in $S$ (the price in period 1 depends on the threat price of a lower-valued consumer, and later prices depend only on the consumers left). Therefore, we can achieve the same profit from such a schedule in $\mathcal{G'}$ with the same prices. So in $\mathcal{G'}$, the optimal sales and pricing schedule either is the same as the optimal for $\mathcal{G}$, in which case we are done, or involves selling only to $x$ in the first period. If the durpolist sells to to $x$ in the first period, it is at a price  $p^*_\mathcal{G'} (1,1)$ equal to $x$'s threat price. This is, by definition, the same price as for the game $\mathcal{G''}$ with consumers $S'$ but with $T-1$ periods instead of $T$ periods. By the induction hypothesis, this price is higher than the corresponding optimal first period price $p_S$ for the game with consumers $S$ but with $T-1$ periods. But $p_S$, by definition, is the threat price for the top consumer in $S$ in $\mathcal{G}$, and therefore at least as high as the period 1 threat price under the optimal sales schedule in $\mathcal{G}$ (if $v_i  \geq v_j$, $i$'s threat price is $\geq j$'s threat price: see proof of Lemma \ref{lem:threatmono} in this section). But the optimal period 1 threat price is  $p^*_\mathcal{G} (1,1)$, so $p^*_\mathcal{G'} (1,1) \geq p_S \geq p^*_\mathcal{G} (1,1)$.

It remains to prove the case where the optimal sales schedule in $\mathcal{G}$ sells to just the top consumer in $S$. In this case, the price charged in period 1 is $p_S$. But by the same argument as above, the threat price for $x$ is at least as high as $p_S$. Therefore if the optimal sales schedule in $\mathcal{G'}$ sells to just $x$ in the first period, the first period price is at least as high as in $\mathcal{G}$. But note that if the duropolist sells to more than one person in period 1 of $\mathcal{G'}$, she achieve the same profit as a sub-optimal sales schedule in $\mathcal{G}$. But she can clearly beat that reveue by selling to $x$ in period 1 and then following the optimal sales schedule in $\mathcal{G}$ from period 2 onwards.  Therefore, whatever the optimal sales schedule in $\mathcal{G'}$, it must involve selling exactly one item in period 1. Therefore we have $p^*_\mathcal{G'} (1,1) \geq p_S = p^*_\mathcal{G} (1,1)$.

We have covered all cases, so the lemma is proved. \qed

\begin{lemma}[Lemma \ref{lem:EqPriceDec}]
In any game $\mathcal{G}$ with $T$ periods, if the duropolist and consumers follow the strategies described in Section \ref{sec:opt-cond}, then prices are non-increasing in time.
\end{lemma}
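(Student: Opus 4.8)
The plan is to prove the claim directly along the realized equilibrium path, reducing everything to the monotonicity of threat prices already established in Lemma \ref{lem:threatmono}. First I would fix notation for the path generated by the strategies of Section \ref{sec:opt-cond}: set $a_1 = 1$, and at each period $t$ let the remaining consumers be $\{a_t, a_t+1, \dots, N\}$, so that the subgame is $\mathcal{G}_{a_t}$ and the announced price is $\mu_t = p(a_t, t)$. Under the proposed strategies consumers $a_t, \dots, j^*(a_t, t)$ buy in period $t$, so the highest-value remaining consumer in the next period is $a_{t+1} = j^*(a_t, t) + 1$. Since the quantity we must show is non-increasing is precisely the sequence $\mu_1, \mu_2, \dots$, it suffices to prove $\mu_t \ge \mu_{t+1}$ for each $t = 1, \dots, T-1$ and then chain these inequalities.

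The key step is to rewrite both consecutive prices as prices evaluated at period $t+1$, so that the threat-price monotonicity lemma applies. Using the recursion $p(i,t) = p(j^*(i,t), t+1)$ from (\ref{eq:characterization_prices}), I would write $\mu_t = p(a_t, t) = p(j^*(a_t, t), t+1)$. On the other hand, $\mu_{t+1} = p(a_{t+1}, t+1) = p(j^*(a_t, t)+1, t+1)$. Thus the two consecutive prices on the path are exactly the threat prices $\tau(j^*(a_t,t), t)$ and $\tau(j^*(a_t,t)+1, t)$ of the adjacent consumers $j^*(a_t,t)$ and $j^*(a_t,t)+1$, evaluated at the same period $t$.

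The final step is to invoke Lemma \ref{lem:threatmono}: since $j^*(a_t,t) \le j^*(a_t,t)+1$ and $t < T$ on every transition of the path, we obtain $p(j^*(a_t,t), t+1) \ge p(j^*(a_t,t)+1, t+1)$, that is $\mu_t \ge \mu_{t+1}$; chaining over $t = 1, \dots, T-1$ yields the result, with the degenerate case $j^*(a_t,t) = N$ simply ending the path. I expect the only real content to lie in the reduction of the second step: once the recursion is used to express $\mu_t$ and $\mu_{t+1}$ as threat prices of adjacent consumers in a common period, the monotonicity of threat prices in the consumer index closes the argument immediately. Accordingly, the main obstacle is not a delicate estimate but recognizing this reformulation, since the heavy lifting (the backward induction on $t$) has already been carried out in the proof of Lemma \ref{lem:threatmono}, and no further induction is needed here.
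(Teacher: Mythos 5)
Your proof is correct and follows essentially the same route as the paper's: identify the period-$t$ and period-$(t+1)$ prices on the realized sales path as the threat prices of the adjacent consumers $j^*(a_t,t)$ and $j^*(a_t,t)+1$, then invoke Lemma \ref{lem:threatmono}. The paper's version takes a slightly longer detour (unfolding one more level of the recursion and using the monotonicity of $j^*$ in its first argument, established inside the proof of Lemma \ref{lem:threatmono}, before applying threat-price monotonicity at period $t+1$), whereas you apply Lemma \ref{lem:threatmono} directly at period $t$; both are valid and rest on the same lemma.
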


\proof If both duropolist and consumer follow the strategies described in Section \ref{sec:opt-cond}, the duropolist will select an initial sales path $\{x_t\}$ , such that, for the last consumer $j_t$ scheduled to buy in period t ($j_t = \sum_{i\leq t} x_i$), we have the recursive relationship:
$$p(j_{t} + 1, t+1) = \tau(j_{t+1},t+1) = p(j_{t+1},t+2)$$
In other words, in period $t$ the duropolist plans to sell to consumers $j_t+1, j_t+2, \dots, j_{t+1}$ at $j_{t+1}$'s threat price. By Lemma \ref{lem:threatmono}, this is less than or equal the threat price of everyone in the set $\{j_t+1, j_t+2, \dots, j_{t+1}\}$. So under the consumer strategies specified, all $x_t$ consumers in $\{j_t+1, j_t+2, \dots, j_{t+1}\}$ buy in period $t$, and the duropolist's strategy never deviates from the initial sales path. So the price she charges in each period $t$ is  $p(j_{t-1} + 1, t)$ (with $j_0 \equiv 0$).

As part of Lemma \ref{lem:threatmono} we showed that $j^*(i,t+1) \leq j^*(k,t+1)$ for all $i \leq k$. Using this and the result of Lemma 2, we have

\begin{align*}
p(j_{t-1} + 1, t) &&= & & p(j_{t},t+1) \\ && = && \tau(j^*(j_{t},t+1),t+1)  \\
 &&\geq &&\tau(j^*(j_t+1,t+1),t+1) \\
&& = &&\tau(j_{t+1},t+1) \\
&&= && p(j_t+1, t+1),
\end{align*}
 where in the fourth line, we use the fact that $j^*(j_t+1,t+1)= j_{t+1}$ from the definition of the $j_t$'s and the argmax condition of the recursion relation (\ref{eq:characterization_prices}). So these prices are non-increasing in time. \qed

\begin{lemma}\label{NeverWait}
There is a subgame perfect equilibrium which follows the recursion relationship (\ref{eq:characterization_prices}) in which a sale occurs in each period until all consumers have already bought the item, and this equilibrium achieves at least as much profit for the duropolist as any which allows the duropolist to not sell in some periods where there are consumers remaining.
\end{lemma}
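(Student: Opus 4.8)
The plan is to prove the two halves of the statement separately. For existence, I would take the strategy profile built from the recursion (\ref{eq:characterization_prices}), which the theorem of Section \ref{sec:opt-cond} already certifies to be an SPNE. Since the maximiser always satisfies $j^*(i,t)\ge i$, the coefficient $(j^*(i,t)-i+1)$ is at least $1$, so at least the top remaining consumer $i$ transacts in every period. Consequently, along the equilibrium path a sale occurs in each period until no consumer is left, which is precisely the ``never wait'' property. The work therefore lies in the revenue comparison.

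For the comparison I would argue by induction on the number of available periods $s$, writing $\Pi_s(i)$ for the recursion value of the subgame $\mathcal{G}_i$ played with $s$ periods, and showing that no SPNE of $\mathcal{G}_i$ that is permitted to decline a sale can earn more than $\Pi_s(i)$. Given such an equilibrium, I would examine its first period. If the duropolist sells to a group $\{i,\dots,j\}$, then, exactly as in the derivation of (\ref{eq:characterization_prices}) and in Lemma \ref{lem_bound_price}, the first-period price cannot exceed consumer $j$'s threat price; moreover, since declining future sales can only postpone, and hence (weakly) lower, the price a deviating consumer eventually pays, the threat price along a waiting continuation is at most the never-wait threat price $p(j,\cdot)$ (here I would apply the induction hypothesis to the shorter continuation to bound the top consumer's eventual payment). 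Combining this price bound with the inductive bound on the continuation revenue shows the ``sell now'' branch is at most the corresponding term of the recursion, hence at most $\Pi_s(i)$. If instead the duropolist declines to sell in the first period while consumers remain, subgame perfection makes the continuation an equilibrium of the \emph{same} consumer set $\mathcal{G}_i$ with only $s-1$ periods, whose revenue is at most $\Pi_{s-1}(i)$ by the induction hypothesis.

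Closing the ``wait'' branch is where I expect the real difficulty: it requires period monotonicity of the value, $\Pi_{s}(i)\ge \Pi_{s-1}(i)$, i.e.\ an extra period never hurts the duropolist. The obstacle is that this does \emph{not} hold term by term. Granting one extra period lowers \emph{every} threat price (the recursion together with Lemma \ref{subgame_price_sequence} yields the subgame-price monotonicity $p(i,t)\le p(i,t+1)$, so more periods mean weaker deterrence and lower per-group prices), while \emph{simultaneously} raising every continuation value, $\Pi(j{+}1,t{+}1)\ge \Pi(j{+}1,t{+}2)$; the content of the claim is that the second effect always outweighs the first. I would prove it by backwards induction on $t$: the base case $t=T-1$ is clean, since choosing $j=i$ gives $\Pi(i,T-1)\ge p_i+\Pi(i{+}1,T)\ge \Pi(i,T)$, the extra period letting the duropolist sell the lowest group separately without depressing the high-value sales. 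For the inductive step, because the two competing monotonicities pull in opposite directions, I cannot simply compare the recursion terms index by index; instead I would run an exchange argument that transforms an optimal $(s-1)$-period schedule into a feasible $s$-period one, invoking $p(j_1,t+1)\le p(j_1,t+2)$ on the prices and the induction hypothesis on the tails only after the schedule has been realigned so that the net change is nonnegative. With period monotonicity established, the ``wait'' branch is bounded by $\Pi_{s-1}(i)\le \Pi_{s}(i)$, and every waiting SPNE thus earns at most $\Pi_s(i)=\Pi(1,1)$, which the never-wait equilibrium attains, completing the proof.
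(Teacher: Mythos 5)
Your skeleton matches the paper's: existence follows because the recursion \eqref{eq:characterization_prices} always has $j^*(i,t)\ge i$, the ``sell'' branch is handled by the threat-price bound already used before Corollary \ref{cor:uniquemax}, and the ``wait'' branch reduces to the period-monotonicity claim $\Pi^D_{\mathcal{G}(N,T)}\ge\Pi^D_{\mathcal{G}(N,T-1)}$, whose base case you verify correctly. But the inductive step of that claim --- which you yourself flag as ``the real difficulty'' --- is essentially the whole content of the lemma, and your proposal leaves it unexecuted. The exchange argument you sketch (realign an optimal $(T-1)$-period schedule into a feasible $T$-period one, then compare prices and tails) founders on exactly the obstacle you name: carrying the same sales \emph{schedule} over to the longer horizon changes every threat price, the two monotonicities pull in opposite directions, and you give no mechanism for controlling who actually buys once the horizon is extended. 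As written, the induction does not close.

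The paper closes this step by fixing the first-period \emph{price} rather than the schedule. Let $k\ge 1$ and $p=\tau_{\mathcal{G}(N,T-1)}(k,1)$ be the number of period-1 sales and the period-1 price of the $(T-1)$-period recursion. In the $T$-period game the duropolist charges $p$ in period 1 and follows the recursion thereafter, while consumers use their $T$-period threat-price strategies. One checks that the top consumer buys (if all refused, the continuation would be $\mathcal{G}(N,T-1)$ and she would face exactly $p$) and that no consumer $i>k$ buys, so some $l$ with $1\le l\le k$ consumers buy. If $l=k$, the profit is $k\cdot p+\Pi^D_{\mathcal{G}(N-[k],T-2)}=\Pi^D_{\mathcal{G}(N,T-1)}$. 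If $l<k$, the profit is $l\cdot p+\Pi^D_{\mathcal{G}(N-[l],T-1)}$, and the second term is at least $(k-l)\cdot p+\Pi^D_{\mathcal{G}(N-[k],T-2)}$ because selling the remaining $k-l$ of those consumers at price $p$ and then following the recursion is a feasible (possibly sub-optimal) policy in $\mathcal{G}(N-[l],T-1)$; again the total is $\Pi^D_{\mathcal{G}(N,T-1)}$. Either way a feasible $T$-period deviation already earns $\Pi^D_{\mathcal{G}(N,T-1)}$, so the optimum does too and waiting never helps. You need this construction (or an equivalent one) to make your inductive step go through.
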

\proof The proof will be by induction. In the case $T=1$, it is clearly a dominant strategy for the durpolist to sell if there are any consumers remaining, as otherwise he will earn nothing. This also holds for the last period of a longer game in a subgame perfect equilibrium.

Now consider $T>1$, and to start, assume that the claim is false. Then there must be a game $\mathcal{G}$ with a period $t^l<T$ such that the following two conditions hold: (a) there remain consumers who haven't bought at the start of period $t^l$, but the duropolist does sell any items in this period; (b) the duropolist sells items in every subsequent period of the game. So there is an equilibrium for a game $\mathcal{G'}$, corresponding to the subgame of $\mathcal{G}$ starting at $t^l$, where the duropolist sells nothing in the first period, and sells at least one item in all subsequent periods, until all consumers have bought, and this equilibrium yields strictly more profit than one which follows (\ref{eq:characterization_prices}). Thus, to show our claim is true, we only need to show that equilibria where we sell nothing in the first period followed by sales in every subsequent period (until all consumers have bought) do not yield more profit than those which follow the recursion relationship (\ref{eq:characterization_prices}). Furthermore, it is enough to show a sub-optimal sales schedule yields as much profit, as the optimal recursion relationship result must do even better.

Let $\mathcal{G}(N,T)$ be our game with a set of consumers $N$ and $T$ periods. If nothing is sold in the first period, and the duropolist sells at least one item in each subsequent period (until all consumers have bought), the duropolist can achieve profit of  at most $\Pi^D_{\mathcal{G}(N,T-1)}$ (see discussion preceeding Corollary \ref{cor:uniquemax}). Let $k \geq 1$ and $p$ be the number of items sold in the first period and the first period price, respectively, of $\mathcal{G}(N,T-1)$ under (\ref{eq:characterization_prices}). Consider the following (possibly sub-optimal) strategies for $\mathcal{G}(N,T)$: the duropolist sells at price $p$ in period 1 and follows the equilibrium of (\ref{eq:characterization_prices}) for all subsequent periods, while the consumers buy iff the price is less than their threat price. We know that the top consumer will buy in period 1 as she would be offered the same price if everyone refused to buy in period 1. We also know that no consumer $i>k$ will buy as $p$ is equal to $k$'s threat price, $\tau_{\mathcal{G}(N,T-1)}(k,1)$, but  if $i$ and all below her refused to buy in $\mathcal{G}(N,T)$, the price in the second period of $\mathcal{G}(N,T)$ would be at most $\tau_{\mathcal{G}(N,T-1)}(k+1,1)$. So some number $1 \leq l \leq k$ buys in the first period. By definition of (\ref{eq:characterization_prices}), we sell at least one item in each subsequent period as well, as long as there are consumers remaining to buy. If $l=k$, then, by the induction hypothesis, we make at least $\Pi^D_{\mathcal{G}(N-[k],T-2)}$  with sales in each subsequent period until all consumers are sold to. So our total profit from this possibly sub-optimal scheme is at least

$$\Pi^D_{\mathcal{G}(N-[k],T-2)} + k \cdot  p = \Pi^D_{\mathcal{G}(N,T-1)}.$$

If $1 \leq l<k$, we know that $\Pi^D_{\mathcal{G}(N-[l],T-1)}$ is larger than the profit obtained by selling to $k-l$ consumers at price $p$ and then following the equilibrium for $\mathcal{G}(N-[k],T-2)$ given by (\ref{eq:characterization_prices}). So we obtain profit of

$$ l \cdot p + \Pi^D_{\mathcal{G}(N-[l],T-1)} \geq l \cdot p + (k-l) \cdot p + \Pi^D_{\mathcal{G}(N-[k],T-2)}  = \Pi^D_{\mathcal{G}(N,T-1)}.$$

Therefore under the optimal schedule, the profit is at least $\Pi^D_{\mathcal{G}(N,T-1)}$. Therefore there is no benefit to waiting a period before starting to sell. This proves the claim. \qed

\begin{lemma} [Lemma \ref{lem:pacman1}]
If $p_i=v_i$ for all $i \in [N]$, the following inequality holds for every natural number $\beta \geq 2$, $k=2,\hdots,\beta$ and all $i=1,\hdots,k-1$,
$$n_i \cdot w_i - n_i \cdot w_k - n_{\beta+i}w_{\beta+i} \geq 0.$$
\end{lemma}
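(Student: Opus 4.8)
The plan is to exploit the hypothesis $p_i = v_i$ for all $i$ through Lemma \ref{highest_value_lemma}, which says that the static monopoly price of \emph{every} subset $S \subseteq [N]$ equals its largest valuation. Applying this to carefully chosen two-value subsets produces exactly the comparisons we need, so the whole argument reduces to extracting two facts from that lemma and chaining them.

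First I would extract a ``gap inequality''. Fix distinct value indices $a < b$ (so $w_a > w_b$) and take the subset $S$ consisting of all $n_a$ consumers of value $w_a$ together with all $n_b$ consumers of value $w_b$. Its top value is $w_a$, so $p(S) = w_a$ by Lemma \ref{highest_value_lemma}. Since $p(S)$ is the profit-maximizing price, selling at $w_a$ (profit $n_a w_a$, as only the top group buys) must be at least as profitable as selling at $w_b$ (profit $(n_a + n_b) w_b$, as every consumer in $S$ then buys). Hence $n_a w_a \ge (n_a + n_b) w_b$, which rearranges to
$$n_a w_a - n_a w_b \ge n_b w_b.$$
Because $n_a w_b > 0$ whenever $a, b \le M$, this same inequality also shows that the products $n_j w_j$ are strictly decreasing on $j = 1,\dots,M$ (and vanish for $j > M$); in particular the sequence $n_j w_j$ is non-increasing in $j$ throughout, so $n_a w_a \ge n_b w_b$ for all $a < b$.

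With these two facts the target follows by a short chain. In the stated ranges $i \le k-1 < k$, so the gap inequality with $(a,b) = (i,k)$ gives $n_i w_i - n_i w_k \ge n_k w_k$. Moreover $i \ge 1$ forces $\beta + i > \beta \ge k$, i.e. $k < \beta + i$, so monotonicity of the products gives $n_k w_k \ge n_{\beta+i} w_{\beta+i}$. Concatenating,
$$n_i w_i - n_i w_k \ge n_k w_k \ge n_{\beta+i} w_{\beta+i},$$
which is exactly the claimed $n_i w_i - n_i w_k - n_{\beta+i} w_{\beta+i} \ge 0$.

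The only remaining care is with boundary cases where an index exceeds $M$, which I would dispose of at the end: if $\beta + i > M$ then $n_{\beta+i} w_{\beta+i} = 0$ and the claim reduces to $n_i(w_i - w_k) \ge 0$, true since $w_i \ge w_k$; and if $k > M$ (so $w_k = n_k = 0$) or $i > M$ the relevant terms vanish and the inequality holds trivially. I do not expect a genuine obstacle — the only real content is recognizing that $p_i = v_i$ for all $i$ simultaneously lower-bounds the gap $n_i w_i - n_i w_k$ by $n_k w_k$ and forces $n_j w_j$ to be monotone; once both are in hand the inequality is immediate from $k < \beta + i$.
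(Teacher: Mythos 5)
Your proof is correct, and it takes a mildly but genuinely different route from the paper's. Both arguments draw all their strength from the hypothesis $p_i=v_i$ applied to sub-markets, but they extract different intermediate inequalities. The paper works with suffix subgames: taking the last consumer of value $w_i$ (resp.\ $w_k$) it derives $w_i \ge 2w_k$ and $w_k \ge n_{\beta+i}\,w_{\beta+i}$, and then closes with a short algebraic computation, $n_i w_i - (n_i+1)w_k \ge n_i w_i - (n_i+1)\tfrac{w_i}{2} = \tfrac{(n_i-1)w_i}{2} \ge 0$, which uses $n_i\ge 1$. You instead invoke Lemma \ref{highest_value_lemma} on the non-contiguous two-value subsets consisting of all consumers of values $w_a$ and $w_b$, which gives the single inequality $n_a w_a \ge (n_a+n_b)w_b$; from it you read off both the gap bound $n_i w_i - n_i w_k \ge n_k w_k$ and the monotonicity of $j\mapsto n_j w_j$, and the target then follows by concatenation with no further algebra (using $k<\beta+i$). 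What your version buys is a cleaner finish and a reusable structural fact (the products $n_j w_j$ decrease); what the paper's version buys is that it never needs the full strength of Lemma \ref{highest_value_lemma} on arbitrary subsets, only on suffix subgames $\mathcal{G}_\ell$ --- though since the paper proves that lemma for arbitrary $S$ anyway, nothing is actually saved. Your handling of the degenerate indices ($k>M$ or $\beta+i>M$, where $n_j=w_j=0$) matches the paper's convention and is complete.
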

\begin{proof}
The statement is trivially true if $k>M$ as then $w_k=w_{\beta+i}=0$.  Thus, we may assume that
$1\le i <k \le M$.
First we show that $w_i\ge 2w_k$. By assumption, $p_i=v_i$. Hence
\begin{equation}
1\cdot w_i \ge (1+n_{i+1} + \hdots + n_k) \cdot 2\cdot w_k  \label{lemma_pacman_first_inequality}
\end{equation}
Similarly
\begin{equation}
w_k \ge (1+n_{k+1} + n_{k+2} + \hdots + n_{\beta+i}) \cdot w_{\beta+i} \ge n_{\beta+i} \cdot w_{\beta+i}  \label{lemma_pacman_second_inequality}
\end{equation}

Combining (\ref{lemma_pacman_first_inequality}) and (\ref{lemma_pacman_second_inequality}) we have
\begin{eqnarray*}\label{lemma_pacman_3}
n_i \cdot w_i - n_i \cdot w_k - n_{\beta+i}w_{\beta+i} & \geq &  n_i \cdot w_i - (n_i +1)\cdot w_k\\
&\ge& n_i \cdot w_i - (n_i +1)\cdot \frac{w_i}{2} \\
&\ge& 0
\end{eqnarray*}
as desired.
\end{proof}

\section*{Appendix 2: Incomplete Information.}
In this appendix, we introduce a restrictive incomplete information setting and show that the SPNE characterization obtained in Section \ref{sec:opt-cond} for the complete information setting also applies here.

Consider the setting in which the market participants can see who buys in period $t$, and know the distribution of values (and know their own value), but do not know exactly which consumer has which value.\footnote{Note this is equivalent to the participants knowing which consumer has which value, but not seeing who buys, only the total number of sales in each period.} Since we are interested in studying equilibria that satisfy the skimming property, regardless of the values of consumers who bought at period $t$, the duropolist off-path belief is that the $k$ consumers who bought in period $t$ are those with the $k$ highest valuations (among those remaining). We will show that the same conditions as in Section \ref{sec:opt-cond} characterizing the subgame perfect equilibria apply.

We first define the strategy of the duropolist and the consumers in any subgame under this incomplete information setting. Let $\mathcal{G}_{S}$ denote the subgame at period $t$ where the $|S|$ remaining consumers have valuations $w_1 \geq w_2 \hdots \geq w_{|S|}$. Due to the off-path belief (i.e., the belief that consumers follow the skimming property), the duropolist would behave as if it were in the market $\mathcal{G}'_{1}$ with $T-t+1$ periods in which the consumers are $v'_1 \geq v'_2 \geq \hdots, v'_{|S|}$ where $v'_i = v_{i+N -|S|}$. Observe that $v'_i \leq w_i$ for all $i \in [|S|]$. The monopolist strategy is to then announce the price $p_{\mathcal{G'}}^*(1,1)$ which is obtained by solving the recursion relationship (\ref{eq:characterization_prices}). The consumers strategy remains the same as in the complete information setting, i.e. each of them would buy if and only if the price is less than or equal to their threat price as calculated for $\mathcal{G'}$. We now prove the following result.

\begin{theorem} The strategies defined above constitute a SPNE in the incomplete information setting.\end{theorem}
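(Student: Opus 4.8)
The plan is to verify sequential rationality and belief consistency by reducing the profile, \emph{on the equilibrium path}, to the complete-information equilibrium already established in the preceding theorem, and then ruling out off-path deviations under the specified (skimming-based) beliefs. First I would observe that, because the prescribed consumer strategies obey the skimming property, on the equilibrium path the highest-valued remaining consumers always buy first; hence after any number $k$ of total sales the true set of remaining consumers coincides exactly with the duropolist's point belief, namely the $N-k$ lowest-valued consumers $\{v_{k+1},\dots,v_N\}$. Consequently the on-path play, prices, and profit are identical to those of the complete-information equilibrium, and beliefs are Bayes-consistent on path. It then remains only to rule out profitable one-shot deviations given the off-path beliefs.

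For the duropolist, sequential rationality is immediate: at every history she holds the point belief that the remaining consumers are the lowest-valued ones consistent with the observed number of sales, so she faces precisely the complete-information pricing problem on the believed game $\mathcal{G}'$. Her prescribed price $p_{\mathcal{G}'}(1,1)$ is therefore optimal for that game by Corollary \ref{cor:uniquemax}, and she has no profitable deviation given her belief.

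The heart of the argument is the consumers' incentives, and here the incomplete-information structure actually simplifies matters relative to the complete-information proof. I would first treat a single consumer $x \le j^*(1,1)$ who deviates by declining to buy in the first period while everyone else follows their strategy. Since decisions within a period are simultaneous and the duropolist conditions only on the \emph{number} of sales, she observes one fewer sale and hence believes the remaining set to be $\{v_{j^*(1,1)},\dots,v_N\}$ rather than $\{v_{j^*(1,1)+1},\dots,v_N\}$; she does \emph{not} observe $x$'s higher value. By the recursion \eqref{eq:characterization_prices}, the original first-period price satisfies $\mu_1 = p(1,1) = p(j^*(1,1),2)$, which is exactly the price she now charges, so the deviator is offered the very same price $\mu_1$ in period $2$. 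By Lemma \ref{lem:threatmono} the top consumer of any believed subgame buys at its first-period price, and $x$---having the highest value among those remaining---has threat price at least that of the believed top consumer; hence $x$ buys at $\mu_1$ and gains nothing. Telescoping the recursion (each first-period price equals the next period's first-period price along the induced sales path) shows the offered price remains $\mu_1$ for as long as $x$ keeps delaying, so no multi-period delay is profitable.

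Finally I would dispatch early purchases and assemble the pieces. A consumer who buys before her prescribed period pays the current price, which on the equilibrium path is non-increasing in time by Lemma \ref{lem:EqPriceDec} (a property inherited directly, since on-path play coincides with the complete-information equilibrium); she therefore weakly prefers to wait. Together with the no-delay argument and the duropolist's optimality given beliefs, this rules out all one-shot deviations and yields the claimed SPNE. I expect the main obstacle to be the bookkeeping around the off-path beliefs: one must argue carefully that the duropolist's \emph{pessimistic} mis-identification of the delayer as the lowest admissible type simultaneously (i) pins her continuation price to the recursion value $p(j^*(1,1),2) = \mu_1$, so the delay yields no discount, and (ii) still leaves the delayer's threat price above that price, so she remains willing to buy---reconciling the duropolist's incorrect belief with the consumer's genuine incentive is the delicate point.
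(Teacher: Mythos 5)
Your proposal is correct and follows essentially the same route as the paper's proof: both reduce the duropolist's problem to the complete-information recursion on the believed (skimming-consistent) subgame, observe that a delaying consumer is re-offered exactly $p(j^*(1,t),t+1)=p(1,t)$ because the duropolist conditions only on the number of sales, telescope this identity to rule out multi-period delays, and invoke Lemma \ref{lem:EqPriceDec} against early purchases. The one point you flag as delicate (that the mis-identified deviator still faces no price improvement) is precisely the observation the paper makes, so no gap remains.
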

\begin{proof}
We consider the subgame $\mathcal{G'} = \mathcal{G}_{S}$ (of the original game $\mathcal{G}$) that begins at period $t$ in which the remaining consumers consists of the set $S$. These consumers have valuations $w_1 \geq w_2 \hdots \geq w_{|S|}$. Due to the off-path belief (i.e., the belief that consumers follow the skimming property), the duropolist would behave as if it were in the market $\mathcal{G}'_{1}$ with $T-t+1$ periods in which the consumers are $v'_1 \geq v'_2 \geq \hdots, v'_{|S|}$ where $v'_i = v_{i+N -|S|}$. Observe that in this market $\mathcal{G}'_1$, consumer $i$'s real value is actually $w_i \geq v'_i$.

The announced price in $\mathcal{G}'_1$ would then be $p(1,t)= p(j^*(1, t), t+1)$. Suppose now that some consumer $x$ that was supposed to buy under the proposed equilibrium, i.e. $i \leq x \leq j^*(1, t)$ deviates and chooses not to buy at time $t$. The number of sales at period $t$ would then be $j^*(1, t)-1$, i.e., one less than the expected. The duropolist, who observes the total number of sales and assumes consumers follow the skimming property, would then behave as if the remaining subgame starting at $t+1$ is $\mathcal{G}'_{j^*(1, t)}$. This means that the announced price would be $p(j^*(1, t),t+1)$ and consumer $x$ would not have benefited from delaying the purchase by one period. One may, again, wonder whether consumer $x$ could benefit from delaying the purchase by more than one period. But this is not possible since the price at period $t+1$ in this subgame is $p(j^*(1, t),t+1) = p(j^*(j^*(1, t),t+1),t+2)$ , which means the price will remain constant over time, as long as the number of transactions is one less than the expected. By repeated use of this argument we conclude that, at equilibrium, no consumer would benefit from delaying its purchase.

Lastly, observe that no consumer can benefit from buying earlier. If a consumer deviates from the equilibrium path by buying earlier, she pays a price $p^*(1,t)$ when she could have bought in period $t'>t$ at price $p^*(k,t')$ for some $k\geq 1$. But since prices are non-increasing as a function of time along the proposed sales path (Lemma \ref{lem:EqPriceDec}), she cannot do any better.

So we conclude that we have a strategy profile which is an equilibrium in every subgame.
\end{proof}

Note that since the equilibrium path is the same in both our complete and incomplete information setting, all our results also apply to the incomplete information setting.

  \end{document}